\pdfoutput=1
\def\AVCOMBINED{1}
\documentclass[ejsv2,noshowframe,preprint]{imsart}

\newif\ifavmain
\newif\ifavsupp
\newif\ifavbibdone \avbibdonefalse
\ifdefined\AVCOMBINED
  \avmaintrue
  \avsupptrue
\else\ifdefined\AVSUPPLEMENTONLY
  \avmainfalse
  \avsupptrue
\else
  \avmaintrue
  \avsuppfalse
\fi\fi

\RequirePackage[authoryear,round]{natbib}
\RequirePackage[colorlinks,citecolor=blue,linkcolor=blue,urlcolor=blue]{hyperref}
\RequirePackage{graphicx}
\RequirePackage{mathtools}
\RequirePackage{booktabs}
\RequirePackage{array}
\RequirePackage{algorithm}
\RequirePackage{algpseudocode}
\RequirePackage{placeins}
\RequirePackage{caption}
\RequirePackage{microtype}
\RequirePackage{xcolor}
\RequirePackage{url}

\IfFileExists{main.tex}{
  
}{
  
}

\startlocaldefs

\theoremstyle{plain}
\newtheorem{theorem}{Theorem}[section]

\newtheorem{corollary}[theorem]{Corollary}
\newtheorem{lemma}[theorem]{Lemma}
\newtheorem{proposition}[theorem]{Proposition}
\theoremstyle{definition}
\newtheorem{definition}[theorem]{Definition}
\newtheorem{assumption}{Assumption}[section]
\theoremstyle{remark}
\newtheorem{remark}[theorem]{Remark}

\allowdisplaybreaks
\DeclareMathOperator{\KL}{KL}
\DeclareMathOperator*{\argmin}{arg\,min}
\DeclareMathOperator*{\argmax}{arg\,max}
\newcommand{\E}{\mathbb{E}}
\renewcommand{\Pr}{\mathbb{P}}
\newcommand{\F}{\mathcal{F}}
\newcommand{\hashid}[1]{{\ttfamily\footnotesize #1}}
\newcommand{\avmainnumset}[2]{\expandafter\def\csname avmainnum@#1\endcsname{#2}}
\avmainnumset{sec:model}{2.1}
\avmainnumset{sec:setup}{2.2}
\avmainnumset{sec:sota}{2.3}
\avmainnumset{sec:method}{3}
\avmainnumset{sec:procedure}{3.1}
\avmainnumset{sec:object}{3.2}
\avmainnumset{sec:calibration}{3.3}
\avmainnumset{sec:closure}{3.4}
\avmainnumset{sec:theory}{4}
\avmainnumset{sec:validity}{4.1}
\avmainnumset{sec:sufficiency}{4.2}
\avmainnumset{sec:opt}{4.3}
\avmainnumset{sec:regret}{4.4}
\avmainnumset{sec:ceiling}{4.5}
\avmainnumset{sec:boundaryshape}{4.6}
\avmainnumset{sec:phase}{5}
\avmainnumset{sec:corners}{5.1}
\avmainnumset{sec:ebonf}{5.2}
\avmainnumset{sec:dpvalidation}{5.3}
\avmainnumset{sec:sims}{6}
\avmainnumset{sec:apps}{7}
\avmainnumset{sec:disc}{8}
\avmainnumset{ass:exch}{2.1}
\avmainnumset{ass:indep}{2.2}
\avmainnumset{rem:degenerate}{2.1}
\avmainnumset{def:bstar}{3.1}
\avmainnumset{prop:bstarvalid}{4.1}
\avmainnumset{prop:fwer}{4.2}
\avmainnumset{lem:sufficient}{4.3}
\avmainnumset{prop:espcoords}{4.4}
\avmainnumset{thm:bayes}{4.5}
\avmainnumset{cor:esp}{4.6}
\avmainnumset{cor:corners}{4.7}
\avmainnumset{thm:grow}{4.8}
\avmainnumset{prop:growmean}{4.9}
\avmainnumset{thm:regret}{4.10}
\avmainnumset{prop:twosided}{4.12}
\avmainnumset{cor:rate}{4.13}
\avmainnumset{lem:ceiling}{4.14}
\avmainnumset{cor:penalty}{4.15}
\avmainnumset{prop:sandwich}{4.16}
\avmainnumset{cor:np}{4.17}
\avmainnumset{thm:levelset}{4.18}
\avmainnumset{eq:fwerN}{1}
\avmainnumset{eq:costobj}{2}
\avmainnumset{eq:procstat}{3}
\avmainnumset{eq:mixmart}{4}
\avmainnumset{eq:lr_identity}{6}
\avmainnumset{eq:growinghorizon}{7}
\avmainnumset{eq:waldreduction}{8}
\avmainnumset{eq:bfeasible}{5}
\avmainnumset{eq:dpbracket}{9}
\newcommand{\mainnum}[1]{\ifavmain\ref{#1}\else\csname avmainnum@#1\endcsname\fi}
\newcommand{\mainref}[2]{\ifavmain #1~\ref{#2}\else #1~\mainnum{#2}\fi}
\newcommand{\suppref}[3]{\ifavsupp #1~\ref{#2}\else #3\fi}
\ifdefined\AVBLIND
  \newcommand{\auditdetail}[2]{#2}
\else
  \newcommand{\auditdetail}[2]{#1}
\fi
\newcolumntype{L}[1]{>{\raggedright\arraybackslash}p{#1}}
\newcommand{\spacingset}[1]{}

\newcommand{\papertitle}{Pooling Sequential Evidence Across Hypotheses:
Rate-Optimal Multiple Testing at a Fixed Horizon}
\newcommand{\runningtitle}{Pooling Sequential Evidence Across Hypotheses}

\endlocaldefs

\begin{document}

\ifavmain
\begin{frontmatter}
\title{\papertitle}
\runtitle{\runningtitle}

\ifdefined\AVBLIND
\begin{aug}
\author{\fnms{}~\snm{Anonymous}}
\end{aug}
\else
\begin{aug}
\author[A]{\fnms{Prasanjit}~\snm{Dubey}\ead[label=e1]{pdubey31@gatech.edu}}
\and
\author[A]{\fnms{Xiaoming}~\snm{Huo}\ead[label=e2]{huo@gatech.edu}}
\address[A]{H.~Milton Stewart School of Industrial and Systems Engineering,
Georgia Institute of Technology,
\printead{e1,e2}}
\end{aug}
\fi

\begin{abstract}
We study sequential testing of a fixed family of hypotheses when observations are costly and a
sampling horizon is specified in advance. The challenge is to pool evidence for earlier decisions
when the number and identities of false hypotheses are unknown, while controlling the probability
of any false rejection at level $\alpha$. Existing merges attain the pooled growth rate at a
single number of false hypotheses: averaging when one is false, multiplying when all are. Under an
independent-stream model with common simple null and alternative distributions, we test each
intersection---the claim that all its hypotheses are true---with a prior-weighted mixture of
products of marginal likelihood ratios. Closed testing combines these elementary-symmetric-polynomial
mixtures to identify individual false hypotheses. Design-specific boundary calibration gives
finite-horizon family-wise error control, with exact finite-state guarantees or a confidence
qualification for Monte Carlo calibration.
The prior-matched mixture uniquely maximizes expected log evidence at each horizon. Mixtures
assigning positive weight to every nonempty subset of streams attain log-growth rate $lD$
when $l$ streams follow the alternative. Here $D$ is the mean log likelihood ratio per alternative
observation, and a round supplies one observation per stream. This rate attains the first-order
intersection-delay lower bound as $\alpha\downarrow0$, with dimension, configuration and weights
fixed and the horizon growing sufficiently fast. Power for an individual hypothesis cannot exceed
the best single-stream power at a given deadline. For these full-support mixtures, uncapped full
closure at Ville's threshold $1/\alpha$ places each false hypothesis's rejection time between single-stream
crossing times, almost surely for all sufficiently small $\alpha$; the bounds coincide when all
hypotheses are false.
Gaussian and basket-trial simulations and language-model and randomized-advertising replays
illustrate these distinct intersection and individual benefits. The primary basket boundaries are
$40$--$52\%$ below $1/\alpha$. Across $41$ simulated configurations with identifiable
first-correct-rejection costs, the median reduction in capped mean patient outcomes relative to
prespecified interim-look Bonferroni tests is $31\%$.
\end{abstract}

\begin{keyword}
\kwd{closed testing}
\kwd{elementary symmetric polynomial}
\kwd{growth-rate optimality}
\kwd{multiple hypothesis testing}
\kwd{optional stopping}
\end{keyword}

\end{frontmatter}

\section{Introduction}

Sequential studies often monitor a fixed family of hypotheses within a prespecified observation
budget. Basket trials cap enrollment across disease cohorts~\citep{flaherty2020,odwyer2023};
language-model evaluations incur substantial scoring costs on finite benchmark
slices~\citep{evaleval2026,hsushekhar2026}; and stratified randomized experiments may reach decisions
before processing their full holdouts. These settings motivate procedures that control family-wise
error while making useful claims early.

Existing merges are tuned in advance for a particular number of false hypotheses. Averaging the
streams' evidence~\citep{vovk2021,hartoglei2025} attains the pooled growth rate only when one
hypothesis is false, and multiplying it~\citep{fischerramdas2024} only when all are, while testing
each stream separately forgoes pooling and pays a multiplicity penalty in its threshold. A
threshold valid over an unlimited horizon also leaves error budget unspent when the horizon is
fixed in advance. The signal-seeking basket design followed in \S\ref{sec:basket} makes these
costs concrete: ten disease cohorts, each capped at 31 analyzable patients and each testing a null
response rate of $0.05$ against a working alternative of $0.25$ at level $0.05$. Which merge
serves best there depends on how many cohorts respond, and that is what the trial is run to
discover. The threshold matters as much: because one response multiplies a cohort's likelihood
ratio by five, evidence jumps past an unlimited-horizon threshold rather than approaching it, so
monitoring at that threshold used only between a third and two fifths of the nominal error
budget in the simulations reported there.

We test $K$ null hypotheses $H_1,\dots,H_K$ using independent streams with common null density
$f_0$ and alternative density $f_A$. The unknown configuration $c\subseteq\{1,\dots,K\}$
identifies the false hypotheses, whose streams we call \emph{active}. Each monitoring round supplies one observation from every stream under observation,
up to a horizon $N$ fixed before data collection. A procedure assigns irrevocable rejection times
$\tau_k$ and controls the probability of rejecting any true null through $N$ by $\alpha$ under
every configuration. This is strong family-wise error rate (FWER) control. Rejecting $H_k$
permits discontinuing its stream, saving $(N-\tau_k)^+$ observations, where $x^+=\max(x,0)$;
the cost of a joint claim is instead the number of observations it takes to make.
Deadline power alone does not reward early rejection, since postponing a rejection until $N$
preserves that event (Remark~\ref{rem:degenerate}). Two obstacles stand between this specification
and a usable procedure. The merge must attain the pooled growth rate at whatever number of streams
is active, since that number is unknown when the merge is chosen; and the boundary must hold at
the horizon $N$ and under the sampling schedules the design permits, not only in the
unlimited-horizon limit. Our goal is one procedure that meets both requirements, rather
than a separate construction for each.

Our procedure pools evidence for each group of hypotheses before using closed testing to identify
individual false hypotheses. We call it \emph{PEM closure}, for its prior-weighted
elementary-symmetric-polynomial (ESP) mixture. For every nonempty intersection
$H_S=\bigcap_{k\in S}H_k$, the null is that all hypotheses in $S$ are true.
For $m=|S|$, fix a prior $\beta_l^{(m)}$, $l=1,\dots,m$, on the number of false
hypotheses in $S$ and distribute its mass uniformly over subsets of that size. If $X_{k,s}$ is
observation $s$ in stream $k$, the marginal likelihood ratios are $L_{k,t}=\prod_{s\le t}f_A(X_{k,s})/f_0(X_{k,s})$, and the resulting mixture is
\[
E_{S,t}=\sum_{l=1}^{m}w_l^{(m)}
e_l\bigl(\{L_{k,t}\}_{k\in S}\bigr),
\qquad w_l^{(m)}=\beta_l^{(m)}/\tbinom ml.
\]
Here $e_l$ sums products over the $l$-element subsets of $S$; for three inputs $x=(x_1,x_2,x_3)$,
$e_1(x)=x_1+x_2+x_3$, $e_2(x)=x_1x_2+x_1x_3+x_2x_3$, and $e_3(x)=x_1x_2x_3$.
We call the mixture the PEM statistic. The geometric default
$w_l^{(m)}=(2^{1/m}-1)^l$ has \emph{full support}: it assigns positive mass to every nonempty configuration.
Before observing data, the operator selects a boundary $b_m$ certified for the intersection's
null crossing law, horizon and sampling schedule. Intersections are permanently marked when
their statistics cross their boundaries; $H_k$ is rejected once every intersection containing $k$
is marked. Ville's inequality~\citep{ville1939} supplies the generic threshold $1/\alpha$, which controls null
crossings over an unlimited monitoring horizon. Exact finite-state calibration can lower that
threshold for a specified finite horizon; Monte Carlo calibration carries an explicit confidence
qualification. Section~\ref{sec:method} gives both
routes, the Ville fallback, literal closure (evaluating every intersection), and a conservative
shortcut for larger families.

Two levels of claim must be distinguished. Rejecting $H_S$ establishes that at least one hypothesis
in $S$ is false. Rejecting an \emph{elementary} hypothesis $H_k$, an individual one rather than an
intersection, identifies which hypothesis is false and requires all its containing intersections to
have been rejected. Write $D=\KL(f_A\|f_0)$ for the
Kullback--Leibler divergence, the mean log likelihood ratio per observation from an active stream.
Pooling $l$ active streams can increase the log-growth rate
for an intersection from $D$ to $lD$. It cannot give an elementary hypothesis more information
about its own truth than its stream supplies. The distinction explains both the gains and the
limits of closure.

The main results make these statements precise.
\begin{enumerate}
\item \textbf{Validity and the state of the experiment.}
Feasible local boundaries give strong FWER control through $N$; Ville boundaries give anytime
control, with no correction over intersections or monitoring rounds
(Propositions~\ref{prop:bstarvalid} and~\ref{prop:fwer}). The labeled marginal likelihood ratios
are minimal sufficient at each fixed time, and their ESPs generate the permutation-invariant
information (Lemma~\ref{lem:sufficient}, Proposition~\ref{prop:espcoords}). Sequential closure
additionally needs permanent intersection marks.

\item \textbf{Choice of mixture.}
The prior-matched mixture uniquely maximizes expected log evidence at each horizon, up to null
sets (Theorem~\ref{thm:bayes}); its limiting choices, the arithmetic mean and product, correspond to singleton and
all-alternatives priors (Corollary~\ref{cor:corners}). Without a specified prior, the arithmetic
mean maximizes the worst-case expected log evidence over configurations
(Theorem~\ref{thm:grow}, Proposition~\ref{prop:growmean}). These expected-log
criteria do not order rejection times universally.

\item \textbf{Local rejection rates.}
For fixed family size, configuration and full-support weights, the PEM statistic has log-growth
rate $lD$ and attains the lower bound $\log(1/\alpha)/(lD)$ on first-order local delay as
$\alpha\downarrow0$, under uncapped monitoring or a horizon growing fast enough to contain the
crossing (Proposition~\ref{prop:twosided}, Corollary~\ref{cor:rate}). Unlike the corners
(Table~\ref{tab:streams}), a full-support mixture pays only a constant pathwise log-evidence loss
against every configuration likelihood ratio (Theorem~\ref{thm:regret}).

\item \textbf{Elementary rejection times.}
Any FWER-controlling procedure faces a single-stream level-$\alpha$ power envelope, defined
separately at each deadline (Lemma~\ref{lem:ceiling}); one sequential rule need not attain every
envelope. For literal full-support PEM closure at Ville's boundary, as $\alpha\downarrow0$ an
eventual pathwise sandwich bounds elementary rejection times between singleton crossings at
thresholds a fixed log distance apart. The sandwich collapses under dense alternatives,
removing the multiplicity cost retained by $e$-Bonferroni, which tests each stream at threshold
$K/\alpha$ (Corollary~\ref{cor:penalty}). This does
not assert finite-level pathwise dominance or cover horizon-calibrated boundaries.

\item \textbf{Use of the horizon.}
Terminal mixture thresholding, randomized when necessary, is Neyman--Pearson optimal against the prior mixture
(Corollary~\ref{cor:np}). For early rejection, Theorem~\ref{thm:levelset} gives a time-dependent
boundary in the one-stream Lagrangian problem. Numerical primal and dual bounds assess the
deployed flat rule on tractable local problems (\S\ref{sec:dpvalidation}), without establishing
optimality of the full closed-testing policy.
\end{enumerate}

At the primary geometric prior, the calibrated basket boundaries are $40$--$52\%$ below
$1/\alpha$. Across $41$ simulated basket configurations with identifiable first-correct-rejection
costs, the median reduction in capped mean patient outcomes relative to the prespecified
interim-look Bonferroni rule is $31\%$. These are design-specific results under the certified
horizons and schedules. Language-model and advertising replays illustrate complementary evidence
patterns (\S\ref{sec:apps}, Table~\ref{tab:appsummary}); their true null configurations are unknown.

The construction combines established ingredients: closed testing, ESP merging, and
likelihood-ratio log optimality. Its contribution is to connect the configuration prior, local
evidence growth, elementary rejection limits and design-specific horizon calibration within one
procedure. Section~\ref{sec:sota} and Table~\ref{tab:streams} position it against the sequential
comparators.

Sections~\ref{sec:problem}--\ref{sec:phase} develop the model, procedure, theory and comparisons;
\S\S\ref{sec:sims}--\ref{sec:disc} give the empirical studies and discussion. The supplement
contains the exact algorithm, the complete proofs and supporting derivations, additional
simulations, the finite-population construction, and the three application audits. Below, an
appendix letter or an S-prefixed number refers to that supplement.

\section{Problem formulation}\label{sec:problem}

\subsection{Data, hypotheses, and marginal evidence}\label{sec:model}

The data are $K$ independent sequential streams, and the unknown is which of them follow the
alternative.  We test $K$ hypotheses $H_1,\dots,H_K$; write
$[K]=\{1,\dots,K\}$. For each $k$, observations
$X_{k,1},X_{k,2},\dots$ arrive sequentially. Let
$\F_t=\sigma\{X_{k,s}:k\in[K],\,s\le t\}$ be the full filtration, the information available
from all streams through time $t$, and, for every
$S\subseteq[K]$, let $\F_t^S=\sigma\{X_{k,s}:k\in S,\,s\le t\}$ be the local filtration. The unknown configuration is
$\vec h=(h_1,\dots,h_K)\in\{0,1\}^K$, with $h_k=1$ iff $H_k$ is false; we identify it with the set
$c=\{k\in[K]:h_k=1\}$ of alternatives and write $l_c=|c|$ for its size, abbreviated $l$ when
unambiguous. Under $h_k=0$ the stream is
i.i.d.\ $f_0$ (null); under $h_k=1$ it is i.i.d.\ $f_A$ (alternative).

The baseline theory uses two assumptions: relabeling the streams relabels their joint law, and
the streams are independent conditional on their null or alternative status.

\begin{assumption}[Exchangeability]\label{ass:exch}
The tests share a common null $f_0$ and common alternative $f_A$. Write $X=(X_{k,s})$ for the
full data array and $\mathcal L_{\vec h}(X)$ for its joint law under configuration $\vec h$.  For a
permutation $\rho$ of $[K]$, define $(\rho\vec h)_k=h_{\rho^{-1}(k)}$ and
$(\rho X)_{k,s}=X_{\rho^{-1}(k),s}$. Then
$\mathcal L_{\vec h}(\rho X)=\mathcal L_{\rho\vec h}(X)$.
\end{assumption}
\begin{assumption}[Conditional independence]\label{ass:indep}
Given $\vec h$, the $K$ streams are independent.
\end{assumption}

Assumption~\ref{ass:exch} is the $\vec h$-exchangeability condition
of~\citet[Assumption~2.1]{rosset2022}, in the form used by~\citet{dubeyhuo2026};
Assumption~\ref{ass:indep} is the baseline we relax in \S\ref{sec:disc}.

The marginal evidence for each stream is its accumulated likelihood ratio.  The per-step likelihood
ratio and accumulated likelihood ratio are
\[
r_{k,s}=\frac{f_A(X_{k,s})}{f_0(X_{k,s})},\qquad
L_{k,t}=\prod_{s\le t} r_{k,s}.
\]
The empty product gives $L_{k,0}=1$; values above one favor the alternative over the null.
Each $L_{k,t}$ is a
nonnegative $\F_t^{\{k\}}$-martingale with $\E_{f_0}[L_{k,t}]=1$ when $H_k$ is true. Under
Assumption~\ref{ass:indep}, it is also an $\F_t$-martingale under every fixed configuration in which
$H_k$ is null. The per-step variables $r_{k,s}$ are i.i.d.\ mean-one under $f_0$, not themselves a
time-indexed martingale. Throughout, both divergences $D=\KL(f_A\|f_0)$ and
$\bar D=\KL(f_0\|f_A)$ are finite and strictly positive.  Under $f_A$ the log likelihood ratio
$\log L_{k,t}$ is a random walk with drift $D$ per observation; under $f_0$ its drift is $-\bar D$.
These two drifts govern every rate statement in the paper.

The validity concept throughout is the e-process, whose expectation at any stopping time is at
most one under the null.  A composite null hypothesis is identified with the set of probability
laws of the data that it allows, and we call that set a \emph{null family}; for the intersection
hypothesis $H_S$ the null family consists of the law of the data under every configuration
$\vec h$ with $h_k=0$ for all $k\in S$, whatever the streams outside $S$ do.  An \emph{e-process} for a null family $H$ is a
nonnegative adapted process $(E_t)$ satisfying $\E_P[E_\tau]\le1$ for every law $P\in H$ and every
bounded stopping time $\tau$.
By truncation and Fatou's lemma, the same inequality then holds for every almost surely finite
stopping time; no $E_\infty$ need be defined. Its value at any fixed time is an e-value. Every
nonnegative supermartingale starting at a value at most one is an e-process. Under the baseline
model, all intersection e-processes used by PEM closure (\S\ref{sec:method}) are test
martingales: nonnegative martingales starting at one.

\subsection{Error criterion and objective}\label{sec:setup}

The procedure must control the family-wise error rate through the prespecified horizon.  Fix a level $0<\alpha<1$ and a prespecified horizon $N$, the largest number of observations any
stream contributes, so that every decision is made at some time $t\le N$.  The streams are
monitored on the common clock of \S\ref{sec:model}, and at each time $t$ the procedure decides from
$\F_t$, the data of all $K$ streams, which hypotheses to reject.  A sequential policy is a vector
$\boldsymbol\tau=(\tau_1,\dots,\tau_K)$ of $(\F_t)$-stopping times at which the hypotheses are
rejected, irrevocably; different hypotheses may be rejected at different times, and once $H_k$ is
rejected, PEM closure no longer needs its stream for later decisions and permits discontinuation.  Write
$\mathcal R_t=\{k:\tau_k\le t\}$ for the set of hypotheses rejected by time $t$ and
$V_t=|\{k\in\mathcal R_t:h_k=0\}|$ for the number of true nulls among them.  Write $P_{\vec h}$ for
the law of the data under configuration $\vec h$, and $\Pr_{\vec h}$ and $\E_{\vec h}$ for
probability and expectation under it.  We require, for every configuration,
\begin{equation}\label{eq:fwerN}
\mathrm{FWER}_N(\boldsymbol\tau)=\Pr_{\vec h}\!\Big(\exists\,t\le N: V_t>0\Big)\le\alpha .
\end{equation}
Every application in this paper fixes $N$ in advance: a benchmark is a finite population, a basket
protocol caps accrual, a holdout is exhausted.  Control over an unbounded horizon is a strictly
stronger requirement, and \eqref{eq:fwerN} follows from it, so nothing below is lost by adopting the
bounded form; \S\ref{sec:calibration} shows what is gained.  Two sources of multiplicity are in
play, the $K$ hypotheses and the $N$ analysis times, and PEM closure handles them by two
separate devices: closed testing for the hypotheses and a maximal inequality for the analysis times.

Among policies satisfying~\eqref{eq:fwerN}, we seek to reject as early as the evidence allows.
For one claim, write $\sigma$ for its rejection time: $\tau_k$ for the elementary claim $H_k$,
or the rejection time of an intersection $H_S$. Let $\Sigma$ be the stopping times based on that
claim's own observations together with an independent randomization; those observations are
stream $k$ for $H_k$, and the streams in $S$ for $H_S$. Include $\sigma=\infty$, meaning never reject, with reward
$(N-\sigma)^+=0$. Write $\Pr_0$ for the claim's null law: $f_0$ on stream $k$ or the all-null
law on $S$.

A design prior specifies the alternatives against which observation savings are evaluated.
For an intersection, fix prior probabilities $\omega_c$ over nonempty active subsets $c\subseteq S$.
The mixture law $\bar Q_\omega$ first selects $c$ with probability $\omega_c$, then generates the
streams in $S$ under that configuration; its likelihood ratio is given in~\eqref{eq:lr_identity}.
For an elementary claim, this reduces to $f_A$ on its single stream. We use the local benchmark
\begin{equation}\label{eq:costobj}
\max_{\sigma\in\Sigma}\ \E_{\bar Q_\omega}\big[(N-\sigma)^+\big]
\qquad\text{subject to}\qquad \Pr_0(\sigma\le N)\le\alpha .
\end{equation}
The constraint is~\eqref{eq:fwerN} restricted to one claim. For an elementary rejection at
$\sigma$, the reward is the number of observations saved by discontinuing that stream. For an
intersection, it measures rounds saved in a local, prior-mixture problem.

This local benchmark does not optimize the coupled elementary decisions under closure.
An intersection rejection licenses no stream discontinuation by itself, so the objective is not
the total saving $\sum_k(N-\tau_k)^+$ from a vector policy $(\tau_1,\dots,\tau_K)$.
We do not solve~\eqref{eq:costobj} in closed form or optimize it directly. Section~\ref{sec:boundaryshape}
characterizes the single-stream boundary shape, and \S\ref{sec:dpvalidation} uses dynamic
programming to bracket the difference between the deployed flat local rule and the optimum on
tractable local problems.

\begin{remark}[Power alone is degenerate]\label{rem:degenerate}
One might instead judge a rule by the probability that it rejects a false claim by the deadline,
that is, by its power at $N$.  This criterion never rewards stopping early.  Whether a rule has
rejected by $N$ is determined by the data up to $N$, so any rule that stops early can be replaced
by one that waits until $N$ and then rejects exactly on the paths where the first rule would have
rejected.  The replacement has the same power and the same level, and it uses all $N$
observations.  The largest power at $N$ is therefore attained by the fixed-sample Neyman--Pearson
test at $N$, which never stops early.  Early stopping must be justified by a cost of observation,
which is why the criterion is~\eqref{eq:costobj} rather than power.
\end{remark}

PEM closure makes claims at
two levels, and the $K$ streams help differently at each.  An \emph{elementary} claim rejects a
single $H_k$.  An \emph{intersection} claim rejects $H_S=\bigcap_{k\in S}H_k$, the hypothesis that
every stream in $S$ is null; its rejection asserts that at least one stream in $S$ is active, and it
is the claim that closed testing (\S\ref{sec:sota}) tests directly.  For an intersection claim the
streams in $S$ are pooled, and pooling $l$ active streams raises the exponential rate of the evidence
from $D$ to $lD$, which no valid procedure exceeds (Proposition~\ref{prop:twosided}).  For an
elementary claim the deadline-specific power envelope is attainable using the stream alone, and pooling can only
remove the multiplicity penalty (Lemma~\ref{lem:ceiling}, Corollary~\ref{cor:penalty}).

\subsection{Existing approaches and open questions}\label{sec:sota}

Existing work supplies three building blocks: closed testing for multiple hypotheses, rules for
combining evidence, and a threshold valid over time.
Closed testing~\citep{marcus1976} rejects $H_k$ once every intersection $H_S$ containing it has
been rejected. Valid local e-processes make this reduction anytime valid, that is, valid at every
stopping time; the framework is generic,
with efficient shortcuts for particular merges~\citep{fischerramdas2024,hartoglei2025}. At a fixed
time, arithmetic-mean closure is called $e$-Holm. It contains classical Holm applied to reciprocal
e-values because each arithmetic-mean local test dominates the corresponding Bonferroni test;
the procedures need not agree. We study a fixed family observed repeatedly, unlike the online
arrival of hypotheses also treated by~\citet{fischerramdas2024}. Order-invariant formulations
require running suprema of e-processes~\citep{tavyrikov2025}.

The local merges combine evidence using the normalized ESPs
$\tilde e_l=e_l/\binom{m}{l}$ of~\citet[Section~4, equation~(8)]{vovk2021} and their
convex mixtures. The arithmetic mean ($l=1$) and product ($l=m$) are the two \emph{corners}
used as comparators below. Sequential merging
and product-type extensions are studied by~\citet{vovkwang2024,ming2026}. Full-support mixtures
already belong to this established class; our contribution is their prior-optimal selection,
minimax characterization, rate and closure analysis, and horizon calibration. Ville's
$1/\alpha$ supplies the generic threshold without a horizon restriction.

The marginal comparator, $e$-Bonferroni, rejects $H_k$ when $L_{k,t}$ first reaches $K/\alpha$;
it does not pool evidence. Table~\ref{tab:streams} distinguishes how the comparators use the streams.
The corners do not
attain rate $lD$ at every configuration. The balanced global-null merge attains it at the sparse
and dense endpoints but generally not between them; the arm-sampling comparator uses a different
clock~\citep{bharti2026,sandoval2026}. Neither identifies individual active streams.
Asymptotic sequential multiple-testing theory optimizes terminal policies under known bounds on
the number of signals as error levels vanish~\citep{songfellouris2017,songfellouris2019,xingfellouris2025,xing2026,liusong2026}.
Here expected-log optimality holds at each finite horizon and feasible local boundaries give
nonasymptotic error control; the rejection-delay result remains asymptotic.

Bounded-horizon anytime-valid inference, which tightens thresholds to a known deadline, is
closest to the calibration
step~\citep{mathiswaudbysmith2026,baas2026curtailment,sokolova2026,clerico2026,taga2026}.
It is not uniformly single-claim: \citet[Section~6.4]{sokolova2026} discusses multiplicity across
platform-trial arms. Its scalar calibration must here serve intersections of different sizes and
cover their accrual and cap profiles. Fixed-sample optimal-FWER policies in related exchangeable
models also use ESPs, as coefficients of power and error functionals; here the polynomial is the
monitored e-process~\citep{rosset2022,dubeyhuo2025fast,dubeyhuo2026,dubeyhuo2026boost,dubeyhuo2026prior}.

\begin{table}[!tbp]
\centering
\footnotesize
\setlength{\tabcolsep}{3pt}
\caption{How existing methods use the $K$ streams, and the rate at which the evidence for an
intersection $H_S$ with $m$ streams, $l$ of them active, grows per synchronous round, with one observation per stream.  $D=\KL(f_A\|f_0)$
and $\bar D=\KL(f_0\|f_A)$.  The last row, in bold, is PEM closure.}
\label{tab:streams}
\begin{tabular}{@{}L{5.6cm}L{3.0cm}L{2.8cm}L{2.0cm}@{}}
\toprule
Method & Other streams enter through & Effect of a null stream & Rate for $H_S$\\
\midrule
$e$-Bonferroni \citep{vovk2021} & not at all; threshold $K/\alpha$ & penalty $\log K$ always & $D$ (no pooling)\\
Arithmetic-mean closure \citep{vovk2021,hartoglei2025} & additive pooling in each $S$ & dilution by $1/m$ & $D$\\
Product closure \citep{fischerramdas2024} & multiplicative pooling in each $S$ & subtracts $\bar D$; can stall & $lD-(m-l)\bar D$\\
Balanced global-null merge \citep{bharti2026} & pooled for one global claim; the two-atom member $\tfrac12\tilde e_1+\tfrac12\tilde e_m$ of the PEM family & dilution or subtraction, whichever atom leads & $\max\{D,\,lD-(m-l)\bar D\}$; equals $lD$ at $l=1$ and $l=m$\\
Arm-sampling global merge \citep{sandoval2026} & one arm sampled per step, for one global claim & through the sampling rule & not comparable on this clock: a step is one observation, not one per stream\\
Asymptotic sequential multiple testing \citep{songfellouris2017,xingfellouris2025} & a known bound on the number of signals & through the bound & first order as $\alpha\to0$\\
\textbf{PEM closure} (this paper, \S\ref{sec:method}) & \textbf{configuration mixture; posterior-weighted updates} & \textbf{downweighted adaptively} & $\boldsymbol{lD}$ \textbf{for every} $\boldsymbol{l}$\\
\bottomrule
\end{tabular}
\end{table}

Two choices remain: a mixture attaining rate $lD$ at unknown cardinality, and size-specific
boundaries valid for every admitted sampling profile. Section~\ref{sec:method} specifies them;
\S\ref{sec:theory} establishes the guarantees.

\section{The proposed procedure: PEM closure}\label{sec:method}

This section gives the implementable procedure, its statistic and weights, horizon calibration,
and the closed test with its shortcut.

\subsection{The procedure}\label{sec:procedure}

Prespecify the inputs and boundaries before observing data, then repeat the three steps below at
every $t\le N$.  The inputs are the null and alternative densities $f_0$ and $f_A$, the number
of hypotheses $K$, the level $\alpha$ and the horizon $N$, and, for each intersection size
$m=1,\dots,K$, a design prior $\beta^{(m)}=(\beta^{(m)}_1,\dots,\beta^{(m)}_m)$ on the number of
false hypotheses among $m$.  Absent a design prior, use the geometric default
$w_l^{(m)}=(2^{1/m}-1)^l$, which has full support and equalizes the drift-normalized log-regret
bound in Theorem~\ref{thm:regret} across cardinalities.  From these inputs, compute $K$ deployed boundaries $b_m$, $m=1,\dots,K$, one for each
intersection size.  Each must satisfy
$\Pr_0(\max_{0\le t\le N}E_{S,t}\ge b_m)\le\alpha$ under its calibration schedule, where
$E_{S,t}$ is the statistic in~\eqref{eq:procstat} below.  When the smallest feasible member of the
specified population candidate set exists, denote it by $b^\star(\alpha,N,m)$.
Use exact enumeration for a tractable finite-state model, a Monte Carlo boundary $\widehat b$
with a stated calibration-confidence guarantee otherwise, and $1/\alpha$ when neither route is
available.  Schedule guards check feasibility under the permitted sampling schedules and can raise a
calibrated boundary, always retaining the Ville fallback;
\S\ref{sec:calibration} describes both routes and \suppref{Algorithm}{alg:bstar}{Algorithm~S2} states them step by step.

\begin{enumerate}
\item \textbf{Marginal evidence.} For each unrejected hypothesis $H_k$, update the likelihood ratio
$L_{k,t}=\prod_{s\le t}f_A(X_{k,s})/f_0(X_{k,s})$ of its first $t$ observations.
\item \textbf{Pooling.} For every unmarked nonempty $S\subseteq[K]$ with $m=|S|$, compute the
prior-weighted mixture of the elementary symmetric polynomials of the likelihood ratios in $S$,
\begin{equation}\label{eq:procstat}
E_{S,t}=\sum_{l=1}^{m} w_l^{(m)}\,e_l\bigl(\{L_{k,t}\}_{k\in S}\bigr),
\qquad w_l^{(m)}=\frac{\beta^{(m)}_l}{\binom ml},
\end{equation}
where $e_l$ is the $l$-th elementary symmetric polynomial.  Mark $H_S$ rejected if
$E_{S,t}\ge b_m$; a mark, once made, stands.
\item \textbf{Closure.} Reject $H_k$ at the first $t$ at which every $H_S$ with $S\ni k$ carries
a mark, and discontinue stream $k$; the remaining streams continue to $N$.  For large $K$, a
conservative shortcut evaluates $K$ least-favorable intersections per hypothesis instead of
$2^{K-1}$.
\end{enumerate}

Figure~\ref{fig:closure} shows one such decision at $K=3$: marks accumulate on the lattice of
intersections, and an elementary hypothesis is rejected exactly when every intersection above it
carries one.

\begin{figure}[!tbp]
\centering
\includegraphics[width=\linewidth]{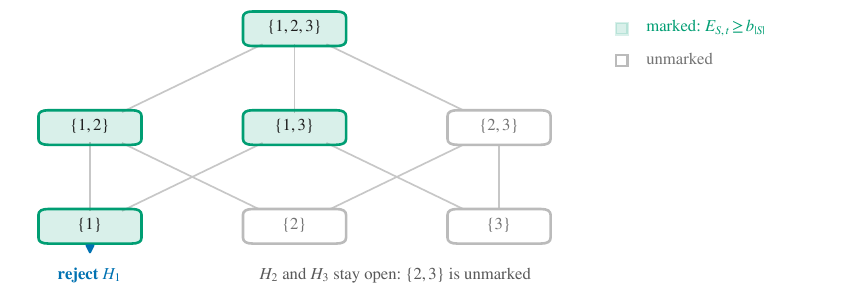}
\caption{An elementary rejection needs every intersection containing its stream to be marked, so a
single unmarked intersection holds two hypotheses open.  One PEM closure decision at $K=3$, drawn
over the seven nonempty intersections ordered by inclusion.  Stream~1 has cleared its boundary in
every intersection that contains it, so $H_1$ is rejected; $\{2,3\}$ has not crossed, so $H_2$ and
$H_3$ remain open even though $\{1,2\}$, $\{1,3\}$ and $\{1,2,3\}$ are marked.  Marks are
permanent, so the figure records the history of crossings and not only the current statistics.  The
diagram illustrates the rule and reports no simulation output.}
\label{fig:closure}
\end{figure}

The output is the rejection times $\tau_k=\max_{S\ni k}\inf\{t\le N:E_{S,t}\ge
b_{|S|}\}$ and the set of hypotheses rejected by $N$.  The intersection marks are also
reportable, with one distinction.  A single prespecified $H_S$, fixed before the data, is tested at
level $\alpha$ by its own mark under the applicable boundary-feasibility guarantee, so that mark
supports the claim that at least one stream in $S$ is
active.  The collection of all $2^K-1$ marks is not thereby simultaneously protected: to report a
set of intersection claims with family-wise protection, report $H_S$ only when every $H_{S'}$ with
$S'\supseteq S$ is also marked, which is the closed test applied at the intersection level.  The
primary full-intersection claim $H_{[K]}$ has no proper superset and so needs no such qualification,
and the elementary rejections are protected by construction.

\subsection{The intersection statistic and its weights}\label{sec:object}

The PEM statistic is a prior-weighted mixture of the configuration likelihood ratios,
which under an exchangeable prior is a mixture of elementary symmetric polynomials.  Fix a nonempty
$S\subseteq[K]$, write $m=|S|$, and let $\mathcal C$ be the nonempty configurations $c\subseteq S$.
Under configuration $c$ the likelihood ratio of the data in $S$ against the all-null law is
$L_{c,t}=\prod_{k\in c}L_{k,t}$ by Assumption~\ref{ass:indep}.  For a configuration prior
$w=(w_c)_{c\in\mathcal C}$ with $\sum_cw_c=1$, put $E^w_{S,t}=\sum_{c\in\mathcal C}w_cL_{c,t}$.  If
$w$ is exchangeable, $w_c=w_{|c|}$, this accumulated mixture has the ESP form
\begin{equation}\label{eq:mixmart}
E^w_{S,t}=\sum_{c\in\mathcal C}w_{|c|}L_{c,t}
=\sum_{l=1}^{m} w_l\,e_l\bigl(\{L_{k,t}\}_{k\in S}\bigr),
\qquad \sum_{l=1}^m\binom ml w_l=1,
\end{equation}
which is~\eqref{eq:procstat} with $\beta_l=\binom mlw_l$ the prior probability that exactly $l$ of
the $m$ hypotheses in $S$ are false.  Equivalently, with normalized ESPs $\tilde e_l=\binom
ml^{-1}e_l$, the right-hand side is $\sum_l\beta_l\tilde e_l(\{L_{k,t}\}_{k\in S})$.  Equation~\eqref{eq:mixmart} is the central object of the paper; when the prior is fixed we suppress
the superscript $w$.

The statistic is a test martingale in the full filtration under every
configuration in which all of $S$ is null, whatever the streams outside $S$ do.  Each $L_{c,t}$ is a
mean-one martingale under the all-null law of $S$, so $E^w_{S,t}$ is a mean-one martingale in the
local filtration $(\F^S_t)$.  More importantly for closed testing, under every full-data
configuration $\vec h$ satisfying $H_S$, conditional independence gives
$\E_{\vec h}[L_{c,t}\mid\F_{t-1}]=L_{c,t-1}$ for every $c\subseteq S$, so $E^w_{S,t}$ is an
$\F_t$ test martingale uniformly over the nuisance configurations outside $S$ and is anytime valid by
Ville's inequality.

A single fixed prior adapts to the observed sparsity, because the
per-step multiplier of the PEM statistic is a posterior-weighted average of the configuration
multipliers.  On $\{E^w_{S,t-1}>0\}$,
\[
\frac{E^w_{S,t}}{E^w_{S,t-1}}
=\sum_{c\in\mathcal C}\pi^w_{c,t-1}\prod_{k\in c}r_{k,t},
\qquad
\pi^w_{c,t-1}=\frac{w_cL_{c,t-1}}{E^w_{S,t-1}},
\]
the Bayes posterior over configurations given the data in $S$.  A stream whose likelihood ratio has
decayed drags down the posterior mass of every configuration that contains it, so null streams are
downweighted rather than averaged in or multiplied in.  This is the mechanism behind the last row of
Table~\ref{tab:streams}.

Once the model and sampling design are fixed, the mixture weights encode the design prior: if $\beta_l$
is the prior probability that exactly $l$ of the $m$ hypotheses in $S$ are false, the
expected-log-optimal weights are $w_l^{(m)}=\beta_l/\binom ml$ (Corollary~\ref{cor:esp}).  When
no prior is available, two full-support defaults are used.  The \emph{geometric} weights
$w_l^{(m)}=\gamma_m^l$, $\gamma_m=2^{1/m}-1$, equalize the drift-normalized log-regret bound in
Theorem~\ref{thm:regret} across cardinalities and place most level mass at $l=1$; they are the
primary choice in every application of \S\ref{sec:apps}. The \emph{level-uniform} weights $\beta_l=1/m$ put equal prior mass on each
number of false hypotheses and are used in the Gaussian simulations of \S\ref{sec:sims}, which
also report the sensitivity of PEM closure to the choice.  Full support is what keeps the loss
from a wrong prior bounded (Theorem~\ref{thm:regret}) and what attains the rate $lD$ at every $l$
(Proposition~\ref{prop:twosided}).

\subsection{The horizon-\texorpdfstring{$N$}{N} boundary}\label{sec:calibration}

Calibrating to a prespecified horizon can lower the rejection threshold while preserving its
error guarantee. Ville's threshold can leave error budget unused because evidence jumps past
the boundary (\emph{overshoot}) or never crosses it; supermartingale evidence may also lose
conditional mean along the way~\citep{delapenaklass2026}. A fixed horizon excludes further paths
whose first crossing would occur only after $N$. Calibration accounts for these effects through
the null crossing probability by $N$. In the computation, a \emph{gate value} is the value of
this statistic at a tabulated state, to be compared with a candidate threshold.

\begin{definition}[Feasible and exact horizon-$N$ boundaries]\label{def:bstar}
Fix an intersection size $m$ and the \emph{observation schedule} on which the streams in $S$ are
read, including any outcome-independent random accrual law and the per-stream caps.
Write $P_0$ for the joint all-null law of observations and accrual under that design.  A threshold
$b$ is \emph{feasible} for $(\alpha,N,m)$ and that schedule if
\begin{equation}\label{eq:bfeasible}
\Pr_{0}\bigl(\max_{0\le t\le N}E_{S,t}\ge b\bigr)\le\alpha .
\end{equation}
Prespecify a population candidate set $\mathcal V_{N,m}$ containing $1/\alpha$ and restricted to
$(0,1/\alpha]$.  For a continuous model use the support of the running maximum, intersected with
that interval and augmented by $1/\alpha$.  For finite-state enumeration use the tabulated gate
values in that interval, augmented by $1/\alpha$.
When its feasible subset has a least element, write
\[
b^\star(\alpha,N,m)=\min\bigl\{b\in\mathcal V_{N,m}:\ \eqref{eq:bfeasible}\ \text{holds}\bigr\}
\]
and call it the exact horizon-$N$ boundary for this candidate set.  Otherwise deploy $1/\alpha$.
The notation suppresses the fixed weights, schedule and candidate set.  Write $b_m$ for the
boundary actually deployed, including any increase required by a schedule guard or the
simulation-calibrated value $\widehat b$.
\end{definition}

Feasibility is sufficient for error control: Proposition~\ref{prop:bstarvalid} uses
only~\eqref{eq:bfeasible}. Since $1/\alpha$ is feasible, the stated fallback gives
$b_m\le1/\alpha$. A minimum exists for a finite candidate set but need not exist for a continuous
one; even Gaussian running maxima can have an atom at their initial value $1$.
\suppref{Appendix}{app:bstar}{Appendix~C} gives sufficient conditions
and counterexamples. The finite Monte Carlo candidate set remains well-defined without a
population minimum.

Calibration at $N$ does not justify a longer horizon: at a fixed threshold, the crossing
probability is nondecreasing in $N$. The selected value $b^\star$ itself need not increase with
$N$, because the prescribed candidate grid can change. Under synchronous sampling,
exchangeability gives one boundary per intersection size; unequal caps and thinned accrual
require the schedule checks below. Boundaries are minimized over the stated candidate grid,
with inclusive crossings $E_{S,t}\ge b$ and conservative numerical tie handling.
\suppref{Appendix}{app:bstar}{Appendix~C} collects the grid and horizon
counterexamples.

\suppref{Algorithm}{alg:bstar}{Algorithm~S2} details both calibration routes.
\begin{itemize}
\item \emph{Exact enumeration.} For a tractable finite-state model, exact recursion computes the
null crossing probability and the smallest feasible grid candidate, without a simulation-confidence
qualification. For Bernoulli streams it tracks the joint response counts, absorbing paths once
the boundary is crossed. Under an exchangeable null and symmetric gate, aggregation by count
multisets and, when $w_1^{(m)}>0$, a safe absorbing count cap reduce the state space; at the product corner, only the
total count is needed. For the $m=10$, $N=31$ NCI design, these reductions replace $32^{10}$
final-layer states with at most $92{,}378$ live multiset states, or $311$ states at the product
corner. Unequal caps or thinned accrual require the asymmetric state space.
\suppref{Appendix}{app:bstar}{Appendix~C} sets out the algorithms,
exactness arguments and measured cost.
\item \emph{Monte Carlo.} Otherwise, simulate $M$ null paths of the $m$ streams and record
$\max_{t\le N}E_{S,t}$ on each.  The returned boundary $\widehat b(\alpha,N,m)$ is the smallest
recorded path maximum no greater than $1/\alpha$ at which the one-sided Clopper--Pearson upper confidence bound on the crossing
probability, at confidence $\gamma$, is at most $\alpha$; if no such recorded value qualifies the routine
returns $1/\alpha$.  We write $\widehat b$ rather than $b^\star$ because it is a random variable
computed from the calibration sample: it is not the population minimum, and the sample hit fraction
at $\widehat b$ is not the population crossing probability.  What the construction delivers is that,
with probability at least $\gamma$ over the calibration draw, $\widehat b$ is feasible in the sense
of~\eqref{eq:bfeasible}; the calibration sample must then be independent of the data the procedure
is run on.  The Gaussian designs of \S\ref{sec:sims} use $M=50{,}000$ paths at $\gamma=0.999$.
\end{itemize}

For Bernoulli response probabilities $p_k$ and a null upper bound $p_0$, the composite marginal
null is $H_k:p_k\le p_0$. The boundary is computed at $p_k=p_0$ for
every $k\in S$; a monotone coupling of the data shows that $(p_0,\dots,p_0)$ is the least-favorable
point of the intersection null (\S\ref{sec:validity}).  The supplement also contrasts Ville monitoring with the
alternative of a pointwise test followed by a union bound over time, which pays a $\log t$ penalty
that Ville's inequality avoids.

\paragraph{Calibration is schedule-specific.}  Definition~\ref{def:bstar} fixes the schedule
because the null law of $\max_{t\le N}E_{S,t}$ depends on it, and a boundary calibrated under
synchronous full accrual need not be feasible under another.  Factor-one idle updates after a
stream is capped or skipped preserve the martingale property, hence Ville's threshold, but they do
not preserve a smaller calibrated one. Freezing can prevent further decay of null evidence,
so its effect on crossing probability has no fixed direction;
\suppref{Appendix}{app:bstar}{Appendix~C} gives an exact counterexample.

What closed testing needs is that the size-$|S_0|$ boundary be feasible under the schedule of the
true-null set $S_0$, for every configuration in the claimed guarantee. The calibration routine
recomputes each boundary's crossing probability under the cap and accrual profiles it checks,
and raises the boundary to the next grid candidate where necessary.
Coverage of the frozen results differs
by design: the synchronous NCI design and both five-basket designs, the realized unequal caps
included, are verified at every nonempty true-null subset, whereas the NCI thinned-accrual
sensitivity is certified only for the true-null profile that scenario realizes.  Prospective
calibration checks every nonempty subset profile under each prespecified schedule and uses
Ville's threshold when exact calculation exceeds the state budget; this safeguard does not
enlarge the guarantee of the frozen results. \suppref{Appendix}{app:bstar}{Appendix~C} and
\suppref{Table}{tab:scheduleaudit}{Table~S1} record the complete coverage
and what remains outside the verified class.

Unknown null crossing laws require the Ville fallback $1/\alpha$ for the finite-population benchmark
and paired-score advertising processes of \S\ref{sec:apps}; their family-wise error control
holds at every stopping time.

\paragraph{What a simulated boundary guarantees.} Exact calibration gives an unconditional
level guarantee. Monte Carlo calibration gives feasibility on an event of probability at least
$\gamma=1-\delta$. For a fixed configuration, only its true-null set's boundary enters the FWER
proof, yielding the unconditional bound $\alpha+(1-\alpha)\delta$; at $\alpha=0.05$ and
$\delta=10^{-3}$ this is $0.05095$. Simultaneous feasibility of all $K$ boundaries instead has
confidence at least $1-\sum_m\delta_m$ when size $m$ uses confidence $1-\delta_m$.
We report the conditional guarantee and $\gamma$. Searching the nested crossing events requires
no additional correction: each nonfallback boundary is an observed order statistic selected by
a prespecified maximum hit count. \suppref{Appendix}{app:bstar}{Appendix~C} proves this
and gives the internal calibration target for an exact unconditional level, together with tie
and fallback conventions.

\subsection{Closed testing and the least-favorable shortcut}\label{sec:closure}

Closed testing converts the intersection rejections into elementary rejections with no correction
over subsets or over time.  For each size $m$, prespecify the exchangeable weight vector $w^{(m)}$
and use it in~\eqref{eq:mixmart} for every $S$ with $|S|=m$.  The rejection times $\tau_k$ of
\S\ref{sec:procedure} are read with $\inf\varnothing=\infty$: $H_k$ is not rejected if some
intersection containing it fails to cross its boundary by $N$.
Proposition~\ref{prop:bstarvalid} shows this controls family-wise error through $N$ whenever
the deployed $b_m$ are feasible under the required schedules, and Proposition~\ref{prop:fwer} gives control at every stopping time for the
flat threshold.  Each $E_{S,t}$ costs $O(|S|^2)$
via the ESP prefix recursion $e_l(x_{1:j})=e_l(x_{1:j-1})+x_je_{l-1}(x_{1:j-1})$, where
$x_{1:j}=(x_1,\dots,x_j)$, $e_0=1$, and $e_l(x_{1:j})=0$ for $l>j$ (including the empty
prefix when $j=0$). Literal closure
sweeps $2^K-1$ intersections; the $K=6$ simulation and the basket applications do so.

Exchangeability permits a conservative shortcut that
monitors $K$ statistics per hypothesis instead of $2^{K-1}$.  The statistic $E_{S,t}$ depends on $S$
only through the multiset $\{L_{k,t}\}_{k\in S}$ and is coordinatewise increasing, so among size-$m$
subsets containing $k$ it is minimized at
$S_{k,m,t}=\{k\}\cup\{\text{the }m-1\text{ smallest other }L\text{'s}\}$.  If this least-favorable
statistic reaches $b_m$, every size-$m$ intersection containing $k$ has been rejected,
because $b_m$ is the same for all size-$m$ subsets.  Write
$s_{k,m}=\inf\{t\le N:E_{S_{k,m,t},t}\ge b_m\}$ for the first crossing of the size-$m$
least-favorable statistic.  Since rejections are permanent, the sizes need not clear at a common
time, and the \emph{latched} shortcut rejects $H_k$ at
\[
\tau_k^{\mathrm{LF}}=\max_{1\le m\le K}s_{k,m}\ \ge\ \tau_k .
\]
The shortcut is conservative, because the least-favorable subset can change with $t$, so it rejects
only hypotheses rejected by the full closed test and inherits its error control.  It costs $O(K^2)$
per hypothesis per step using sorted-prefix ESPs.  We deploy the latched form whenever literal
closure is infeasible.  The literal algorithm, the comparison with a non-latched rule, and the exact
condition for shortcut equality are in
\suppref{Appendix}{app:algorithm}{Appendix~A}.

\section{Theoretical guarantees}\label{sec:theory}

We establish error control, sufficiency and evidence optimality, then compare intersection and
elementary rejection limits.  Section~\ref{sec:boundaryshape} connects these results to the stopping
objective~\eqref{eq:costobj}. The main text states and interprets the results; full proofs are
collected in the online supplement.

\subsection{Family-wise error control}\label{sec:validity}

Whatever subset of hypotheses is in fact false, the probability that the
procedure rejects any true hypothesis by the horizon is at most $\alpha$, provided the deployed
boundary is feasible for that configuration's true-null schedule.
This is the guarantee~\eqref{eq:fwerN} that the problem demands, and it is obtained without a
Bonferroni-type correction over the $2^K-1$ intersections or over the $N$ analysis times: closed
testing handles the subsets, and the boundary handles the analysis times.  Two versions are proved.
The first is for feasible deployed boundaries $b_m$ and holds through the horizon $N$.  The second is
for the flat threshold $1/\alpha$ and holds at every stopping time, bounded or not; it is the
guarantee that remains when a smaller feasible boundary cannot be certified.

\begin{proposition}[Closed testing with feasible boundaries]\label{prop:bstarvalid}
Fix a configuration and let $S_0$ be its true-null set.  Suppose $H_S$ is rejected at the first
$t\le N$ with $E_{S,t}\ge b_{|S|}$, and $H_k$ once every $S\ni k$ has been rejected.  If
$S_0$ is empty, or its deployed size-$|S_0|$ threshold is feasible in the sense of~\eqref{eq:bfeasible}
under the joint null-observation and accrual law for that configuration, then PEM closure satisfies
\eqref{eq:fwerN}.  Feasibility must hold for every nonempty possible $S_0$ to obtain strong FWER.
For synchronous observations the exact size-specific calibration establishes this simultaneously;
for Monte Carlo boundaries the conclusion is conditional on their calibration-feasibility event.
\end{proposition}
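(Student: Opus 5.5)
The argument is the standard closed-testing reduction: a false rejection of any elementary hypothesis forces the global intersection of the true nulls to have been marked. Fix the configuration $\vec h$ and its true-null set $S_0=\{k:h_k=0\}$. If $S_0=\varnothing$, then $V_t=0$ for all $t$ and \eqref{eq:fwerN} holds trivially. Otherwise, consider the event $A=\{\exists\,t\le N: V_t>0\}$ that some true null is rejected by the horizon.

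\textbf{Step 1: reduction to one intersection.} On $A$ there are $k\in S_0$ and $t\le N$ with $\tau_k\le t$. By the closure rule, every $S\ni k$ carries a mark by time $\tau_k$; in particular $S_0\ni k$ does. Marks are made only at the first $t'\le N$ with $E_{S_0,t'}\ge b_{|S_0|}$. Hence
\[
A\subseteq\Bigl\{\max_{0\le t\le N}E_{S_0,t}\ge b_{|S_0|}\Bigr\}.
\]
This step uses only the permanence of marks and the definition of $\tau_k$, with $\inf\varnothing=\infty$. It needs no union bound over subsets or over times.

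\textbf{Step 2: identifying the law of $E_{S_0,\cdot}$.} The statistic $E_{S_0,t}$ is a function of the streams in $S_0$ alone, read on the schedule realized for $S_0$. Under $\vec h$, these streams are i.i.d.\ $f_0$. By Assumption~\ref{ass:indep}, they are independent of the streams outside $S_0$.

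The point needing care is that stopping of streams outside $S_0$, driven by closure decisions, must not alter the law of the $S_0$ observations and their accrual. I would argue that the accrual for $S_0$ is either outcome-independent by design, or is exactly the joint null-observation and accrual law named in the hypothesis. Consequently, the law of $(E_{S_0,t})_{t\le N}$ under $P_{\vec h}$ is precisely the law $P_0$ of Definition~\ref{def:bstar} for size $|S_0|$ under that schedule. Exchangeability, Assumption~\ref{ass:exch}, shows this law depends on $S_0$ only through $|S_0|$ in the synchronous case.

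I expect this identification to be the main obstacle, because with discontinuation and unequal caps the $S_0$-schedule could in principle depend on data. I would handle it by invoking the hypothesis exactly as stated: feasibility is required under the configuration's joint law, not under a nominal one.

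\textbf{Step 3: conclusion.} Feasibility \eqref{eq:bfeasible} then gives $\Pr_{\vec h}(A)\le\alpha$. Requiring this for every nonempty $S_0$ yields strong FWER.

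For synchronous sampling, the exact size-specific calibration certifies each $b_m$ under the single synchronous schedule. By Step 2, this covers every $S_0$ of size $m$ simultaneously.

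For Monte Carlo boundaries, the calibration sample is independent of the data. Conditioning on it, on the event where the relevant $\widehat b$ is feasible, Steps 1--3 apply verbatim, which gives the conditional conclusion.
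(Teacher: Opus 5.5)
Your proposal is correct and follows the paper's proof essentially verbatim: closure gives the inclusion $\{\exists t\le N: V_t>0\}\subseteq\{\max_{t\le N}E_{S_0,t}\ge b_{|S_0|}\}$, the law of the $S_0$ streams under the outcome-independent accrual design is identified with the all-null calibration law $P_0$, and feasibility then bounds the probability by $\alpha$. The worry you raise in Step 2 is milder than you suggest: $E_{S_0,t}$ involves only the streams in $S_0$, and no stream in $S_0$ can be discontinued by closure before $H_{S_0}$ is itself marked, so closure-driven stopping cannot change the crossing event.
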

Any false elementary rejection requires rejecting the intersection of all true nulls, so
one feasible local test controls the family-wise error. The proof is in
\suppref{Appendix}{app:closureproofs}{Appendix~C.1}.

Boundary feasibility must match the actual sampling schedule of the true-null streams. Only the
true-null set's own threshold is used, so a
design need not certify all $K$ boundaries against all schedules; but the one it uses must be
feasible under the schedule that set follows, and under unequal caps or thinned accrual that is a
different law from the one the boundary was calibrated against.  Section~\ref{sec:calibration} states
the re-pricing that the basket designs perform, and
\suppref{Appendix}{app:bstar}{Appendix~C} records exactly which schedules
have been verified.

For composite Bernoulli nulls $H_k:p_k\le p_0$, an outcome-independent schedule and
likelihood-ratio factors increasing in the response make $(p_0,\dots,p_0)$ least favorable for the entire intersection.
The joint monotone-coupling argument is in
\suppref{Appendix}{app:bstar}{Appendix~C}.

The flat-threshold version requires only that each intersection process be an e-process in the full
filtration.  That is why it transfers, unchanged, to the application-specific marginal processes of
\S\ref{sec:apps}, whose null laws are not those of the product model.

\begin{proposition}[Strong anytime-valid FWER]\label{prop:fwer}
Suppose that, for every nonempty $S\subseteq[K]$, $(E_{S,t})_{t\ge0}$ is a nonnegative
full-filtration e-process under every distribution in the intersection null $H_S$.  The closed test
that rejects $H_S$ at its first crossing of $1/\alpha$ and rejects $H_k$ only after every $H_S$ with
$S\ni k$ has been rejected controls strong anytime-valid FWER at level $\alpha$ under every
data-generating distribution.  In particular, under Assumptions~\ref{ass:exch}--\ref{ass:indep}, the
PEM statistics in~\eqref{eq:mixmart} satisfy this premise and PEM closure obeys~\eqref{eq:fwerN}.
\end{proposition}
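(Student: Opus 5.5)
The plan is the standard closed-testing reduction, combined with an optional-stopping form of Ville's inequality for e-processes, followed by a check that the PEM statistics satisfy the premise.

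Fix a data-generating distribution $P$, and let $S_0$ be the set of indices of its true nulls. If $S_0$ is empty there is nothing to prove, so assume it is nonempty. Then $P$ belongs to the intersection null $H_{S_0}$. The closure rule rejects an elementary $H_k$ only after every $H_S$ with $S\ni k$ has been marked. So any false rejection of some $k\in S_0$ at some time requires $H_{S_0}$ to have been marked by then, since $S_0\ni k$. Marks are permanent. Hence the event that some true null is rejected at or before a (possibly unbounded) stopping time $\tau$ is contained in the event that the first crossing $\sigma_{S_0}=\inf\{t:E_{S_0,t}\ge1/\alpha\}$ is finite and at most $\tau$. Strong anytime-valid FWER therefore reduces to the single-intersection bound $P(\sigma_{S_0}<\infty)\le\alpha$. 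This bound is uniform over $\tau$ and over the horizon, and it covers in particular the event $\{\exists t\le N: V_t>0\}$ in \eqref{eq:fwerN}.

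The main step, and the only real obstacle, is this Ville-type bound for an e-process rather than a supermartingale. An e-process only guarantees $\E_P[E_\tau]\le1$ at stopping times, not a maximal inequality. I would apply the stopping-time property at the bounded stopping time $\sigma_{S_0}\wedge n$. On $\{\sigma_{S_0}\le n\}$ we have $E_{S_0,\sigma_{S_0}\wedge n}\ge1/\alpha$, and nonnegativity handles the complement. Markov's inequality then gives $P(\sigma_{S_0}\le n)\le\alpha\,\E_P[E_{S_0,\sigma_{S_0}\wedge n}]\le\alpha$. Letting $n\to\infty$ by monotone convergence gives $P(\sigma_{S_0}<\infty)\le\alpha$. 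Adaptedness to the full filtration is what makes $\sigma_{S_0}$ an $(\F_t)$-stopping time. This matters because the closure decisions are $\F_t$-measurable while $P$ ranges over laws in which streams outside $S_0$ behave arbitrarily.

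For the ``in particular'' clause, I would verify that each PEM statistic \eqref{eq:mixmart} is a nonnegative $\F_t$ test martingale under every configuration $\vec h$ with $h_k=0$ for all $k\in S$. By Assumption~\ref{ass:indep}, $r_{k,t}$ for $k\in S$ is independent of $\F_{t-1}$, independent across $k$, and has mean one under $f_0$. Hence for every $c\subseteq S$,
\[
\E_{\vec h}\Bigl[\prod_{k\in c}r_{k,t}\Bigm|\F_{t-1}\Bigr]=1,
\]
so $L_{c,t}$ is a martingale. Linearity with nonnegative weights summing to one, together with $E_{S,0}=\sum_c w_c=1$, then makes $E_{S,t}$ a test martingale. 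Optional stopping for nonnegative martingales at bounded times gives the e-process property. Assumption~\ref{ass:exch} is not needed for validity beyond fixing the common weights. Applying the first part and restricting to $t\le N$ yields \eqref{eq:fwerN}.
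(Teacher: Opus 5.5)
Your proposal is correct and matches the paper's proof essentially step for step. Like the paper, you reduce every false elementary rejection to a crossing by the true-null intersection $H_{S_0}$. You then apply the e-process property at the bounded stopping time $\sigma_{S_0}\wedge T$, use Markov's inequality, and let $T\to\infty$. Finally, you obtain the baseline case from the full-filtration test-martingale property of each $L_{c,t}$ under conditional independence, which is the argument of \S\ref{sec:object}.
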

The proof is in
\suppref{Appendix}{app:closureproofs}{Appendix~C.1}.

\subsection{Marginal sufficiency and permutation-invariant coordinates}\label{sec:sufficiency}

Under exchangeability and conditional independence, the \emph{labeled} vector of marginal
likelihood ratios is minimal sufficient for which hypotheses are false, and its elementary
symmetric polynomials generate exactly the permutation-\emph{invariant} information in it.  The two
statements do different work, and the second is weaker than the first: passing from the labeled
vector to its ESPs discards the labels, so the ESPs are \emph{not} sufficient for the labeled active
set, and a rule that must name which streams are active cannot in general be written as a function
of them.  At a fixed horizon, symmetric functions of that labeled vector factor through the ESPs.
The intersection null $H_S$ asserts that every stream in $S$ is null; its rejection asserts that at
least one is active.  Under an exchangeable prior the scalar decision problems specified below admit
a symmetric representative, so ESP coordinates suffice for those problems.
Equation~\eqref{eq:mixmart} monitors a linear function of these coordinates.
This fixed-horizon reduction does not recover historical crossings from the current ESP vector:
sequential closure also stores permanent marks.  Closed testing recovers labeled elementary
decisions by combining the invariant tests over the labeled subsets.

Write $P_0^S$ for the all-null law of the streams in $S$, and $P_0^{(t)}$ and $Q_c^{(t)}$,
$\varnothing\ne c\subseteq S$, for the restrictions to $\F_t^S$
of $P_0^S$ and of the law under configuration $c$; by Assumption~\ref{ass:indep} each
$Q_c^{(t)}$ is a product law and $dQ_c^{(t)}/dP_0^{(t)}=L_{c,t}$, the quantity accumulated
in~\eqref{eq:mixmart}.

\begin{lemma}[Minimal sufficiency of the marginal likelihood ratios]\label{lem:sufficient}
Fix a nonempty $S\subseteq[K]$ and a horizon $t$.  Under Assumption~\ref{ass:indep} the vector
$T_t=(L_{k,t})_{k\in S}$ is minimal sufficient for the family $\{P_0^{(t)}\}\cup\{Q_c^{(t)}:\varnothing
\ne c\subseteq S\}$ on $\F_t^S$.
\end{lemma}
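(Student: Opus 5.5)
The plan is to take $P_0^{(t)}$ as the dominating reference and use the standard characterization of minimal sufficiency through likelihood-ratio vectors. The family is finite, with $2^m$ members for $m=|S|$, and every member is dominated by $P_0^{(t)}$ with density $dQ_c^{(t)}/dP_0^{(t)}=L_{c,t}=\prod_{k\in c}L_{k,t}$. This identity follows from Assumption~\ref{ass:indep} and the product form of each $Q_c^{(t)}$, as noted just before the lemma. For a finite family dominated by one of its members, the vector of densities with respect to that member, $R_t=(L_{c,t})_{\varnothing\ne c\subseteq S}$, is minimal sufficient; this is the classical result for finite or countable dominated families (Lehmann--Scheff\'e / Bahadur). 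The proof then reduces to showing that $R_t$ and $T_t$ generate the same $\sigma$-field up to $P_0^{(t)}$-null sets.

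\textbf{Step 1 (sufficiency).} Each coordinate of $R_t$ is a product of coordinates of $T_t$, so every density $dQ_c^{(t)}/dP_0^{(t)}$ is a measurable function of $T_t$. The factorization theorem, with $h\equiv1$ relative to $P_0^{(t)}$, then shows that $T_t$ is sufficient.

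\textbf{Step 2 (minimality).} I would show that $T_t$ is a function of $R_t$, since it is in fact a subvector. The singleton configurations $c=\{k\}$ give $L_{\{k\},t}=L_{k,t}$, so $T_t=(L_{\{k\},t})_{k\in S}$. By the classical result, $R_t$ is minimal sufficient and is a function of any other sufficient statistic $U$, almost surely under the family. Composing, $T_t$ is a function of $U$. So $T_t$ is sufficient and factors through every sufficient statistic, which is minimal sufficiency.

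As a backup to citing the classical theorem, I can verify minimality directly. Let $U$ be sufficient. By the factorization theorem applied to $U$, each density ratio $dQ_{\{k\}}^{(t)}/dP_0^{(t)}$ equals $g_k(U)/g_0(U)$ almost surely on $\{g_0(U)>0\}$. The remaining set $\{g_0(U)=0\}$ is $P_0^{(t)}$-null and therefore null for the whole dominated family.

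\textbf{Main obstacle.} The delicate point is the null-set bookkeeping. Densities with respect to $P_0^{(t)}$ could vanish, or could be $+\infty$ if $f_A$ is not absolutely continuous with respect to $f_0$. The paper assumes $\bar D=\KL(f_0\|f_A)<\infty$ and $D<\infty$, which makes $f_0$ and $f_A$ mutually absolutely continuous. Hence each $L_{k,t}\in(0,\infty)$ almost surely and all $2^m$ laws are mutually equivalent on $\F_t^S$. With that equivalence, "almost surely" carries the same meaning under every member of the family, and the functional relations in Step 2 hold without qualification. I would state this reduction explicitly at the start of the proof.
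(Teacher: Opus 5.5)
Your proposal is correct and follows essentially the same route as the paper: sufficiency comes from Halmos--Savage factorization with $P_0^{(t)}$ as the dominating measure and densities $L_{c,t}=\prod_{k\in c}L_{k,t}$. Minimality holds because any sufficient $U$ makes each singleton density $dQ_{\{k\}}^{(t)}/dP_0^{(t)}=L_{k,t}$ a $\sigma(U)$-measurable function; your ``backup'' argument is exactly the paper's proof, and your primary route packages the same fact through the classical finite-family theorem. Your explicit remark that finite $D$ and $\bar D$ make all $2^m$ laws mutually equivalent supplies null-set bookkeeping that the paper leaves implicit in this proof.
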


\begin{proposition}[The ESPs are the exchangeable coordinates]\label{prop:espcoords}
Let $m=|S|$ and fix a horizon $t$.
\begin{enumerate}
\item[(a)] \emph{Symmetrization.}  Suppose the action is a scalar, the feasible class of decision
rules is convex, consists of integrable actions, and is closed under both the relabeling action of
$\mathfrak S_m$ and the Rao--Blackwell map $\delta\mapsto\E_{P_0}[\delta\mid T_t]$.  Suppose the design prior is
exchangeable, and the loss is convex in the action and invariant under relabeling.  If a Bayes
optimal rule in that class exists, then one can be chosen to be a symmetric measurable function of
$T_t$: Rao--Blackwellization followed by group averaging is feasible and has Bayes risk no
larger.  These closure conditions hold for scalar terminal e-values.  For labeled vector actions,
the corresponding symmetry requirement is equivariance rather than invariance.
\item[(b)] \emph{The invariant coordinates.}  Without any decision-theoretic hypothesis,
\[
\sigma\big(e_1(T_t),\dots,e_m(T_t)\big)\;=\;\sigma\big(T_t\big)^{\mathfrak S_m},
\]
the sub-$\sigma$-algebra of events invariant under the symmetric group $\mathfrak S_m$ of
permutations of the streams in $S$, so \emph{every} symmetric measurable function of $T_t$ factors
through $\big(e_1(T_t),\dots,e_m(T_t)\big)$, with statistical versions understood modulo null sets.
\end{enumerate}
\end{proposition}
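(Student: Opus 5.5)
Part (b) is the algebraic core, so I would prove it first and then use it in (a). The map $\Phi:T\mapsto(e_1(T),\dots,e_m(T))$ is continuous, hence Borel, and each $e_l$ is symmetric. This gives $\sigma(\Phi(T_t))\subseteq\sigma(T_t)^{\mathfrak S_m}$ immediately. For the reverse inclusion I would use the fundamental theorem of symmetric polynomials in its root-recovery form. The $e_l(T)$ are, up to sign, the coefficients of $\prod_{k\in S}(z-L_{k,t})$, so $\Phi$ determines the multiset $\{L_{k,t}\}$. Since the entries are real and nonnegative, the sorted vector $T_{(\cdot)}$ (order statistics) is therefore a function of $\Phi(T)$. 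I would make this measurable explicitly: the map from monic real-rooted polynomials to their ordered roots is continuous, because roots depend continuously on coefficients and sorting is continuous. So $T_{(\cdot)}=g(\Phi(T))$ for a Borel $g$ defined on the closed image set $\Phi([0,\infty)^m)$. Finally, any $\mathfrak S_m$-invariant event $\{T\in B\}$ with $B$ symmetric (up to null sets) equals $\{T_{(\cdot)}\in B\}$, hence $\{g(\Phi(T))\in B\}\in\sigma(\Phi(T))$. Doob–Dynkin then gives the factorization of every symmetric measurable function.

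For (a) I would start from a Bayes-optimal rule $\delta^\ast$ and apply two improvement steps.

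\textbf{Rao–Blackwell step.} Replace $\delta^\ast$ by $\delta_1=\E_{P_0}[\delta^\ast\mid T_t]$. By Lemma~\ref{lem:sufficient}, $T_t$ is sufficient for the whole family $\{P_0^{(t)}\}\cup\{Q_c^{(t)}\}$, so this conditional expectation is a common version under every $Q_c$. Conditional Jensen, using convexity of the loss in the action, gives pointwise-in-$c$ risk no larger, and hence Bayes risk no larger under any prior. Feasibility of $\delta_1$ is an assumed closure property.

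\textbf{Group-averaging step.} Set $\delta_2=\frac1{m!}\sum_{\rho}\delta_1\circ\rho$. Each $\delta_1\circ\rho$ is feasible by the closure assumption, and the average is feasible by convexity of the class. Its risk is handled by three facts:
\begin{itemize}
\item Relabeling invariance of the loss, together with Assumption~\ref{ass:exch}, shows that the risk of $\delta_1\circ\rho$ at configuration $c$ equals the risk of $\delta_1$ at $\rho^{-1}c$.
\item An exchangeable prior therefore assigns every $\delta_1\circ\rho$ the same Bayes risk.
\item Jensen's inequality over the uniform average bounds the Bayes risk of $\delta_2$ by that common value.
\end{itemize}
Since $\delta_1$ is a function of $T_t$, so is $\delta_2$, and $\delta_2$ is symmetric. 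By (b) it factors through the ESPs.

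For the remark on scalar terminal e-values, I would check the closure conditions directly. Writing the class as $\{\delta\ge0:\E_{P_0}\delta\le1\}$, it is convex. It is closed under conditional expectation given $T_t$ because the $P_0$-mean is preserved. It is closed under relabeling because the all-null law is permutation invariant. The equivariance comment for labeled actions needs no proof beyond noting that invariance is replaced by $\delta\circ\rho=\rho\delta$ in the averaging step.

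The main obstacle is measure-theoretic care in (b) and in the Rao–Blackwell step. Two points need attention: ensuring that $g$ is Borel on the image of $\Phi$, and that the version of the conditional expectation can be chosen symmetric, not merely symmetric almost surely. For the latter I would take any version $h(T_t)$ and replace it with its symmetrization $\frac1{m!}\sum_\rho h\circ\rho$. This is still a version, because $P_0$ and every $Q_c$-mixture under an exchangeable prior are permutation invariant, and it is exactly symmetric. This is why the statement adds "modulo null sets".
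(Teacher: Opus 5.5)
Your part (a) is the paper's argument. You Rao--Blackwellize with the common conditional expectation supplied by Lemma~\ref{lem:sufficient}, so by conditional Jensen the risk does not increase at any configuration. You then average over $\mathfrak S_m$, where convexity, relabeling invariance and the exchangeable prior control the Bayes risk but not each configuration's risk. Your check of the closure conditions for scalar e-values also matches.

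In (b) you take a different route. The paper restricts $e=(e_1,\dots,e_m)$ to the Borel cone $C=\{x_1\le\cdots\le x_m\}$ and gets injectivity from Vieta. It then invokes the Lusin--Souslin theorem to make $e|_C$ a Borel isomorphism onto $e(\mathbb R^m)$. You instead build a \emph{continuous} left inverse, the sorted-root map, on a closed image. This works, and with more elementary tools it gives slightly more: a closed image and a continuous inverse. The cleanest justification of your continuity claim is that $e|_C$ is continuous, injective and proper, because the roots are bounded in terms of the coefficients; such a map is a closed embedding. The paper's route avoids any root-perturbation argument at the cost of a descriptive-set-theoretic theorem, and it needs no nonnegativity.

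Two small corrections are needed.

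\emph{Symmetric representative.} You assume an invariant event can be written $\{T_t\in B\}$ with $B$ symmetric, hedged ``up to null sets''. The stated equality of $\sigma$-algebras is exact and needs no null sets, but this step must be supplied. If $E=T_t^{-1}(B)$ is invariant, then $T_t^{-1}(\rho(B))=E$ for every $\rho\in\mathfrak S_m$. Hence $B^\star=\bigcap_{\rho}\rho(B)$ is a symmetric Borel set with $E=T_t^{-1}(B^\star)$, exactly as in the paper.

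\emph{Symmetric version.} Your closing concern about choosing a symmetric version of $\E_{P_0}[\delta^\ast\mid T_t]$ is misplaced. When $\delta^\ast$ is not symmetric, that conditional expectation need not be a.s.\ symmetric. So the symmetrized $h$ is generally \emph{not} a version of it, contrary to what you claim. Nothing is lost, however. Your averaged rule is already $\delta_2=\bar h(T_t)$ with $\bar h=\frac1{m!}\sum_\rho h\circ\rho$, which is an exactly symmetric function for whichever version $h$ you start from.
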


Proposition~\ref{prop:espcoords} is a statement about $\sigma$-algebras, not about polynomial
rings, so it applies to a far wider class of functions than the fundamental theorem of symmetric
polynomials.  The proofs, by the Halmos--Savage factorization criterion and by a Borel-isomorphism
argument from Vieta's formulas, are in \suppref{Appendix}{app:sufficiency}{Appendix~D}.

\subsection{Optimality of the intersection e-process}\label{sec:opt}

The prior-matched configuration mixture maximizes expected log evidence at every horizon.
The comparison is with all e-processes for $H_S$ that use only the data in $S$, and the expectation
uses the specified prior over which hypotheses are false. Under exchangeability this is
the PEM statistic with level weights equal to the prior on the number of false hypotheses.  If instead
one asks for the largest expected log growth in the worst case over configurations, the answer is
the arithmetic mean.  Expected log growth is the criterion for anytime-valid evidence formalized
by~\citet{koolengrunwald2022}, growth-rate optimality (GRO) against a fixed alternative; its
average over a design prior is the Bayes criterion above, and its worst case over configurations
is growth-rate optimality in the worst case (GROW)~\citep{grunwald2024,ramdas2023}.  In the one-hypothesis problem it selects the likelihood-ratio
process monitored by Wald's sequential probability ratio test.  The worst-case result is
a theorem about an existing method, the arithmetic mean of
\citet{vovk2021} and \citet{hartoglei2025}: it is the exact worst-case optimum, so every interior
point of the family is a deliberate use of prior information, whose cost \S\ref{sec:regret} bounds.

At each horizon, the accumulated configuration mixture in~\eqref{eq:mixmart} is an exact likelihood
ratio, which permits comparison with all local e-processes.  Under Assumption~\ref{ass:indep},
\begin{equation}\label{eq:lr_identity}
E^w_{S,t}=\sum_c w_c L_{c,t}
=\frac{d\bar Q_w^{(t)}}{dP_0^{(t)}},
\qquad \bar Q_w^{(t)}=\sum_c w_c\,Q_c^{(t)},
\end{equation}
the likelihood ratio of the \emph{mixture-of-products} alternative against the null (ratios of
measures denote Radon--Nikodym derivatives throughout).  Since $\sum_cw_c=1$, $\bar Q_w^{(t)}$ is a
probability measure and $(E^w_{S,t})_t$ is a nonnegative mean-one martingale under the all-null law
of $S$.  For a horizon $t$ and design prior $\omega$, define on any local e-process $E$ the
cumulative expected-log objective
\[
\Psi_{t,\omega}(E):=\E_{\bar Q_\omega^{(t)}}[\log E_t]
=\sum_c\omega_c\,\E_{Q_c^{(t)}}[\log E_t].
\]
The optimality problem is posed in the local experiment, on e-processes adapted to $(\F^S_t)$.
If processes could use outside streams whose laws were correlated with the configuration inside $S$
under the design prior, those streams would carry information about it and change the Bayes
likelihood ratio.  Under any fixed configuration, which is the frequentist setting
of~\eqref{eq:fwerN}, the outside streams are independent of those in $S$ and carry no such
information; Proposition~\ref{prop:twosided} and Lemma~\ref{lem:ceiling} below therefore allow the
competing procedure to use all $K$ streams.

The generic point-null likelihood-ratio optimality step in the next theorem is the
stopping-rule-independent result of~\citet[Theorem~8]{koolengrunwald2022}.  The direct argument
establishes terminal uniqueness and identifies the configuration weights in the present product
experiment.

\begin{theorem}[Simultaneous finite-horizon Bayes optimality]\label{thm:bayes}
For every horizon $t\ge1$ and design prior $\omega=(\omega_c)_{c\in\mathcal C}$, among all
$(\F_s^S)_{s\ge0}$-adapted e-processes for the local null $P_0^S$, the objective
\[
\Psi_{t,\omega}(E)=\E_{\bar Q_\omega^{(t)}}[\log E_t]
\]
is maximized by the accumulated configuration mixture
\[
E_{S,t}^{\omega}=\sum_c\omega_c L_{c,t}
=\frac{d\bar Q_\omega^{(t)}}{dP_0^{(t)}}.
\]
One test martingale
$(E_{S,t}^{\omega})_{t\ge0}$ therefore attains the Bayes optimum at every horizon. The maximizing
terminal e-value is unique $P_0^{(t)}$-a.s.; under the standing assumption $D>0$, the product measures
$\{Q_c^{(t)}\}$ are linearly independent, so its representation in configuration weights is also
unique and equals $w=\omega$.
\end{theorem}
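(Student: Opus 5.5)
The plan is to fix the horizon $t$ and reduce the process-level problem to a terminal-value problem. Any $(\F^S_s)$-adapted e-process $E$ for $P_0^S$ has an $\F^S_t$-measurable, nonnegative terminal value $E_t$. Applying the e-process inequality at the bounded (deterministic) stopping time $\tau\equiv t$ gives $\E_{P_0^{(t)}}[E_t]\le1$. The objective $\Psi_{t,\omega}(E)$ depends only on $E_t$. It therefore suffices to show that, among all nonnegative $\F^S_t$-measurable $Z$ with $\E_{P_0^{(t)}}[Z]\le1$, the quantity $\E_{\bar Q^{(t)}_\omega}[\log Z]$ is maximized, uniquely, by $Z^\star=d\bar Q^{(t)}_\omega/dP^{(t)}_0$. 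By~\eqref{eq:lr_identity} under Assumption~\ref{ass:indep}, this is $E^\omega_{S,t}=\sum_c\omega_cL_{c,t}$.

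The terminal problem is a Gibbs-inequality argument. Write $q=Z^\star$, and note $\bar Q^{(t)}_\omega\ll P^{(t)}_0$ since each $Q_c^{(t)}$ has density $L_{c,t}$. Take any admissible $Z$ with $\E_{\bar Q}[\log Z]>-\infty$; otherwise there is nothing to prove. Then $Z>0$ $\bar Q$-a.s., and Jensen gives
\[
\E_{\bar Q}\Bigl[\log\frac{Z}{q}\Bigr]\le\log\E_{\bar Q}\Bigl[\frac{Z}{q}\Bigr]=\log\E_{P_0}\bigl[Z\,\mathbf 1\{q>0\}\bigr]\le0 .
\]
Strict concavity of $\log$ makes equality force $Z/q$ to be $\bar Q$-a.s.\ constant, and the constant must be $1$. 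Equality also forces $\E_{P_0}[Z\mathbf 1\{q=0\}]=0$, so $Z=0$ $P_0$-a.s.\ on $\{q=0\}$. Together these give $Z=q$ $P_0^{(t)}$-a.s., which is the terminal uniqueness claim.

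Two integrability points need care. First, $\E_{\bar Q}[\log q]=\KL(\bar Q\|P_0)$ must be finite. I would bound it by convexity of KL, $\KL(\bar Q\|P_0)\le\sum_c\omega_c\,\KL(Q_c\|P_0)=\sum_c\omega_c|c|\,tD<\infty$. Second, the split $\log Z=\log q+\log(Z/q)$ must be legitimate, which holds because $\log(Z/q)$ has an integrable positive part by Jensen. With the terminal problem settled, one test martingale $(E^\omega_{S,s})_s$ attains the optimum at every $t$, since it is the martingale of densities $d\bar Q^{(s)}_\omega/dP^{(s)}_0$ (restriction consistency). Simultaneity is then immediate.

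The main obstacle is uniqueness of the weight representation. Suppose $\sum_c w_cL_{c,t}=\sum_c\omega_cL_{c,t}$ $P_0$-a.s. I would show the functions $\{L_{c,t}\}_{c\subseteq S}$ are linearly independent in $L^1(P_0)$, including $c=\varnothing$ with $L_{\varnothing,t}=1$, which covers normalization. Because $D>0$, $f_A\ne f_0$, so $r=f_A/f_0$ is nonconstant under $f_0$ and $1$ and $r$ are linearly independent functions of one observation. For a single stream at horizon $t$, $1$ and $L_{k,t}=\prod_{s\le t}r_{k,s}$ are then independent, since $L_{k,t}$ is nonconstant. Under the product law, the functions $\prod_{k\in c}L_{k,t}$, $c\subseteq S$, are tensor products of $\{1,L_{k,t}\}$ across independent coordinates, and tensor products of linearly independent families are linearly independent. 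Concretely, I would induct on $|S|$: condition on the other streams, and use that a relation $A+BL_{k,t}=0$ with $A,B$ independent of stream $k$ forces $A=B=0$ a.s. Hence $w_c=\omega_c$ for all $c$, and under exchangeability this gives level weights $w_l=\beta_l/\binom ml$.
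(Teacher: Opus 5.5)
Your proposal is correct and follows the paper's proof essentially step for step: the budget $\E_{P_0^{(t)}}[E_t]\le1$ at the deterministic stopping time $t$, Jensen/Gibbs against $d\bar Q_\omega^{(t)}/dP_0^{(t)}$ with the convexity bound $\sum_c\omega_c|c|tD$ keeping the KL finite, the equality case for terminal uniqueness, restriction consistency for simultaneity, and tensor-product linear independence for the weights. The differences are cosmetic. You treat $\{q=0\}$ directly rather than invoking mutual absolute continuity, and you prove linear independence of the densities by conditioning and induction on $|S|$ rather than with the paper's biorthogonal test functions $u_0,u_1$.
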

The proof is in
\suppref{Appendix}{app:bayes}{Appendix~F}.

Exchangeability therefore turns the design prior into ESP weights: beliefs about the number
of active streams determine how much weight to place on each polynomial degree. Familiar
merging rules correspond to particular choices of these beliefs, as the next two corollaries show.

\begin{corollary}[Exchangeable ESP form]\label{cor:esp}
If the design prior is exchangeable, $\omega_c=\omega_{l_c}$, the maximizer is the PEM statistic
$E_{S,t}=\sum_{l}\omega_l\,e_l(\{L_{k,t}\}_{k\in S})$ of~\eqref{eq:mixmart}, with level prior
$\beta_l=\binom{m}{l}\omega_l$: the Bayes-optimal level prior equals the design prior over the number
of alternatives, and the map from prior to weights is one-to-one.
\end{corollary}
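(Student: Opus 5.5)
The plan is to specialize Theorem~\ref{thm:bayes} to an exchangeable prior and then regroup the configuration sum by cardinality. Write the configuration weights as $\omega_c=\omega_{l_c}$ with $\sum_c\omega_c=1$, equivalently $\sum_{l=1}^m\binom ml\omega_l=1$. By Theorem~\ref{thm:bayes} the maximizer of $\Psi_{t,\omega}$ is $E^\omega_{S,t}=\sum_c\omega_cL_{c,t}$. By Assumption~\ref{ass:indep}, $L_{c,t}=\prod_{k\in c}L_{k,t}$.

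The first step partitions $\mathcal C$ by $|c|=l$. This gives
\[
\sum_c\omega_{|c|}\prod_{k\in c}L_{k,t}=\sum_{l=1}^m\omega_l\sum_{|c|=l}\prod_{k\in c}L_{k,t}=\sum_{l=1}^m\omega_l\,e_l\bigl(\{L_{k,t}\}_{k\in S}\bigr),
\]
which is exactly~\eqref{eq:mixmart} with $w_l=\omega_l$. Setting $\beta_l=\binom ml\omega_l$, $\beta_l$ is the prior probability that exactly $l$ hypotheses in $S$ are false, because there are $\binom ml$ configurations of size $l$, each of mass $\omega_l$. Hence $\beta$ is a probability vector on $\{1,\dots,m\}$, and $w_l=\beta_l/\binom ml$ recovers~\eqref{eq:procstat}. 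This proves that the Bayes-optimal level prior is the design prior over the number of alternatives.

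For the one-to-one claim, the map $\beta\mapsto w$ with $w_l=\beta_l/\binom ml$ is a coordinatewise bijection between level priors and exchangeable weight vectors, so it is injective. Uniqueness at the level of the statistic is a separate matter. If two level priors $\beta\neq\beta'$ gave the same PEM statistic, then the configuration weights $\omega_c=\beta_{|c|}/\binom m{|c|}$ and $\omega'_c$ would differ while representing the same terminal e-value. The uniqueness part of Theorem~\ref{thm:bayes} rules this out: under $D>0$ the product measures $Q_c^{(t)}$ are linearly independent, so the weight representation is unique. Alternatively, one can argue directly that distinct linear combinations of $e_1,\dots,e_m$ of the $L_{k,t}$ differ on a set of positive $P_0^{(t)}$-probability. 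The only mildly delicate point is this uniqueness appeal; I would simply cite the linear-independence clause of Theorem~\ref{thm:bayes} rather than reprove it.
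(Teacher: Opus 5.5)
Your proposal is correct and follows the paper's own argument. You invoke Theorem~\ref{thm:bayes}, group the configuration mixture by cardinality to obtain the ESP form, and read off $\beta_l=\binom ml\omega_l$ as the total prior mass on size-$l$ configurations. Your appeal to the linear-independence clause of that theorem for uniqueness of the weight representation only makes explicit what the paper compresses into ``this determines the level weights uniquely.''
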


\begin{corollary}[The product, the arithmetic mean, and the balanced test are each optimal for one
prior]\label{cor:corners}
The product $\prod_{k\in S}L_{k,t}$ is the maximizer of $\Psi_{t,\omega}$ for the prior
concentrated on the all-alternatives configuration, the arithmetic mean $m^{-1}\sum_{k\in S}L_{k,t}$
for the prior uniform on the singleton configurations, and the balanced test
$\tfrac12\tilde e_1+\tfrac12\tilde e_m$ of \citet{bharti2026} for the two-atom level prior
$\beta_1=\beta_m=\tfrac12$.  At $m=1$ the three statistics coincide, the two atoms of the balanced
prior fall on the same configuration and combine to total mass one, and the common optimum is
$L_{1,t}$ itself.  By the uniqueness in Theorem~\ref{thm:bayes}, none of them is optimal
for any other design prior; and because the theorem ranges over all local e-processes, no
data-adaptive reweighting of the PEM statistic improves on the fixed prior-matched weights either
(the self-financing conditions under which predictable weights may change at all are in
\suppref{Appendix}{app:imposs}{Appendix~I}).
\end{corollary}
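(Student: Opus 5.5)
The plan is to read Corollary~\ref{cor:corners} as a specialization of Theorem~\ref{thm:bayes} and Corollary~\ref{cor:esp}. For each of the three statistics, I will exhibit the design prior $\omega$ whose accumulated mixture $\sum_c\omega_cL_{c,t}$ equals it. Optimality then follows from the theorem, and exclusivity from its uniqueness clause.

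\textbf{Matching priors.} For the product, take $\omega_S=1$ and $\omega_c=0$ otherwise, so $E^\omega_{S,t}=L_{S,t}=\prod_{k\in S}L_{k,t}=e_m$. For the arithmetic mean, take $\omega_{\{k\}}=1/m$ for each $k\in S$, giving $m^{-1}e_1=\tilde e_1$. For the balanced test, use the exchangeable level prior $\beta_1=\beta_m=\tfrac12$. By Corollary~\ref{cor:esp} this corresponds to $\omega_{\{k\}}=1/(2m)$ and $\omega_S=\tfrac12$, and the mixture is $\tfrac12\tilde e_1+\tfrac12\tilde e_m$. In each case $\sum_c\omega_c=1$, so Theorem~\ref{thm:bayes} applies at every horizon $t$.

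\textbf{The case $m=1$.} Here $\mathcal C=\{\{k\}\}$, so $e_1=e_m=L_{k,t}$. The two atoms of the balanced prior land on the same configuration and their masses $\tfrac12+\tfrac12$ add to one. All three priors are therefore the point mass on $\{k\}$, and the common optimum is $L_{k,t}$.

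\textbf{Exclusivity.} Suppose one of these statistics maximized $\Psi_{t,\omega'}$ for some other prior $\omega'$. By Theorem~\ref{thm:bayes} the maximizing terminal e-value is unique $P_0^{(t)}$-a.s. and equals $\sum_c\omega'_cL_{c,t}$. So $\sum_c(\omega_c-\omega'_c)L_{c,t}=0$ a.s., which means $\sum_c(\omega_c-\omega'_c)Q_c^{(t)}=0$ as measures. Linear independence of the product measures $\{Q_c^{(t)}\}$, guaranteed under $D>0$ by the theorem, then forces $\omega'=\omega$. Strictly, the theorem gives this for each fixed $t\ge1$; the statement ``optimal for no other prior'' is meant at any single horizon, which is the strongest reading.

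\textbf{Data-adaptive reweighting.} Any predictably reweighted PEM statistic that remains a valid e-process is an $(\F^S_s)$-adapted e-process for $P_0^S$. It therefore lies in the class over which Theorem~\ref{thm:bayes} maximizes, so its $\Psi_{t,\omega}$ is at most that of the fixed prior-matched mixture. By uniqueness, it achieves equality only if its terminal value coincides a.s. with $E^\omega_{S,t}$. I only need to note that this membership holds, and I defer the self-financing conditions to the appendix as the statement does.

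\textbf{Main obstacle.} The only delicate step is linear independence. I treat it as given by Theorem~\ref{thm:bayes}, but if pressed I would argue it as follows. Because $D>0$, $f_A\neq f_0$, so some set $B$ has $f_A(B)\ne f_0(B)$. Testing the measures on product cylinder events built from $B$ in the first coordinate of each stream, and using the product structure, reduces the question to independence of the functions $\prod_{k\in c}x_k$, which are distinct multilinear monomials.
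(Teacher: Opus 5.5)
Your proposal is correct and follows the paper's proof. Like the paper, you exhibit the prior whose accumulated mixture equals each statistic, then invoke Theorem~\ref{thm:bayes} for optimality and its uniqueness and linear-independence clause for exclusivity, and you obtain the reweighting remark from membership in the class of local e-processes. Your optional cylinder-event sketch of linear independence also works: the laws of the indicators $\mathbf 1\{X_{k,1}\in B\}$ form a Kronecker product of invertible $2\times2$ matrices. The paper instead isolates coefficients with dual test functions $u_0,u_1$.
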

Proofs of both corollaries are in
\suppref{Appendix}{app:bayes}{Appendix~F}.

\begin{theorem}[Finite-horizon minimax representation]\label{thm:grow}
At every horizon $t\ge1$, among all $(\F_s^S)_{s\ge0}$-adapted e-processes $E$ for the local null $P_0^S$,
\[
\sup_{E}\ \min_{c\in\mathcal C}\E_{Q_c^{(t)}}[\log E_t]
=\min_{\lambda\in\Delta(\mathcal C)}\KL(\bar Q_\lambda^{(t)}\|P_0^{(t)}),
\]
where $\Delta(\mathcal C)$ is the set of probability vectors on $\mathcal C$, attained by the
accumulated mixture with any
$w=\lambda^\star_t\in\argmin_{\lambda\in\Delta(\mathcal C)}
\KL(\bar Q_\lambda^{(t)}\|P_0^{(t)})$. This is the $Q\|P_0$ information projection over
$\mathrm{conv}\{Q_c^{(t)}\}$; projection terminology varies with the direction of KL\@.
\end{theorem}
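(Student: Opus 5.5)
We prove the two inequalities ``$\le$'' and ``$\ge$'' separately and then show that a minimizer exists and attains the value. The tools are Theorem~\ref{thm:bayes} and a minimax (saddle-point) argument on the finite simplex $\Delta(\mathcal C)$.

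\emph{Upper bound.} Fix any local e-process $E$ and any $\lambda\in\Delta(\mathcal C)$. Because a minimum is bounded by any average,
\[
\min_{c\in\mathcal C}\E_{Q_c^{(t)}}[\log E_t]\le\sum_c\lambda_c\E_{Q_c^{(t)}}[\log E_t]=\Psi_{t,\lambda}(E).
\]
Theorem~\ref{thm:bayes} with design prior $\lambda$ bounds the right-hand side by $\Psi_{t,\lambda}(E^\lambda_{S})=\E_{\bar Q_\lambda^{(t)}}[\log d\bar Q_\lambda^{(t)}/dP_0^{(t)}]=\KL(\bar Q_\lambda^{(t)}\|P_0^{(t)})$. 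Taking the supremum over $E$ and then the infimum over $\lambda$ gives ``$\le$''. If $\Psi$ is $-\infty$ or not well defined for some $E$, we use the usual convention that $\E[\log E_t]$ is taken in the extended sense. Finiteness of $D$ ensures every $\KL(Q_c^{(t)}\|P_0^{(t)})=l_ctD$ is finite, so all quantities on the right are finite.

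\emph{Existence of a minimizer.} The map $\lambda\mapsto\KL(\bar Q_\lambda^{(t)}\|P_0^{(t)})$ is convex, since KL is jointly convex and $\lambda\mapsto\bar Q_\lambda^{(t)}$ is affine. It is also finite on the compact simplex, bounded above by $\max_c tl_cD$. For continuity, we write $\KL(\bar Q_\lambda\|P_0)=\E_{P_0}[\phi(E^\lambda_{S,t})]$ with $\phi(x)=x\log x$. Since $\phi(x)\le x\log^+x$ and $E^\lambda_{S,t}\le\max_cL_{c,t}$, the integrand has the integrable envelope $\sum_cL_{c,t}\log^+L_{c,t}$; it is integrable under $P_0$ because $\E_{Q_c}[\log^+L_{c,t}]<\infty$ when the log likelihood ratio has finite mean, and from below $\phi\ge-1/e$. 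Dominated convergence then gives continuity, so a minimizer $\lambda^\star_t$ exists.

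\emph{Lower bound (the main step).} We show that $E^{\lambda^\star}$ achieves $\min_c\E_{Q_c^{(t)}}[\log E^{\lambda^\star}_{S,t}]\ge\KL(\bar Q_{\lambda^\star}\|P_0)$, which yields equality and attainment. This is the first-order optimality condition of the convex minimization. For any $c$ and $\epsilon\in(0,1)$, set $\lambda_\epsilon=(1-\epsilon)\lambda^\star+\epsilon\delta_c$ and $f(\epsilon)=\E_{P_0}[\phi(E^{\lambda_\epsilon})]$. Then $f(\epsilon)\ge f(0)$, and by convexity of $\phi$ the difference quotient $(f(\epsilon)-f(0))/\epsilon$ decreases monotonically to
\[
f'(0^+)=\E_{P_0}\bigl[(1+\log E^{\lambda^\star})(L_{c,t}-E^{\lambda^\star})\bigr]=\E_{Q_c}[\log E^{\lambda^\star}]-\KL(\bar Q_{\lambda^\star}\|P_0).
\]
Here we use $\E_{P_0}[L_{c,t}]=\E_{P_0}[E^{\lambda^\star}]=1$, and we justify exchanging limit and expectation by monotone convergence of the convex difference quotients. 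The quotients are bounded above at $\epsilon=1$ by the integrable $\phi(L_{c,t})-\phi(E^{\lambda^\star})$, so the limit is not $+\infty$ and may a priori be $-\infty$. Minimality forces $f'(0^+)\ge0$, hence $\E_{Q_c}[\log E^{\lambda^\star}_{S,t}]\ge\KL(\bar Q_{\lambda^\star}\|P_0)$ for every $c$. The only delicate point, which I expect to be the main obstacle, is the integrability bookkeeping when $E^{\lambda^\star}$ vanishes on $\{L_{c,t}>0\}$, i.e.\ when $\lambda^\star$ puts no mass on configurations whose support $Q_c$ charges. Under the standing assumptions $f_A\ll f_0$ essentially, and if $E^{\lambda^\star}=0$ with positive $Q_c$-probability then $f'(0^+)=-\infty$, contradicting minimality; so the case is excluded.

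Finally, $E^{\lambda^\star}$ is a test martingale, hence an admissible e-process, which completes attainment. The identification as the $Q\|P_0$ information projection onto $\mathrm{conv}\{Q_c^{(t)}\}$ is then just a restatement of the minimization.
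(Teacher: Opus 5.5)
Your proof is correct, but it takes a different route from the paper's.

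\textbf{The paper's route.} It applies Sion's minimax theorem to the concave--linear payoff $\sum_c\lambda_cG_c(w)$, where $G_c(w)=\E_{Q_c^{(t)}}[\log E^w_{S,t}]$. It first checks concavity and continuity of $G_c$ in $w$ via the envelope $\max_d|\log L_{d,t}|$. It then uses Theorem~\ref{thm:bayes} to identify the inner maximum over $w$ as $\KL(\bar Q_\lambda^{(t)}\|P_0^{(t)})$. Only after that does it invoke the first-order condition $G_c(\lambda^\star_t)\ge\KL(\bar Q^{(t)}_{\lambda^\star_t}\|P_0^{(t)})$ to obtain attainment at $w=\lambda^\star_t$, and it asserts that condition without deriving it. Its extension from mixtures to all local e-processes is the same step you use for ``$\le$'': a minimum is at most an average, followed by the Bayes bound.

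\textbf{Your route.} You pair that weak-duality bound directly with an explicit proof of the first-order condition. The proof uses monotone convergence of the convex chord slopes of $x\log x$, with the $\epsilon=1$ quotient as an integrable upper bound. Because this condition alone closes the duality gap, Sion's theorem and the concavity and continuity of $G_c$ in $w$ are no longer needed. You also prove that a minimizer exists, by dominated convergence, which the paper leaves implicit.

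\textbf{What each buys.} The paper's argument makes the saddle-point structure of the game between $w$ and $\lambda$ explicit. Yours is more elementary and self-contained, and it actually supplies the derivation that the paper's parenthetical remark on attainment relies on.

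\textbf{One simplification.} The case you flag as delicate cannot occur. Finite $D$ and $\bar D$ make $f_0$ and $f_A$ mutually absolutely continuous, so every $L_{d,t}>0$ almost surely and every mixture $E^\lambda_{S,t}$ is strictly positive. Your $-\infty$-derivative argument for that case is valid, but it is not needed.
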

Maximizing worst-case expected log evidence therefore reduces to finding the configuration
mixture closest to the null in KL divergence. The likelihood ratio for that mixture attains
the worst-case optimum. The proof is in
\suppref{Appendix}{app:grow}{Appendix~G}.

\begin{proposition}[The arithmetic mean is the horizon-uniform GROW solution]\label{prop:growmean}
Under Assumptions~\ref{ass:exch}--\ref{ass:indep}, for every horizon $t\ge1$ the minimizing mixture is
the uniform-singleton mixture
\[
\bar Q_{\lambda^\star}^{(t)}=m^{-1}\sum_{k\in S}Q_{\{k\}}^{(t)}.
\]
Thus the GROW-optimal local intersection e-process is, at every horizon,
\[
\bar e_1(t)=m^{-1}\sum_{k\in S}L_{k,t}.
\]
The minimizing $\lambda$ is the uniform singleton distribution; under $D>0$ its mixture
representation is unique.
\end{proposition}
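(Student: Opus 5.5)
By Theorem~\ref{thm:grow}, it suffices to show that the uniform-singleton mixture minimizes $\lambda\mapsto\KL(\bar Q_\lambda^{(t)}\|P_0^{(t)})$ over $\Delta(\mathcal C)$, and that it is the unique minimizer. The function is convex in $\lambda$, since KL is convex in its first argument and $\lambda\mapsto\bar Q_\lambda^{(t)}$ is affine. Under Assumption~\ref{ass:exch} it is also invariant under relabeling the streams in $S$. Averaging any minimizer over $\mathfrak S_m$ therefore gives an exchangeable minimizer, so we may restrict to exchangeable $\lambda$, parametrized by a level prior $\beta\in\Delta(\{1,\dots,m\})$. The objective then becomes $\phi(\beta)=\KL(\sum_l\beta_l\bar Q_l\|P_0)$, where $\bar Q_l$ is the uniform mixture over the size-$l$ configurations. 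The task reduces to showing that $\phi$ is minimized at $\beta=\delta_1$.

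The cleanest route is to verify the first-order (KKT) optimality condition at $\lambda^\star=$ uniform singletons, written directly on $\Delta(\mathcal C)$. Let $E^\star=\bar e_1(t)=m^{-1}\sum_kL_{k,t}$. The directional derivative of $\KL(\bar Q_\lambda\|P_0)$ toward $Q_c$ is $\E_{Q_c}[\log E^\star]-\E_{\bar Q_{\lambda^\star}}[\log E^\star]$. Optimality of $\lambda^\star$ is equivalent to
\[
\E_{Q_c^{(t)}}[\log E^\star]\ \ge\ \E_{Q_{\{1\}}^{(t)}}[\log E^\star]\qquad\text{for every nonempty }c\subseteq S.
\]
By exchangeability the right-hand side is the same for every singleton. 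This inequality is the heart of the proof: adding active streams should only increase the expected log of the arithmetic mean. To prove it, I will compare $c$ with a singleton $\{k\}\subseteq c$ through a coupling. Under $Q_c$, the streams in $c\setminus\{k\}$ are i.i.d.\ $f_A$; under $Q_{\{k\}}$ they are i.i.d.\ $f_0$. Since $\log E^\star$ is increasing and concave in each coordinate $L_j$, it suffices, one stream at a time with the other coordinates held fixed, to show that $\E_{f_A}[g(L_{j,t})]\ge\E_{f_0}[g(L_{j,t})]$ for the increasing function $g(x)=\log(a+x/m)$.

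This last inequality follows because $L_{j,t}$ is the likelihood ratio $dQ/dP$ of the stream-$j$ data. Then $\E_Q[g(L)]-\E_P[g(L)]=\E_P[(L-1)g(L)]=\mathrm{Cov}_P(L,g(L))\ge0$, since $\E_P[L]=1$ and $L$ and $g(L)$ are comonotone (Chebyshev's association inequality). Swapping the streams of $c\setminus\{k\}$ one at a time from $f_0$ to $f_A$ and applying Fubini at each swap yields the displayed condition. The condition holds with equality for singletons, so convexity makes $\lambda^\star$ a global minimizer. Theorem~\ref{thm:grow} then identifies $\bar e_1(t)$ as GROW-optimal at every $t$.

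\textbf{Uniqueness} is the main obstacle. First, the minimizing mixture $\bar Q^\star$ is unique because KL is strictly convex in its first argument on measures with a density. By the linear independence of the $Q_c^{(t)}$ under $D>0$ (Theorem~\ref{thm:bayes}), the minimizing $\lambda$ is then also unique. The remaining gap is that strict convexity of KL holds only when the two mixtures differ as measures; linear independence ensures that distinct $\lambda$ give distinct measures, which closes this. I would first check carefully that KL is finite for all $\lambda$, which follows from $D<\infty$, so that the strict convexity argument applies on the whole simplex.
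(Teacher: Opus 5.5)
Your proposal is correct and takes essentially the paper's route: you verify the first-order condition of the convex problem $\min_\lambda \KL(\bar Q_\lambda^{(t)}\|P_0^{(t)})$ at the uniform-singleton mixture by showing that switching streams from $f_0$ to $f_A$ can only raise $\E[\log\bar e_1(t)]$, and you get uniqueness from strict convexity of KL together with linear independence of the $Q_c^{(t)}$. The one difference is in how that monotonicity is shown. You use $\E_{f_A}g(L)-\E_{f_0}g(L)=\mathrm{Cov}_{f_0}(L,g(L))\ge0$ for the accumulated ratio, one stream at a time with Fubini; the paper instead size-biases each per-step ratio and couples the accumulated sums. Your exchangeable-reduction paragraph is never used afterward and can be dropped.
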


Singleton configurations are the least favorable, so moving into the interior of the ESP simplex is
a deliberate use of configuration information whose cost is bounded: Theorem~\ref{thm:bayes}
selects the Bayes-matched interior point, and Theorem~\ref{thm:regret} guarantees that any
full-support point lags behind the configuration-aware oracle's log evidence by at most
$\log(1/w_c)$ on every path.  Symmetry equalizes growth on singleton alternatives, while a size-bias
coupling in \suppref{Appendix}{app:grow}{Appendix~G} shows that every larger
configuration has at least that growth under the arithmetic mean, so higher-order ESP terms cannot
raise the worst-case value.  What is specific here is that the information projection is
closed-form and horizon-free in the exchangeable product experiment; the projection minimizes
$\KL(Q\|P_0)$ for the simple null, whereas the reverse-information-projection literature for
composite nulls uses the opposite optimization role~\citep{grunwald2024,larsson2024}.

\subsection{From evidence to rejection time: the rate gain from pooling}\label{sec:regret}

Pooling attains the fastest first-order local rejection delay as the error level tends to zero.
With the dimension, configuration and weights fixed, a full-support PEM statistic has log-growth
rate $lD$ when $l$ of the $m$ streams in $S$ are active. No valid procedure, even using all $K$
streams, improves its first-order rejection delay in this small-level limit.
The route is a pathwise oracle inequality, and it is one-sided: on every sample path the PEM
statistic's log evidence \emph{falls short}, by at most $\log(1/w_l^{(m)})$, of the log evidence
an oracle that knew the true configuration would hold. The PEM statistic therefore rejects no later
than that oracle run at a deflated level.  The mixture can exceed the unweighted oracle likelihood ratio, but the inequality
does not assert that it does so.  The inequality holds with
no expectation and no asymptotics, and its price splits into two additive parts: one for the prior
mass on the true cardinality, which the analyst controls, and one for spreading that mass over the
$\binom ml$ subsets of that size.  It is the reason full support is kept: a prior that gives every
cardinality positive mass pays a bounded constant for being wrong, whereas one that omits the true
cardinality can lose the exponential rate.

Fix a nonempty $S$, write $m=|S|$, and suppose
$w_l^{(m)}>0$ for every $l$. Let $\beta_l^{(m)}=w_l^{(m)}\binom{m}{l}$, so
$\sum_l\beta_l^{(m)}=1$. Define the hitting times
\[
\tau_S^E(a)=\inf\{t:E_{S,t}\ge1/a\},\qquad
\tau_c^L(a)=\inf\{t:L_{c,t}\ge1/a\},
\]
with the infimum of the empty set equal to infinity.

\begin{theorem}[Pathwise regret]\label{thm:regret}
For every $t$, every sample path, and every nonempty $c\subseteq S$ with $|c|=l$, using
$\log0=-\infty$,
\[
\log E_{S,t}\ \ge\ \log L_{c,t}-\log\tfrac1{w_l^{(m)}},
\qquad
\log\tfrac1{w_l^{(m)}}=\underbrace{\log\tfrac1{\beta_l^{(m)}}}_{\text{level-prior cost}}+
\underbrace{\log\tbinom{m}{l}}_{\text{which-subset cost}},
\]
and consequently $\tau_S^E(\alpha)\le\tau_c^L(\alpha w_l^{(m)})$ pathwise: the PEM statistic at level
$\alpha$ rejects no later than the oracle that knows $c$, run on $L_{c,t}$ at the deflated level
$\alpha w_l^{(m)}$.
\end{theorem}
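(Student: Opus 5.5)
The plan is to drop all but one term of the nonnegative sum defining the PEM statistic, then read off the hitting-time comparison from the resulting pathwise inequality.

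\textbf{Step 1: single-term bound.} By~\eqref{eq:mixmart},
\[
E_{S,t}=\sum_{c'\in\mathcal C}w^{(m)}_{|c'|}L_{c',t}.
\]
Every summand is a product of nonnegative weights and nonnegative likelihood ratios $L_{k,t}=\prod_{s\le t}r_{k,s}\ge0$. Discarding all summands except the one indexed by the given $c$ gives
\[
E_{S,t}\ge w_l^{(m)}L_{c,t}
\]
on every path and at every $t$, with no expectation or asymptotics involved.

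\textbf{Step 2: logarithms and the cost split.} Take logarithms with the convention $\log0=-\infty$. Since $w_l^{(m)}>0$ by the full-support assumption, the map $x\mapsto\log(w_l^{(m)}x)$ is monotone on $[0,\infty)$. If $L_{c,t}=0$, the right-hand side is $-\infty$ and the inequality holds trivially; the case $E_{S,t}=0$ forces $L_{c,t}=0$ and is covered by the same remark. The decomposition of the constant is the algebraic identity
\[
w_l^{(m)}=\frac{\beta_l^{(m)}}{\binom ml},
\]
so that $\log(1/w_l^{(m)})=\log(1/\beta_l^{(m)})+\log\binom ml$.

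\textbf{Step 3: hitting times.} If $\tau_c^L(\alpha w_l^{(m)})=\infty$, the inequality $\tau_S^E(\alpha)\le\tau_c^L(\alpha w_l^{(m)})$ is trivial. Otherwise, at $t^*=\tau_c^L(\alpha w_l^{(m)})$ we have
\[
L_{c,t^*}\ge\frac{1}{\alpha w_l^{(m)}},
\]
and Step 1 then gives
\[
E_{S,t^*}\ge w_l^{(m)}L_{c,t^*}\ge\frac1\alpha.
\]
Hence $t^*$ belongs to the set whose infimum defines $\tau_S^E(\alpha)$, which yields $\tau_S^E(\alpha)\le t^*$.

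No step is a real obstacle. The only points needing care are the $\log0$ and infinite-infimum conventions and the observation that full support, $w_l^{(m)}>0$, is exactly what keeps the deflated level $\alpha w_l^{(m)}$ positive and the bound nontrivial.
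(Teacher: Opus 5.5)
Your proposal is correct and follows the paper's own argument: the paper also drops the nonnegative summands to get $E_{S,t}\ge w_l^{(m)}e_l(\{L_{k,t}\}_{k\in S})\ge w_l^{(m)}L_{c,t}$ and then takes logarithms and applies the hitting-time definitions. Your explicit treatment of the $\log 0$ case, the infinite-infimum case and the role of $w_l^{(m)}>0$ only spells out details the paper leaves implicit.
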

The proof is in
\suppref{Appendix}{app:regretproof}{Appendix~H.1}.

The bound separates the cost of choosing a cardinality from the cost of not knowing its active
subset. For a fixed level mass, uniform allocation minimizes the worst which-subset cost within the mixture class.
Across all e-processes, $\log\binom ml$ is also unavoidable for a deterministic pathwise shortfall
that must hold uniformly over every horizon and every size-$l$ configuration; this is not a
finite-horizon or expected-delay lower bound. The geometric weights minimize the drift-normalized
worst-case surcharge $\max_l\log(1/w_l)/(lD)$, rather than exact expected delay.

Full support also makes the local log-regret bound asymptotically sharp: under a fixed
configuration $c$ with $w_c>0$, $E_{S,t}/L_{c,t}\to w_c$ and
$t^{-1}\log E_{S,t}\to lD$ almost surely. Omitting the true configuration can change that rate.
\suppref{Appendix}{app:regretproof}{Appendix~H.1} gives these derivations.
Local log regret does not determine elementary closure delay, where the singleton eventually
binds under full density (Proposition~\ref{prop:sandwich}).

The next result pairs the oracle inequality, an upper bound on the PEM statistic's rejection time, with a
lower bound on the rejection time of \emph{any} valid procedure, by the change-of-measure argument that underlies Wald's lower bound on
the expected sample size of a sequential test~\citep{wald1945}, and evaluates the corners.  Fix a configuration $\vec h$ of all $K$
streams whose restriction to $S$ is $c$ with $|c|=l\ge1$; the streams outside $S$ are arbitrary.
Write $H_S$ also for the set of configurations that are null on all of $S$, and put
$t_\alpha=\log(1/\alpha)/(lD)$ for the first-order local delay benchmark.

\begin{remark}[The asymptotic regime, stated once]\label{rem:regime}
Every first-order rate statement in this paper is a statement as $\alpha\to0$ with the dimension $K$,
the intersection $S$, the configuration $\vec h$ and the weight vectors $w^{(m)}$ \emph{held fixed};
only the level, and with it the horizon, varies.  Two rejection times must be distinguished.  The
\emph{uncapped} time $\tau_S^E(\alpha)=\inf\{t:E_{S,t}\ge1/\alpha\}$ monitors the flat Ville
threshold over an unbounded horizon and is what the asymptotics below describe.  The \emph{deployed}
time is $\inf\{t\le N:E_{S,t}\ge b_m\}$, with infinity on nonrejection, for a deterministic feasible
boundary $b_m\le1/\alpha$.  The same statements hold conditional on an independent calibration
whose chosen boundary satisfies this local validity premise.  The two times agree asymptotically
under the sufficient growing-horizon condition~\eqref{eq:growinghorizon} of Corollary~\ref{cor:rate}.
At a \emph{fixed} $N$ this rejection-time equivalence cannot hold: a rejection occurs by $N$ or its
time is infinity, whereas $t_\alpha\to\infty$.
Finally, $lD$ is information per synchronous round, a round supplying one observation from each of
the $m$ monitored streams, with the information supplied by its $l$ active streams; it is not a
rate per individual observation.  That is
why a rate of $lD$ per round is consistent with the single-observation ceiling of
\S\ref{sec:ceiling}, and it is the sense in which pooling saves rounds and therefore observations.
\end{remark}

\begin{proposition}[First-order rejection delay for an intersection]\label{prop:twosided}
Let $S$ have $m$ streams, $l\ge1$ of them active, and let $t_\alpha=\log(1/\alpha)/(lD)$, in the
regime of Remark~\ref{rem:regime}.
\begin{enumerate}
\item[(i)] \emph{Lower bound, any procedure.}  Let $(\sigma_\alpha)$ be $\F_t$-stopping times, one for
each level, and let $N_\alpha\ge t_\alpha$ be horizons such that
$\Pr_{\vec h'}(\sigma_\alpha\le N_\alpha)\le\alpha$ for every $\vec h'\in H_S$.  Then for every
$\epsilon\in(0,1)$,
\[
\Pr_{\vec h}\bigl(\sigma_\alpha\le(1-\epsilon)\,t_\alpha\bigr)\longrightarrow0
\qquad\text{as }\alpha\to0 .
\]
\item[(ii)] \emph{Upper bound, the uncapped PEM statistic.}  If $w_l^{(m)}>0$, then
$\limsup_{\alpha\to0}\tau_S^E(\alpha)/t_\alpha\le1$ holds $P_{\vec h}$-almost surely; hence
$\tau_S^E(\alpha)/t_\alpha\to1$ in $P_{\vec h}$-probability, and the uncapped PEM statistic rejects
$H_S$ at the fastest first-order rate any valid procedure can achieve.
\item[(iii)] \emph{The corners.}  The arithmetic mean has $t^{-1}\log\bar e_1(t)\to D$ almost
surely, so its uncapped rejection delay is $\log(1/\alpha)/D$ to first order at every $l$: a factor
$l$ slower than the bound in (i) whenever $l\ge2$.  The product has
$t^{-1}\log\prod_{k\in S}L_{k,t}\to lD-(m-l)\bar D$ almost surely.  When that drift is strictly
negative its supremum is almost surely finite, so its uncapped crossing probability tends to zero as
$\alpha\to0$; at a fixed $\alpha$ that probability is generally positive, not zero.  When the drift
is strictly positive and $l<m$, its first-order delay exceeds $t_\alpha$ by the factor
$lD/\{lD-(m-l)\bar D\}>1$.  In the nondegenerate zero-drift case $lD=(m-l)\bar D$ the log evidence is
a nondegenerate integrable mean-zero random walk, so it oscillates and crosses any fixed upper level with
probability one, but with no positive linear growth rate and therefore no first-order delay of the
form $\log(1/\alpha)/\text{rate}$.  Each corner attains the rate $lD$ at exactly one value of $l$:
the arithmetic mean at $l=1$, and the product at $l=m$.  This identifies the single active count at
which each corner is rate optimal; it is not a claim that either corner dominates or is dominated at
every other $l$ in finite samples.
\end{enumerate}
\end{proposition}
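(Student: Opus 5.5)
We prove the three parts separately, in the order (ii), (iii), (i). The upper bounds reduce to strong laws for random walks. The lower bound is a Wald-type change-of-measure argument.

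\emph{Part (ii).} Apply Theorem~\ref{thm:regret} with the true local configuration $c$, $|c|=l$. It gives pathwise $\tau_S^E(\alpha)\le\tau_c^L(\alpha w_l^{(m)})$. Under $P_{\vec h}$, $\log L_{c,t}$ is a sum of $l$ independent random walks, each with drift $D$, so $t^{-1}\log L_{c,t}\to lD$ almost surely. A standard renewal argument converts this to hitting times: $\tau_c^L(a)/\{\log(1/a)/(lD)\}\to1$ almost surely as $a\to0$. Since $\log(1/(\alpha w_l^{(m)}))=\log(1/\alpha)+O(1)$, we obtain $\limsup\tau_S^E(\alpha)/t_\alpha\le1$ almost surely. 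Combining this with (i), applied to $\sigma_\alpha=\tau_S^E(\alpha)$ with $N_\alpha=\infty$ (valid by Ville's inequality under every $\vec h'\in H_S$, as in Proposition~\ref{prop:fwer}), yields convergence in probability.

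\emph{Part (iii).} This is direct. For the arithmetic mean, $\max_k L_{k,t}/m\le\bar e_1(t)\le\max_k L_{k,t}$, and $t^{-1}\log\max_kL_{k,t}\to D$ almost surely, since active streams have drift $D$ and null streams drift $-\bar D<D$. For the product, the log is a random walk with the stated drift. If the drift is negative, the supremum is finite almost surely, so $\Pr(\sup\ge\log(1/\alpha))\to0$. If the drift is positive, the renewal argument above applies. In the zero-drift case, recurrence of a nondegenerate mean-zero walk gives $\limsup=+\infty$.

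\emph{Part (i), the main obstacle.} Let $\vec h'$ be $\vec h$ with the streams in $c$ switched to null; it lies in $H_S$. The likelihood ratio $d P_{\vec h}/dP_{\vec h'}$ on $\F_t$ is $L_{c,t}$, since outside streams have identical laws and cancel. Put $n=\lfloor(1-\epsilon)t_\alpha\rfloor\le N_\alpha$ and $A=\{\sigma_\alpha\le n\}$. Change measure on $\F_{n}$ (the event $A$ is $\F_{\sigma_\alpha\wedge n}\subseteq\F_n$-measurable), and split on $B=\{\max_{s\le n}\log L_{c,s}\le(1-\epsilon/2)\log(1/\alpha)\}$:
\[
\Pr_{\vec h}(A)\le\Pr_{\vec h}(A\cap B)+\Pr_{\vec h}(B^c)
\le \alpha^{-(1-\epsilon/2)}\Pr_{\vec h'}(A)+\Pr_{\vec h}(B^c).
\]
Here we use $L_{c,\sigma_\alpha\wedge n}\le\alpha^{-(1-\epsilon/2)}$ on $B$, via optional stopping for the density at the stopping time. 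The first term is at most $\alpha^{\epsilon/2}\to0$ by the level constraint. The second term is the probability that a random walk with drift $lD$ exceeds $(1-\epsilon/2)\,lD\,t_\alpha$ by time $(1-\epsilon)t_\alpha$. This tends to zero because $\max_{s\le n}s^{-1}\log L_{c,s}$ is eventually below $lD(1+\delta)$: use the strong law together with the fact that early times contribute a maximum that is $o(t_\alpha)$. Only first moments are needed, since $D<\infty$. The care needed is precisely in this uniform-in-time strong-law step and in the measurability of the change of measure at a stopping time truncated at $n$. Both are standard.
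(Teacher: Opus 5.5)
Your proof is correct and follows the paper's argument essentially step for step. Part (ii) uses the pathwise regret bound plus the random-walk hitting-time law. Part (iii) uses bracketing by the maximal walk and the three drift cases of the product. Part (i) uses a change of measure to the configuration with $c$ switched to null, combined with the strong law for the running maximum of $\log L_{c,s}$. The paper only parametrizes the split by $(1+\eta)lDu_\alpha$ instead of $(1-\epsilon/2)\log(1/\alpha)$.

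Two cosmetic repairs remain.

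In (i), $A\cap B$ need not lie in $\F_{\sigma_\alpha\wedge n}$. So either change measure on $\F_n$ with $L_{c,n}$, which is bounded on $B$ by the running maximum, or replace $B$ by the $\F_{\sigma_\alpha\wedge n}$-measurable event $\{L_{c,\sigma_\alpha\wedge n}\le\alpha^{-(1-\epsilon/2)}\}\supseteq B$, as the paper does.

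In (ii), take a finite horizon such as $N_\alpha=\lceil t_\alpha\rceil$ rather than $\infty$. Ville's inequality covers that horizon equally well.
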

The proof is in
\suppref{Appendix}{app:localrateproofs}{Appendix~H.2}.

\begin{corollary}[The rate gain from multiple streams]\label{cor:rate}
For an intersection of $m$ streams of which $l$ are active, pooling the streams raises the
attainable exponential rate of evidence from $D$ to $lD$ per synchronous round and divides the
first-order rejection delay by $l$.  The PEM statistic with any full-support prior, in particular
PEM closure with either default, attains the rate $lD$ at every $l=1,\dots,m$ simultaneously, with no
knowledge of $l$.  No valid procedure, whether or not it uses the streams outside $S$, attains a
faster rate.  The arithmetic mean attains $lD$ only at $l=1$ and the product only at $l=m$.
Moreover, if the horizons satisfy
\begin{equation}\label{eq:growinghorizon}
\liminf_{\alpha\to0}\ \frac{N_\alpha}{\log(1/\alpha)/(lD)}\ >\ 1 ,
\end{equation}
then $\Pr_{\vec h}(\tau_S^E(\alpha)\le N_\alpha)\to1$, and because
the deterministic feasible boundary satisfies $b_m\le1/\alpha$, the deployed rejection time is no later than
$\tau_S^E(\alpha)$ on that event; hence the deployed time also satisfies the first-order statement of
Proposition~\ref{prop:twosided}(ii).  For an elementary claim the same condition is imposed with $D$
in place of $lD$.
\end{corollary}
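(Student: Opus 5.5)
Most of this corollary repackages Proposition~\ref{prop:twosided} and Theorem~\ref{thm:regret}, so the plan is to assemble those and then add the single new piece, the growing-horizon statement. For the rate claims, Proposition~\ref{prop:twosided}(i) shows that no procedure using all $K$ streams and valid on $H_S$ through $N_\alpha\ge t_\alpha$ can reject before $(1-\epsilon)t_\alpha$ with non-vanishing probability; this is the ceiling ``no valid procedure attains a faster rate.'' Part~(ii) gives, for every $l$ with $w_l^{(m)}>0$, that $\limsup\tau_S^E(\alpha)/t_\alpha\le1$ almost surely. A full-support prior has $w_l^{(m)}>0$ for every $l$, so this holds simultaneously at every $l$ without knowledge of $l$. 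Both defaults qualify: the geometric weights $(2^{1/m}-1)^l$ and the level-uniform weights $\beta_l=1/m$ are positive for every $l$. Part~(iii) supplies the corner statements, and dividing $\log(1/\alpha)/D$ by $t_\alpha$ gives the factor~$l$.

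For the growing-horizon claim, fix $\eta>0$ with $N_\alpha\ge(1+\eta)t_\alpha$ for all sufficiently small $\alpha$; such an $\eta$ exists by~\eqref{eq:growinghorizon}. By Proposition~\ref{prop:twosided}(ii), almost surely $\tau_S^E(\alpha)\le(1+\eta/2)t_\alpha$ for all small $\alpha$. The argument uses the pathwise form: Theorem~\ref{thm:regret} gives $\log E_{S,t}\ge\log L_{c,t}-\log(1/w_l)$, and the strong law gives $t^{-1}\log L_{c,t}\to lD$. Hence $\mathbf 1\{\tau_S^E(\alpha)\le N_\alpha\}\to1$ almost surely, and therefore in probability. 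On that event $\max_{t\le N_\alpha}E_{S,t}\ge1/\alpha\ge b_m$, because the deployed boundary is at most $1/\alpha$. The deployed time $\inf\{t\le N_\alpha:E_{S,t}\ge b_m\}$ is then at most $\tau_S^E(\alpha)$, and its $\limsup$ ratio to $t_\alpha$ is at most $1$ with probability tending to one.

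For the matching lower bound, the deployed rule has local FWER through $N_\alpha$ at most $\alpha$ by feasibility~\eqref{eq:bfeasible}. Here I need $b_m$ to be feasible uniformly over the nuisance configurations outside $S$. This holds because $E_{S,t}$ depends only on streams in $S$, and under $H_S$ those streams have the all-null law whatever the outside streams do. Proposition~\ref{prop:twosided}(i) therefore applies with $N_\alpha\ge t_\alpha$, and combining the two bounds gives convergence to $1$ in probability. The elementary case is the same argument with $m=l=1$, where $E_{\{k\},t}=L_{k,t}$ and the rate is $D$.

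The main obstacle is the careful handling of the horizon. The lower bound in Proposition~\ref{prop:twosided}(i) requires $N_\alpha\ge t_\alpha$, and the upper bound must hold with probability tending to one rather than almost surely for each fixed $\alpha$. The step that needs checking is that replacing the uncapped time by the deployed time cannot make it later on the good event, which uses only $b_m\le1/\alpha$ and the inclusive crossing convention.
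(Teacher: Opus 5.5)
Your treatment of the intersection claims is correct and is essentially the paper's proof. You combine parts (i)--(iii) of Proposition~\ref{prop:twosided}, and use (ii) with \eqref{eq:growinghorizon} to get $\Pr_{\vec h}(\tau_S^E(\alpha)\le N_\alpha)\to1$; you do this almost surely via Theorem~\ref{thm:regret} and the strong law, whereas the paper works in probability, and either route suffices. You then use $b_m\le1/\alpha$ to place the deployed time below the uncapped one on that event, and apply part~(i) to the deployed time using its local level. Your observation that feasibility is automatically uniform over the nuisance configurations outside $S$ is a correct detail that the paper leaves implicit.

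The gap is in the last sentence, the elementary claim. You reduce it to ``the same argument with $m=l=1$, where $E_{\{k\},t}=L_{k,t}$.'' Under PEM closure, however, $H_k$ is not rejected when the singleton intersection crosses. Its deployed rejection time is $\max_{S\ni k}\inf\{t\le N_\alpha:E_{S,t}\ge b_{|S|}\}$, a maximum over all $2^{K-1}$ intersections containing $k$. Showing that $\{k\}$ crosses $1/\alpha$ by $N_\alpha$ with probability tending to one does not give $\tau_k\le N_\alpha$.

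The missing step is that the \emph{uncapped} closure time satisfies $\tau_k/(\log(1/\alpha)/D)\to1$ in probability, which is Corollary~\ref{cor:penalty}(a). It follows from Theorem~\ref{thm:regret} applied to every $S\ni k$ with true configuration $c\cap S$. The null-padded intersections, those with $c\cap S=\{k\}$, pay only the constant $\log(1/w_1^{(|S|)})$, and those containing two or more active streams cross faster. This is exactly where full support is needed for elementary claims. At the product corner, for instance, the singleton still crosses at rate $D$, but a null-padded intersection has drift $D-(|S|-1)\bar D$, which can be negative, so $\tau_k$ can be infinite with probability tending to one. The singleton reduction cannot detect this.

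With Corollary~\ref{cor:penalty}(a) in hand, the argument closes. The condition $N_\alpha\ge(1+\epsilon)\log(1/\alpha)/D$ makes every $S\ni k$ cross $1/\alpha\ge b_{|S|}$ by $N_\alpha$ with probability tending to one, and on that event the deployed closure time is at most the uncapped one. Your lower-bound half does survive, since the deployed $\tau_k$ is no earlier than the deployed singleton crossing, which has level $\alpha$ by feasibility of $b_1$; the paper cites Lemma~\ref{lem:ceiling} instead. You should state that monotonicity explicitly rather than identify $\tau_k$ with the singleton time.
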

The proof is in
\suppref{Appendix}{app:localrateproofs}{Appendix~H.2}.

For fixed $K$, the small-level limit gives a factor-$l$ reduction in first-order local rejection
delay: at $l=2$ it halves, and at $l=3$ it falls to a third.  Sections~\ref{sec:sims} and~\ref{sec:apps} illustrate it at finite levels, where
the boundary $b^\star$ adds a second, horizon-specific reduction on top of it; those are
finite-$\alpha$ illustrations of the pattern the theorem describes, not evidence for the asymptotic
statement itself.  What the rate gain does not do
is accelerate the rejection of a single hypothesis beyond its own stream's evidence; the next
subsection makes that ceiling precise.

\subsection{Elementary claims: the single-stream ceiling and the multiplicity penalty}\label{sec:ceiling}

No procedure that controls family-wise error rejects an individual hypothesis
$H_k$ by any deadline with higher probability than the best level-$\alpha$ sequential test of stream
$k$ alone \emph{for that deadline}.  The bound is a deadline-specific power envelope drawn from the
stream's own evidence: the supremum is taken separately at each $t$, and there need be no single
sequential rule that attains it at every $t$ at once, so the lemma does not assert the existence of a
uniformly most powerful sequential test.  The other $K-1$ streams therefore cannot buy rate for an
elementary claim; what they can buy is the removal of the multiplicity penalty, and closed testing
removes it entirely under dense evidence.  This separates the paper's two contributions.  The rate gain of
\S\ref{sec:regret} is available for intersection claims and is the reason to pool.  For elementary
claims the honest comparison with $e$-Bonferroni is in the additive constant, and Corollary~\ref{cor:penalty} below
gives that constant for every configuration.

\begin{lemma}[Single-stream ceiling]\label{lem:ceiling}
Let $\boldsymbol\tau$ satisfy~\eqref{eq:fwerN} under every configuration, and let $\vec h$ be a
configuration with $h_k=1$.  Let $\mathcal T_\alpha$ be the class of stopping times of the filtration
generated by stream $k$ and an independent randomization, with $\Pr_{f_0}(\sigma\le N)\le\alpha$,
where $\Pr_{f_0}$ and $\Pr_{f_A}$ denote the law of stream $k$ under $f_0$ and under $f_A$.
Then for every $t\le N$,
\[
\Pr_{\vec h}(\tau_k\le t)\ \le\ \sup_{\sigma\in\mathcal T_\alpha}\Pr_{f_A}(\sigma\le t).
\]
The right-hand side is an envelope, one number for each $t$, not the power function of a single
rule.  In particular $\Pr_{\vec h}(\tau_k\le N)$ is at most the power of the Neyman--Pearson test of
$f_0$ against $f_A$ at $N$ on stream $k$, and, by Proposition~\ref{prop:twosided}(i) with $S=\{k\}$,
$\Pr_{\vec h}(\tau_k\le(1-\epsilon)\log(1/\alpha)/D)\to0$ as $\alpha\to0$ whenever
$N\ge\log(1/\alpha)/D$.
\end{lemma}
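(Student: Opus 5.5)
The plan is to reduce the multi-stream procedure to a single-stream randomized stopping time by fixing the other streams' configuration at null for stream $k$ and simulating them. Let $\vec h$ have $h_k=1$, and let $\vec h^{(0)}$ be the configuration that agrees with $\vec h$ off $k$ and has $h^{(0)}_k=0$. By Assumption~\ref{ass:indep}, under both configurations the streams other than $k$ are independent of stream $k$ and have the same joint law $\mu$. Their law does not depend on $h_k$. Define $\sigma$ on stream $k$ plus an independent randomization as follows. Draw the other $K-1$ streams from $\mu$ (this is the randomization), and set $\sigma=\tau_k$ computed on the combined array. Because $\tau_k$ is an $(\F_t)$-stopping time and the simulated streams are generated independently in advance, $\{\sigma\le t\}$ is determined by stream $k$ up to $t$ and the randomization. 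So $\sigma$ is a stopping time of the enlarged filtration in the definition of $\mathcal T_\alpha$.

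Next I check membership in $\mathcal T_\alpha$. When stream $k$ is drawn from $f_0$, the combined array has law $P_{\vec h^{(0)}}$, under which $H_k$ is true. Thus $\{\tau_k\le N\}\subseteq\{\exists t\le N:V_t>0\}$, and \eqref{eq:fwerN} for $\vec h^{(0)}$ gives $\Pr_{f_0}(\sigma\le N)\le\Pr_{\vec h^{(0)}}(\tau_k\le N)\le\alpha$. When stream $k$ is drawn from $f_A$, the combined array has law exactly $P_{\vec h}$. Hence $\Pr_{f_A}(\sigma\le t)=\Pr_{\vec h}(\tau_k\le t)$ for every $t\le N$, and taking the supremum over $\mathcal T_\alpha$ gives the displayed inequality. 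The main point to handle carefully is this measurability/filtration step. It relies on $\F_t$ including the other streams only through their first $t$ observations, which the simulated array supplies. Sampling the entire simulated array up front keeps $\sigma$ nonanticipating in stream $k$.

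For the two ``in particular'' claims, take $t=N$. Any $\sigma\in\mathcal T_\alpha$ yields a randomized test $\phi=\Pr(\sigma\le N\mid X_{k,1:N})$ of level $\alpha$ for $f_0^{\otimes N}$ versus $f_A^{\otimes N}$. By the Neyman--Pearson lemma, its power is at most that of the most powerful randomized test at level $\alpha$.

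For the rate claim, apply Proposition~\ref{prop:twosided}(i) with $S=\{k\}$, $l=1$, and $\sigma_\alpha=\tau_k$ (for the procedure at level $\alpha$), with $N_\alpha=N\ge\log(1/\alpha)/D$. Its premise requires $\Pr_{\vec h'}(\tau_k\le N)\le\alpha$ for every $\vec h'\in H_{\{k\}}$. This follows directly from \eqref{eq:fwerN}, since under such $\vec h'$ the hypothesis $H_k$ is true. The conclusion is exactly the stated vanishing probability.
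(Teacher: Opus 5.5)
Your proposal is correct and follows the paper's proof: the paper likewise flips $h_k$ to $0$, treats the other streams (whose law and independence from stream $k$ are unaffected by $h_k$) as the independent randomization so that $\tau_k\in\mathcal T_\alpha$ by the FWER guarantee, and derives the two consequences from the Neyman--Pearson lemma and from Proposition~\ref{prop:twosided}(i) with $S=\{k\}$ and $N_\alpha=N$. The only difference is that you spell out the Neyman--Pearson step through the conditional rejection probability $\phi(x)=\Pr_U(\sigma(x,U)\le N)$, where the paper simply cites Remark~\ref{rem:degenerate}.
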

The proof is in
\suppref{Appendix}{app:elementaryproofs}{Appendix~H.3}.

\begin{corollary}[The multiplicity penalty]\label{cor:penalty}
Let uncapped literal PEM closure run with fixed full-support weights at the threshold $1/\alpha$,
and let $\vec h$ be a configuration with active set $c$, $k\in c$, $|c|=l$.  Then pathwise
\[
\tau_k\ \le\ \max_{S\ni k}\ \tau^L_{c\cap S}\bigl(\alpha\,w^{(|S|)}_{|c\cap S|}\bigr),
\]
and consequently:
\begin{enumerate}
\item[(a)] $\tau_k/\bigl(\log(1/\alpha)/D\bigr)\to1$ in probability: PEM closure attains the
single-stream ceiling of Lemma~\ref{lem:ceiling} at first order, as does $e$-Bonferroni.
\item[(b)] The eventual log-boundary surcharge, made precise in Proposition~\ref{prop:sandwich}, is at most
$\max_{m\le K-l+1}\log(1/w_1^{(m)})$ for PEM closure, which equals
$\log(K-l+1)+\log(1/\beta_1^{(K-l+1)})$ for either default; it is zero when every hypothesis is false
($l=K$), in the sense that $\tau_k=\tau^L_{\{k\}}(\alpha)$ for all sufficiently small $\alpha$, almost
surely.  For $e$-Bonferroni the penalty is $\log K$ under every configuration.  For arithmetic-mean
closure the pathwise upper bound is also $\log K$, because $\bar e_1(t)\ge L_{k,t}/|S|$ for every
$S\ni k$.  This bound need not be an actual positive delay penalty, even under full density.
\end{enumerate}
\end{corollary}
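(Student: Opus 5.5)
Since marks are permanent, the closure rejection time is the latest of the intersection crossing times, $\tau_k=\max_{S\ni k}\tau_S^E(\alpha)$, and the plan is to bound each term separately. For a fixed $S\ni k$, the set $c\cap S$ contains $k$ and so is nonempty. Theorem~\ref{thm:regret}, applied to $S$ and the oracle configuration $c\cap S$, gives pathwise $\tau_S^E(\alpha)\le\tau^L_{c\cap S}(\alpha w^{(|S|)}_{|c\cap S|})$. This uses only $E_{S,t}\ge w_{|c\cap S|}L_{c\cap S,t}$, which holds even though streams outside $c\cap S$ carry arbitrary laws. Taking the maximum over $S\ni k$ yields the displayed pathwise inequality.

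\textbf{Part (a).} For each $S$, $L_{c\cap S,t}$ is a random walk in log scale with drift $|c\cap S|D\ge D$. By the strong law, $\tau^L_{c\cap S}(\alpha w)/\log(1/\alpha)\to 1/(|c\cap S|D)$ almost surely for fixed $w>0$, since the deflation only adds the constant $\log(1/w)$ to $\log(1/\alpha)$. The maximum over the finitely many $S$ is therefore at most $(1+o(1))\log(1/\alpha)/D$ almost surely. Combined with the lower bound of Lemma~\ref{lem:ceiling}, namely Proposition~\ref{prop:twosided}(i) with $S=\{k\}$ (no finite horizon is needed in the uncapped case, or take $N_\alpha=\infty$), this gives convergence in probability to one. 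For $e$-Bonferroni, the threshold $K/\alpha$ changes $\log(1/\alpha)$ by the additive constant $\log K$, so the same conclusion holds.

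\textbf{Part (b).} Refine the comparison on the almost-sure event of the strong law. Split the sets $S\ni k$ into two groups.
\begin{itemize}
\item If $|c\cap S|\ge2$, the oracle drift is at least $2D$. Its crossing time is about $\log(1/\alpha)/(2D)$, which is eventually smaller than the singleton-type crossing of $L_{k}$ at level $\alpha w$ for any fixed $w$. These sets never bind for small $\alpha$.
\item If $c\cap S=\{k\}$, the bound is $\tau^L_{\{k\}}(\alpha w_1^{(|S|)})$. Such an $S$ consists of $k$ together with nulls only, so $|S|\le K-l+1$.
\end{itemize}
Hence eventually $\tau_k\le\max_{m\le K-l+1}\tau^L_{\{k\}}(\alpha w_1^{(m)})$, a surcharge of $\max_m\log(1/w_1^{(m)})$. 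For either default I would check that $\log(1/w_1^{(m)})=\log m+\log(1/\beta_1^{(m)})$ is increasing in $m$, so the maximum sits at $m=K-l+1$.
\begin{itemize}
\item For the geometric default, $w_1^{(m)}=2^{1/m}-1$ is decreasing in $m$.
\item For the level-uniform default, $w_1^{(m)}=1/m^2$.
\end{itemize}

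\textbf{Case $l=K$.} Every $S\ni k$ other than $\{k\}$ has $|c\cap S|\ge2$, so eventually $\tau_k=\max(\tau^L_{\{k\}}(\alpha),\text{smaller terms})$. The singleton intersection $\{k\}$ must be crossed, and $E_{\{k\},t}=L_{k,t}$, so $\tau_k\ge\tau^L_{\{k\}}(\alpha)$. Equality therefore holds for all sufficiently small $\alpha$.

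\textbf{Main obstacle.} The delicate step is making ``eventually'' pathwise rather than only first-order. I need $\tau^L_{c\cap S}(\alpha w)<\tau^L_{\{k\}}(\alpha)$ for all small $\alpha$ whenever $|c\cap S|\ge2$. I would argue this from $t^{-1}\log L_{c\cap S,t}\to|c\cap S|D$ and $t^{-1}\log L_{k,t}\to D$. At time $\tau^L_{\{k\}}(\alpha)\approx\log(1/\alpha)/D$, the multi-stream walk is near $|c\cap S|\log(1/\alpha)$, which exceeds $\log(1/(\alpha w))$ once $\alpha$ is small. This requires care with overshoot and with the random times, which I would handle using running suprema.

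\textbf{Comparators.} For $e$-Bonferroni the threshold is $K/\alpha$ by definition. For arithmetic-mean closure, the bound $\bar e_1(t)\ge L_{k,t}/|S|\ge L_{k,t}/K$ gives crossing no later than $\tau^L_{\{k\}}(\alpha/K)$ for every $S\ni k$, so the surcharge is at most $\log K$. That this is only an upper bound needs no proof.
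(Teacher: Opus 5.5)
Your proposal is correct and follows the paper's proof essentially step for step. Both apply the pathwise regret bound to each $S\ni k$ with oracle configuration $c\cap S$ and take the maximum. Part (a) then combines the almost-sure hitting-time law with the lower bound of Lemma~\ref{lem:ceiling}, and part (b), including the $l=K$ case, splits the intersections by whether $|c\cap S|=1$ or $|c\cap S|\ge2$.

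The step you flag as the main obstacle is simpler than you expect, and the paper handles it directly. The almost-sure limits $T(x)/x\to1/(|d|D)\le1/(2D)$ against $1/D$ already force a strict ordering of crossing times for all large $x$. These limits absorb the overshoot, so no separate overshoot or running-supremum argument is needed.
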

The proof is in
\suppref{Appendix}{app:elementaryproofs}{Appendix~H.3}.

The log-boundary comparison is pathwise, and it is sharpest when stated as a sandwich on the
uncapped closure time rather than as an expected-delay expansion.

\begin{proposition}[Pathwise sandwich for the uncapped elementary closure time]\label{prop:sandwich}
Fix a configuration with active set $c$, $|c|=l\ge1$, and $k\in c$, and run literal closure with
fixed full-support weights at the common Ville threshold.  Write $x=\log(1/\alpha)$,
\[
T_k(x)=\inf\{t:\log L_{k,t}\ge x\},\qquad
C=\max_{m\le K-l+1}\log\bigl(1/w_1^{(m)}\bigr),
\]
and let $\tau_k(e^{-x})$ be the uncapped literal-closure rejection time of $H_k$, that is
$\max_{S\ni k}\inf\{t:E_{S,t}\ge e^{x}\}$ with no deadline.  Then, $P_{\vec h}$-almost surely, for
all sufficiently large $x$,
\[
T_k(x)\ \le\ \tau_k(e^{-x})\ \le\ T_k(x+C).
\]
\end{proposition}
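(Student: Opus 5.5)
The proof compares $H_k$'s closure time with its own singleton crossing: the lower bound comes from the singleton intersection, and the upper bound holds pathwise on one almost-sure event that makes every intersection $S\ni k$ cross no later than $T_k(x+C)$.

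\emph{Lower bound.} Among the intersections containing $k$ is $S=\{k\}$. For it $m=1$, the only weight is $w_1^{(1)}=\beta_1^{(1)}=1$, and so $E_{\{k\},t}=L_{k,t}$. Its first crossing of $e^{x}$ is therefore exactly $T_k(x)$. Since $\tau_k(e^{-x})$ is a maximum over all $S\ni k$, we get $T_k(x)\le\tau_k(e^{-x})$ on every path and for every $x$; no asymptotics are needed here.

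\emph{Upper bound, setup.} It suffices to show that at the single time $t^\ast=T_k(x+C)$ every $S\ni k$ has $E_{S,t^\ast}\ge e^{x}$. Each such intersection's first crossing is then at most $t^\ast$, and so is their maximum. Fix $S\ni k$ with $m=|S|$ and $j=|c\cap S|\ge1$. Every term in the ESP mixture is nonnegative, so by Theorem~\ref{thm:regret} (or directly, keeping one term),
\[
E_{S,t}\ge w^{(m)}_j L_{c\cap S,t}=w_j^{(m)}L_{k,t}\prod_{i\in (c\cap S)\setminus\{k\}}L_{i,t}.
\]

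\emph{Case $j=1$.} Here $S$ consists of $k$ together with null streams only, so $m\le 1+(K-l)$ and hence $\log(1/w_1^{(m)})\le C$. At $t^\ast$ we have $\log L_{k,t^\ast}\ge x+C$, and therefore $\log E_{S,t^\ast}\ge x+C-\log(1/w_1^{(m)})\ge x$.

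\emph{Case $j\ge2$.} The key step is to absorb the possibly large constant $\log(1/w_j^{(m)})$, which $C$ does not cover when $m>K-l+1$. For this we use the other active streams in $S$. By the strong law, $t^{-1}\log L_{i,t}\to D>0$ almost surely for each active $i$, so $\log L_{i,t}\to\infty$. There are finitely many pairs $(S,j)$, so on one almost-sure event there is a random $t_0$ such that for all $t\ge t_0$ and every $S\ni k$ with $j\ge2$,
\[
\sum_{i\in(c\cap S)\setminus\{k\}}\log L_{i,t}\ \ge\ \max_{m,j}\log\bigl(1/w_j^{(m)}\bigr).
\]
Full support makes this maximum finite. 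Also $\log L_{k,t}\to\infty$ almost surely, so $T_k(y)<\infty$ for all $y$ and $T_k(y)\to\infty$ as $y\to\infty$. Hence for all sufficiently large $x$ we have $t^\ast=T_k(x+C)\ge t_0$. At that time $\log E_{S,t^\ast}\ge \log L_{k,t^\ast}\ge x+C\ge x$, using $C\ge0$, which holds since $w_1^{(m)}\le1$.

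Combining the two cases, every $S\ni k$ crosses by $t^\ast$, which gives $\tau_k(e^{-x})\le T_k(x+C)$ for all large $x$, $P_{\vec h}$-almost surely. The only delicate point is uniformity: one random $t_0$ must serve all intersections at once. The finiteness of the family of intersections together with a single null set (the intersection of finitely many SLLN events) handles this.
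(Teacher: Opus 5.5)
Your proof is correct. The lower bound and the $|c\cap S|=1$ case match the paper's proof; you depart from it only in the case $|c\cap S|\ge2$.

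\textbf{The paper's route.} It compares hitting-time rates. Under $P_{\vec h}$, $\log L_{c\cap S,t}$ is a random walk with drift $|c\cap S|D\ge2D$. Its hitting time of $x+\log(1/w^{(|S|)}_{|c\cap S|})$ is therefore asymptotically $x/(|c\cap S|D)\le x/(2D)$, whereas $T_k(x+C)\sim x/D$. Since there are finitely many intersections, every such intersection crosses strictly before $T_k(x+C)$ for all large $x$.

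\textbf{Your route.} You evaluate at the single time $t^\ast=T_k(x+C)$. There the additional active factors $\prod_{i\in(c\cap S)\setminus\{k\}}L_{i,t}$ absorb the weight constant once $t\ge t_0$. This requires only that each other active stream's log likelihood ratio diverges, that $T_k(y)<\infty$, and that $T_k(y)\to\infty$.

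\textbf{What each buys.} Your argument is more elementary. It uses no hitting-time law and no comparison of drifts, only divergence. It also yields a slightly stronger fact: $E_{S,t}\ge L_{k,t}$ for all large $t$ whenever $|c\cap S|\ge2$, so these intersections have crossed by $T_k(x)$ itself for large $x$, not merely by $T_k(x+C)$. The paper's argument instead gives a quantitative, linear-in-$x$ separation between the multi-active crossings and $T_k(x+C)$. That separation ties the proof directly to the $2D$-versus-$D$ rate picture that the paper also uses in Corollary~\ref{cor:penalty}(a).
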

The proof is in
\suppref{Appendix}{app:elementaryproofs}{Appendix~H.3}.

The sandwich makes the multiplicity comparison precise. Under dense evidence, $l=K$ gives
$C=0$, so the uncapped closure time eventually equals the singleton crossing time, almost surely
as $\alpha\downarrow0$. More generally, $C$ is a \emph{log-boundary surcharge}: closure at level
$\alpha$ is eventually no slower than the single stream at level $\alpha e^{-C}$. This is a
pathwise statement, not a second-order expansion of expected delay, which would require an
overshoot analysis.

Since $e$-Bonferroni crosses at $T_k(x+\log K)$, PEM closure is eventually no slower whenever
$C\le\log K$. This condition need not hold under sparsity: the level-uniform default has
$C=2\log(K-l+1)$, which is $2\log K$ when only one stream is active. Full-support PEM,
arithmetic-mean closure and $e$-Bonferroni share the first-order elementary delay, so that rate
alone does not order their finite-level performance.

The prior also explains the sparse-case cost relative to arithmetic-mean closure. For an
intersection with only one active stream, the PEM statistic's log evidence is asymptotically
lower by $\log(1/\beta_1^{(m)})$. Under dense evidence, its higher-order terms instead clear
multi-active intersections at rate $2D$ or more. These log-growth comparisons do not themselves
order finite-level rejection times. Default-specific calculations and further comparisons are in
\suppref{Appendix}{app:elementaryproofs}{Appendix~H.3}.

\subsection{The stopping rule and the rejection-time objective}\label{sec:boundaryshape}

The local PEM threshold rule does not solve the rejection-time objective~\eqref{eq:costobj} exactly: it stops when
a linear function of the sufficient coordinates crosses a flat boundary, and what it gives up is,
at $m=1$, only the flatness of the boundary and, at $m\ge2$, the flatness and the linear reduction
together.  At the horizon, the PEM statistic at its exact null quantile is the most powerful
terminal test against the prior mixture.  For a single stream, every positive-multiplier
Lagrangian optimum has a time-varying evidence boundary; attaining a prescribed constrained level
can additionally require randomization.  For two or more streams the continuation value need not
be a function of $E_{S,t}$ alone.

\begin{corollary}[Terminal Neyman--Pearson optimality at the design horizon]\label{cor:np}
Fix the horizon $N$ and a configuration prior $w=(w_c)_{c\in\mathcal C}$.  Among all level-$\alpha$
tests of $H_S$ based on $\F_N^S$, the most powerful test against the mixture alternative
$\bar Q_w^{(N)}$ rejects when $E^w_{S,N}>c_\alpha$ and randomizes when $E^w_{S,N}=c_\alpha$, where
$c_\alpha$ is the $(1-\alpha)$ quantile of $E^w_{S,N}$ under $P_0^{(N)}$ and the randomization
probability is chosen so that the size is exactly $\alpha$; when $c_\alpha$ is not an atom the
randomization is vacuous and the test rejects on $\{E^w_{S,N}\ge c_\alpha\}$.  If $w$ is
exchangeable, $E^w_{S,N}$ is the ESP form of~\eqref{eq:mixmart} monitored throughout the paper.
\end{corollary}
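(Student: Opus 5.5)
The plan is to reduce the claim to the Neyman--Pearson lemma for two simple hypotheses on the terminal $\sigma$-field $\F_N^S$. First I identify the two laws. The null $H_S$, restricted to the local experiment, is the single law $P_0^{(N)}$: every stream in $S$ is i.i.d.\ $f_0$. The mixture alternative $\bar Q_w^{(N)}=\sum_c w_cQ_c^{(N)}$ is a probability measure because $\sum_cw_c=1$. By~\eqref{eq:lr_identity} (Assumption~\ref{ass:indep}), $\bar Q_w^{(N)}\ll P_0^{(N)}$ with Radon--Nikodym derivative $E^w_{S,N}=\sum_cw_cL_{c,N}$. Absolute continuity holds because each $Q_c^{(N)}$ is a product law with density $L_{c,N}$. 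Any mass of $\bar Q_w$ that is singular to $P_0$ would in any case be handled by the convention $E=\infty$ there, and that convention changes no rejection region of positive null probability.

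Second, I apply the Neyman--Pearson lemma in its randomized form. Let $c_\alpha$ be the upper $(1-\alpha)$ quantile of $E^w_{S,N}$ under $P_0^{(N)}$, and set
\[
\gamma=\frac{\alpha-P_0(E^w_{S,N}>c_\alpha)}{P_0(E^w_{S,N}=c_\alpha)}\in[0,1]
\]
when the denominator is positive. Then $\phi^\star=\mathbf 1\{E>c_\alpha\}+\gamma\mathbf 1\{E=c_\alpha\}$ has size exactly $\alpha$. For any test $\phi$ with $\E_{P_0}\phi\le\alpha$, the pointwise inequality $(\phi^\star-\phi)(E-c_\alpha)\ge0$ integrates against $P_0$ to
\[
\E_{\bar Q_w}\phi^\star-\E_{\bar Q_w}\phi\ \ge\ c_\alpha\bigl(\E_{P_0}\phi^\star-\E_{P_0}\phi\bigr)\ \ge0 .
\]
This step uses $c_\alpha\ge0$, which holds because $E\ge0$.

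Next come two remarks. When $P_0(E=c_\alpha)=0$, the boundary event is null, so $\gamma$ is irrelevant and the rejection region may be written $\{E\ge c_\alpha\}$. When $w$ is exchangeable, the computation in~\eqref{eq:mixmart} rewrites $E^w_{S,N}$ as $\sum_lw_le_l(\{L_{k,N}\}_{k\in S})$.

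The only subtlety I expect is definitional rather than technical: making precise that "level-$\alpha$ tests of $H_S$ based on $\F_N^S$" means tests of the simple local null $P_0^{(N)}$. Outside streams are excluded by hypothesis, so the composite nature of $H_S$ over nuisance configurations outside $S$ plays no role. I would also handle the degenerate quantile case, where $P_0(E>c_\alpha)=\alpha$, by setting $\gamma=0$.
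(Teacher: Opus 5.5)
Your proposal is correct and follows the paper's proof. The paper likewise identifies $E^w_{S,N}=d\bar Q_w^{(N)}/dP_0^{(N)}$ via \eqref{eq:lr_identity} and then invokes the Neyman--Pearson lemma for the simple pair $(P_0^{(N)},\bar Q_w^{(N)})$; you write out the sufficiency half of that lemma and make explicit that restricting to $\F_N^S$ reduces the composite $H_S$ to the single law $P_0^{(N)}$.
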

The proof is in
\suppref{Appendix}{app:stoppingproofs}{Appendix~K.1}.

The deployed local rule in \S\ref{sec:calibration} instead rejects on a crossing of a boundary
$b_m$.  For a deterministic feasible choice with $b_m\le1/\alpha$, one has
$c_\alpha\le b_m\le1/\alpha$.  The terminal Neyman--Pearson test bounds the by-$N$ mixture power
of any level-$\alpha$ local stopping rule, since its rejection event is $\F_N^S$-measurable.
This power bound does not make the crossing event a subset of the terminal rejection region:
evidence can cross early and subsequently decrease.  Lowering a feasible flat threshold advances
crossings, but does not turn the rule into the terminal Neyman--Pearson test.  The terminal
criterion is commensurable with fixed-sample optimal-FWER policies~\citep{rosset2022,dubeyhuo2026};
Remark~\ref{rem:degenerate} explains why it cannot by itself justify stopping before $N$.

Because $\sigma\wedge N$ is bounded and $E^\omega_{S,t}=d\bar Q_\omega^{(t)}/dP_0^{(t)}$, the Wald
likelihood identity rewrites the objective in~\eqref{eq:costobj} under the null measure.  Define the
reward to be zero when the claim is never rejected, and evaluate the evidence at the bounded
stopping time $\sigma\wedge N$, which is defined on every path including $\{\sigma=\infty\}$:
\begin{equation}\label{eq:waldreduction}
\E_{\bar Q_\omega}\big[(N-\sigma)^+\big]
=\E_{P_0}\big[E^\omega_{S,\sigma\wedge N}\,(N-\sigma\wedge N)\big].
\end{equation}
When $m\ge2$ the optimal region need not be a level set of $E^\omega$ at all: the stopping payoff
depends on the state only through $E^\omega_{S,t}$, while the continuation value need not.
The distinction arises once a multiplier $\kappa>0$ for the constraint turns the problem into an
unconstrained stopping problem with payoff $g_t=E^\omega_{S,t}(N-t)-\kappa$.

\begin{theorem}[Single-stream boundary shape]\label{thm:levelset}
Let $m=1$ and consider~\eqref{eq:costobj} in its Lagrangian form, with stopping payoff
$g_t(E)=E\,(N-t)-\kappa$ for $\kappa>0$ and the option never to stop.  Then for each $t<N$ the
optimal stopping region is $\{E_t\ge b_t\}$ for a finite threshold $b_t$, and $b_N=\infty$: the
optimal rule is a level set of the evidence with a time-varying boundary, and it never exercises at
the deadline.
\end{theorem}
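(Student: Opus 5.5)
The plan is to solve the Lagrangian problem by finite-horizon backward induction in the Markov state $(t,E_t)$, working under the null measure as in~\eqref{eq:waldreduction}. The key estimate is that the continuation value grows in $E$ with slope at most $N-t-1$, whereas the stopping payoff grows with slope exactly $N-t$. The difference between them is therefore strictly decreasing in $E$, which gives a single crossing point $b_t$.

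\emph{Reduction.} For $m=1$ we have $E_t=L_{1,t}$. After the multiplier $\kappa$ is introduced, the objective becomes $\E_{P_0}\bigl[(E_{\sigma}(N-\sigma)-\kappa)\mathbf 1\{\sigma\le N\}\bigr]$, with payoff $0$ on $\{\sigma=\infty\}$. Under $P_0$ we can write $E_s=E_t\,L'_{t,s}$ for $s\ge t$, where $L'_{t,s}=\prod_{t<u\le s}r_{1,u}$ is independent of $\F_t$ and has mean one. Hence $(t,E_t)$ is a time-inhomogeneous Markov chain, and by the standard finite-horizon optimal-stopping theory the value is $V_t(E_t)$. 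Independent randomization does not raise the supremum. Define
\[
V_t(y)=\sup_{\tau\ge t}\E_{P_0}\bigl[(yL'_{t,\tau}(N-\tau)-\kappa)\mathbf 1\{\tau\le N\}\bigr],
\]
with $V_t(y)=\max\{g_t(y),C_t(y)\}$ and $C_t(y)=\E_{P_0}[V_{t+1}(y\,r)]$. At the deadline, $g_N(y)=-\kappa<0=V_N(y)$ because stopping is dominated by never stopping. This already gives $b_N=\infty$.

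\emph{Affine-supremum representation.} For each $\tau\ge t+1$, the map $y\mapsto\E_{P_0}[(yL'_{t,\tau}(N-\tau)-\kappa)\mathbf 1\{\tau\le N\}]$ is affine of the form $y\,a_\tau-\kappa\,p_\tau$. Here $0\le a_\tau=\E_{P_0}[L'_{t,\tau}(N-\tau)\mathbf 1\{\tau\le N\}]\le N-t-1$, because $\tau\ge t+1$ and optional stopping gives $\E_{P_0}[L'_{t,\tau\wedge N}]=1$; also $p_\tau\in[0,1]$. It follows that $C_t$ is a supremum of affine functions with slopes in $[0,N-t-1]$. So $C_t$ is convex and nondecreasing, and for $y'>y$,
\[
C_t(y')-C_t(y)\le (N-t-1)(y'-y).
\]
A symmetric argument gives $C_t(y)\le y(N-t-1)$.

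\emph{Single crossing.} Set $\Delta_t(y)=C_t(y)-g_t(y)$. By the Lipschitz bound, $\Delta_t(y')-\Delta_t(y)\le -(y'-y)$, so $\Delta_t$ is strictly decreasing and continuous. At $y=0$ we have $\Delta_t(0)\ge 0+\kappa>0$, since never stopping gives $C_t\ge0$. As $y\to\infty$, $\Delta_t(y)\le y(N-t-1)-y(N-t)+\kappa=\kappa-y\to-\infty$. Hence there is a unique finite root $b_t>0$, and the stopping region $\{g_t\ge C_t\}$ equals $\{E_t\ge b_t\}$ under the inclusive tie convention.

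\emph{Main obstacle.} The delicate step is the rigorous identification of the optimal rule with the first entry into $\{g_t\ge C_t\}$. This requires the Snell-envelope and Markov reduction to hold, including on the absorbing state $E=0$ and with randomized rules admitted. It also requires the sharp slope bound $a_\tau\le N-t-1$, which is what makes the inequality strict. Both points follow from optional stopping for the mean-one martingale $L'$ on the bounded time $\tau\wedge N$.
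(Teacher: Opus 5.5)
Your proof is correct and follows the paper's argument: backward induction on $(t,E_t)$ under the null, $b_N=\infty$ because $g_N=-\kappa<0$ is dominated by never stopping, and a single crossing because $g_t$ has slope exactly $N-t$ while the continuation value $C_t$ is $(N-t-1)$-Lipschitz with $g_t(0)<0\le C_t(0)$. The only difference is how you get the Lipschitz bound: you bound each affine slope $a_\tau\le N-t-1$ directly by optional stopping of $L'_{t,\tau\wedge N}$, whereas the paper propagates the Lipschitz constant of $U_{t+1}$ backward using $\E R=1$; either route works.
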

The proof is in
\suppref{Appendix}{app:stoppingproofs}{Appendix~K.1}.

The single-stream boundary characterization need not extend to $m\ge2$, because given $\F_t$
the future of $E^\omega_{S,\cdot}$ depends on the
posterior over configurations and not on $E^\omega_{S,t}$ alone.  Theorem~\ref{thm:levelset} is a
statement about the \emph{shape} of the optimum at $m=1$, and it says the optimal boundary varies
with $t$; it therefore does not certify the local PEM rule's flat boundary, and in
particular $b_N=\infty$ means the optimum never exercises at the deadline whereas the local PEM
threshold rule can.  When the exact minimum is attained, the local PEM rule at $b^\star$ is optimal among the
deterministic flat-threshold rules on $E^\omega_{S,t}$ with the stated crossing convention, since
the least feasible representative gives the earliest crossing event.  It solves neither the time-varying level-set version nor the
unrestricted version of~\eqref{eq:costobj} exactly.  It attains the first-order rate of
Proposition~\ref{prop:twosided}, which no rule can beat, and \S\ref{sec:dpvalidation} reports the
feasible policies and primal--dual bounds for the solved Bernoulli problems, together with
numerical Gaussian comparisons.

\section{Comparison with existing methods}\label{sec:phase}

We compare PEM closure with its product and arithmetic-mean corners, with $e$-Bonferroni, which
skips closed testing, and with the unrestricted local stopping optimum as a benchmark.

\subsection{The corners forgo the rate gain}\label{sec:corners}

Product closure attains the rate $lD$ only when every stream in the
intersection is active, and it loses positive log growth when null streams are sufficiently numerous.  At the full intersection $S=[K]$
the point mass on the all-alternatives configuration specializes~\eqref{eq:mixmart} to the product
$E_t=\prod_{k\in[K]}L_{k,t}$, its unique Bayes-optimal e-process (Corollary~\ref{cor:corners}),
whose per-step log growth under $l$ alternatives is $lD-(K-l)\bar D$
(Proposition~\ref{prop:twosided}(iii), Table~\ref{tab:streams}): below the attainable rate $lD$
whenever a single null is present, and negative once the nulls outnumber the actives by more than
the factor $D/\bar D$.  For equal-variance Gaussian shifts this is $(2l-K)D$, negative below
half-density, and the product then rejects with probability tending to zero as $\alpha\downarrow0$.  This is the sequential consequence of targeting a
dense configuration: the higher-order terms that produce the rate gain are all the product has, and
they carry the nulls with them.

Arithmetic-mean closure keeps a strictly positive evidence rate at every configuration with an
active stream, so it crosses any fixed threshold eventually with probability one, and it is the
exact worst-case expected-log optimum (Proposition~\ref{prop:growmean}); but its evidence for an intersection
grows at rate $D$ whatever the number of active streams, so it forgoes the factor $l$ entirely.  Under $l$
alternatives its first-order full-intersection delay is $\log(1/\alpha)/D$ against the PEM statistic's
$\log(1/\alpha)/(lD)$.  The two agree at $l=1$, the least-favorable configuration at which
the mean also attains the first-order lower bound, and separate by a growing
margin as $l$ increases.

A companion fixed-sample study in preparation reports the same tradeoff~\citep{dubeyhuo2026prior};
the present results do not depend on it.  When sparse configurations matter, the oracle inequality
motivates positive mass at every cardinality; when they do not, the product is the Bayes optimum
for the prior that says so.

\subsection{\texorpdfstring{$e$}{e}-Bonferroni pays the multiplicity penalty}\label{sec:ebonf}

At the full intersection, full-support PEM has first-order rate $lD$, whereas rejecting when any
$e$-Bonferroni marginal crosses has rate $D$.  For elementary claims, the dense log-boundary
comparison instead suggests the drift benchmark $1+\log K/\log(1/\alpha)$, which tends to one for
fixed $K$; it is not a proved delay ratio.  The observed finite-level ratios are in
\suppref{Table}{tab:kscale}{Table~S6}.  For geometric weights
$w_l=\gamma_K^l$, $\gamma_K=2^{1/K}-1$, the drift-plus-oracle-constant proxy for the
$l$-alternative full-intersection crossing time is
\[
\frac{\log(1/\alpha)+l\log(1/\gamma_K)}{lD},
\]
the rate $lD$ of Proposition~\ref{prop:twosided} with the oracle-inequality constant of
Theorem~\ref{thm:regret}.  It is no larger than the corresponding $e$-Bonferroni proxy
$\log(K/\alpha)/D$ exactly when
\[
\log\!\frac1{K\gamma_K}\le(1-1/l)\log\!\frac1\alpha .
\]
Because $K\gamma_K\ge\log2$, this holds for every $K$, $l\ge2$, and $\alpha\le1/3$; for $l=1$ and
$K\ge2$, $\gamma_K<1/K$ and $e$-Bonferroni is favored, the two coinciding when $K=1$.  Simulation
shows the same sparse-to-dense transition
(\suppref{Table}{tab:competitors}{Table~S7}).  Elementary
rejection under sparsity is harder, as Corollary~\ref{cor:penalty}(b) says: closure must also clear
the null-padded intersections, whose penalty is $\log(1/w_1^{(K-l+1)})$, so both full-support closure
and $e$-Bonferroni retain an order-$\log K$ term there and neither has a first-order advantage.

\subsection{The cost of the linear reduction and the flat boundary}\label{sec:dpvalidation}

For finite-state Bernoulli experiments, backward induction on the full $m$-dimensional state
brackets the unrestricted optimum of~\eqref{eq:costobj} against the calibrated local PEM threshold
rule.  Certified upper gains are below $1.3\%$ for the solved NCI intersections ($N=31$,
$m=2,\dots,4$) and below $6.7\%$
for the planned five-basket design ($N=13$, $m=2,\dots,5$).  These are the designs' actual horizons;
the larger intersections of the $K=10$ NCI design remain unsolved, and no common-horizon benchmark
is claimed for the unequal-cap realized five-basket design.  Gaussian comparisons reach $m=3$
and are numerical approximations.  The tables and computational conventions are in
\suppref{Appendix}{app:dpvalidation}{Appendix~K}.

\paragraph{What is certified, and how.} For a Lagrangian-optimal policy at multiplier
$\kappa\ge0$, let $J_\kappa$ and $p_\kappa\le\alpha$ be its reward and null crossing probability.
With $J_{\rm flat}$ the feasible flat rule's reward, weak duality brackets the constrained optimum
$J^\star$:
\begin{equation}\label{eq:dpbracket}
\max(J_{\rm flat},J_\kappa)\ \le\ J^\star\ \le\ J_\kappa+\kappa\,(\alpha-p_\kappa).
\end{equation}
We report the upper endpoint as a percentage gain over $J_{\rm flat}$, certified by
outward-rounded interval arithmetic. A feasible lattice policy can leave error budget unspent,
so its reward alone need not attain the constrained optimum.
\suppref{Appendix}{app:localbenchmarks}{Appendix~K.2} gives the
policy-feasibility and numerical-certification details.

\paragraph{The Gaussian panel is a numerical approximation.} Feasible-policy gains range from
about $12.5\%$ at $N=15$ to below $0.1\%$ at $N=100$. These lattice comparisons do not certify
continuous-model optimality bounds, and short-horizon gains are sensitive to grid spacing.
\suppref{Appendix}{app:localbenchmarks}{Appendix~K.2} reports approximation
details, sensitivity checks and an independent continuous-model evaluation of one policy, which
checks its performance rather than optimality. We do not extrapolate to the $N=2500$ simulations.

\paragraph{What the gap measures.}  At $m=1$ only flatness restricts the rule
(Theorem~\ref{thm:levelset}); for $m\ge2$ the comparisons assess flatness and scalar reduction
together.  The local PEM rule already attains the first-order rate
(Proposition~\ref{prop:twosided}); these comparisons supply no second-order delay expansion.
We deploy the calibrated flat rule; the dynamic program benchmarks its local cost.

\section{Simulations}\label{sec:sims}

Four Gaussian experiments examine local pooling, elementary delays, error control and prior sensitivity. We use
\(f_0=\mathcal N(0,1)\), \(f_A=\mathcal N(\mu,1)\) with mean shift \(\mu=0.5\), and
\(\alpha=0.05\), so \(D=0.125\) and the drift-based single-stream crossing benchmark is
\(t_{\mathrm{cross}}=\log(1/\alpha)/D=24.0\) observations. This first-order approximation omits
overshoot and does not equal a finite-level expected stopping time.
We study sparsity and prior sensitivity at \(K=10\) (\(3{,}000\) paths per configuration,
horizon \(2{,}500\)); dense scaling under all alternatives for \(K=2\) to \(12\)
(\(6{,}000\) paths); and literal closure at \(K=6\) over all \(63\) intersections and configuration
classes (\(20{,}000\) paths, horizon \(2{,}500\)).

Every ESP merge is monitored at its own Monte Carlo boundary \(\widehat b\): \(20\)--\(33\%\)
below \(1/\alpha\) over the six sizes of the \(K=6\) level-uniform closed test, and \(39.6\%\),
\(16.8\%\) and \(57.0\%\) for the level-uniform, arithmetic and product merges of the \(K=10\)
sweep.  The matched-level reference uses only the normalized ESP \(\tilde e_l\) corresponding to the
true active count \(l\); it knows that count but not the active subset. It retains threshold
\(1/\alpha\), while \(e\)-Bonferroni retains \(K/\alpha\).  In the sparsity sweep and the \(K=6\) closed test, PEM
closure uses the level-uniform prior \(\beta_l^{(m)}=1/m\) rather than the geometric default of the
applications, exposing the cost of equal mass over cardinalities. Dense scaling uses the product
corner, Bayes optimal for the all-alternatives configuration (Corollary~\ref{cor:corners}).
Complete designs, boundaries, seeding conventions and results are in
\suppref{Appendix}{app:simdetails}{Appendix~L}.

Pooling divides the full-intersection delay by roughly the number of active streams, the
arithmetic mean does not, and the product stalls under sparsity, as Proposition~\ref{prop:twosided}
predicts.  In the sparsity sweep the median first crossing of the full-intersection statistic falls
from \(46\) at \(l=1\) to \(5\) at \(l=7\) and \(3\) at \(l=10\) for PEM closure, tracking the
product's \(4\) and \(2\), whereas the arithmetic mean falls only from \(37\) to \(15\) and \(13\).
The product crosses by the horizon on fewer than \(28\%\) of paths at \(l\le3\), where its log drift
is negative, and PEM closure crosses on every simulated path at every displayed \(l\). PEM closure is not uniformly
fastest: at \(l=1\) the arithmetic mean leads, \(37\) against \(46\), consistent with the arithmetic mean's expected-log advantage at the least-favorable
configuration, although Proposition~\ref{prop:growmean} does not order stopping times.
The level-uniform mixture also assigns the singleton configuration less prior mass; the
geometric prior reduces the observed gap.  Figure~\ref{fig:phase} shows the
transition from sparse to dense evidence.

\begin{figure}[!tbp]
\centering
\includegraphics[width=0.95\linewidth]{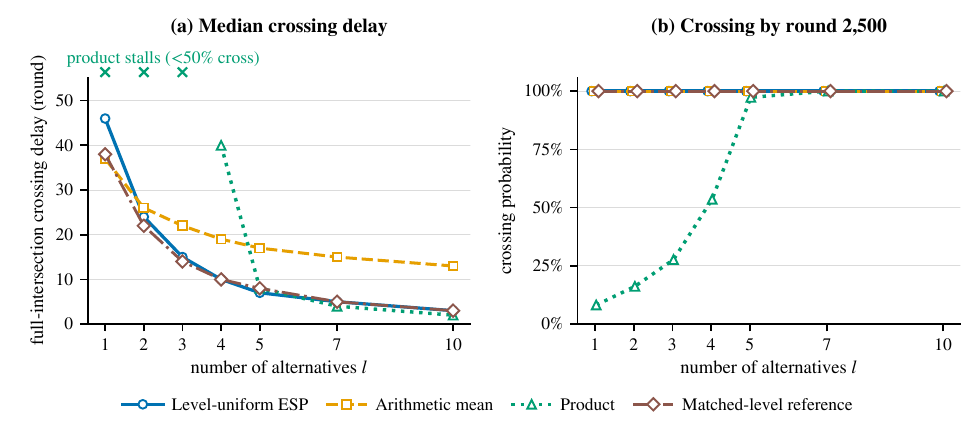}
\caption{Pooling divides the full-intersection delay by roughly the number of active streams;
the arithmetic mean does not benefit, and the product stalls below half density.
Level-uniform PEM statistic and its comparators at
$K=10$ ($3{,}000$ common-random-number paths per active count, horizon $2{,}500$).
Each merge is monitored at its own Monte Carlo boundary $\widehat b$; the matched-level
reference keeps the flat threshold by construction.  Panel (a) reports the censor-aware median
full-intersection crossing delay; product medians are deliberately omitted and marked as stalls when
fewer than half the paths cross.  Panel (b) reports crossing probabilities.  The exact counts,
intervals, and additional comparator values appear in
\suppref{Table}{tab:competitors}{Table~S7}.}
\label{fig:phase}
\end{figure}

At the elementary level, the dense-scaling study illustrates small closure costs and a growing
advantage over \(e\)-Bonferroni as \(K\) increases. These finite-level calibrated comparisons
complement the Ville-boundary asymptotics in Corollary~\ref{cor:penalty} and do not directly
measure the deadline-specific envelope in Lemma~\ref{lem:ceiling}.  Run at the product corner, the
mean rejection delay of a hypothesis is \(25.1\) at \(K=2\) and \(29.5\) at \(K=12\), close to \(t_{\mathrm{cross}}=24.0\) and
nearly flat in \(K\), whereas \(e\)-Bonferroni rises from \(32.1\) to \(46.0\) as its penalty
\(\log K\) grows.  The delay ratio rises from \(1.28\) to \(1.56\), below the drift-based ratio
\(1+\log K/\log(1/\alpha)\) from \(K=4\) upward. That ratio omits calibration, boundary
overshoot and residual closure costs, so it is only a reference for these finite-level results.

The literal closed test's error estimates are consistent with its guarantee under feasible local
boundaries. Monte Carlo calibration supplies the confidence qualification stated in
Section~\ref{sec:calibration}; the subsequent simulations independently check the implemented rule. At \(K=6\) the largest FWER estimate over the six null-containing
classes is \(0.0461\), and its Bonferroni-adjusted one-sided \(95\%\) Clopper--Pearson upper bound
is \(0.0498<0.05\); rerun at the flat threshold \(1/\alpha\), the largest estimate is \(0.0369\),
about four fifths of the calibrated figure, so the lower boundaries use more of the nominal error budget.
Every false null is rejected by the horizon on every simulated path, and the pooled median
elementary rejection delay falls from \(43\) at \(l=1\) to \(22\) at \(l=6\).
This is consistent with reduced closure cost under dense evidence; it does not establish that
the singleton alone binds on every finite-level path or replace the theoretical guarantees.

Full-support priors share the asymptotic log-growth rate but differ in finite-sample delay.
The prior sweep illustrates this sensitivity rather than a universal ordering of the family.  In the prior sweep every full-support
member's full-intersection delay falls by a factor of at least four from \(l=1\) to \(l=10\), and
the arithmetic mean's by less than three, consistent with the rate comparison in
Corollary~\ref{cor:rate}.
The priors also differ at dense configurations: at \(l=10\) the
level-uniform prior crosses at \(3\) against the geometric prior's \(8\), while at \(l=1\) the two
differ by \(10\) rounds in the other direction, illustrating the observed sparse--dense tradeoff. The expected-log minimax result in
Proposition~\ref{prop:growmean} alone does not require this ordering of delays.  We retain the
geometric prior as the primary choice on the bound-minimax criterion, a design-time property of the
weights rather than a measured delay, and read the family as the sensitivity set.

\section{Applications}\label{sec:apps}

Three studies use literal closed testing with prespecified geometric weights
\(w_l^{(m)}=(2^{1/m}-1)^l\), with the hypothesis family, information clock, marginal processes and
reporting rule frozen before confirmatory outcomes. We call this geometric PEM closure, or simply
``geometric.'' Basket-trial operating-characteristic simulations use finite-state calibrated
boundaries; the language-model and randomized-advertising replays use Ville's \(1/\alpha\).
Their marginal processes are not product-model likelihood ratios, and a null crossing law for
calibration is unavailable. Under their joint-validity assumptions, the FWER guarantee transfers;
product-model optimality need not. The true configurations in the real-data studies are unknown:
observed evidence patterns do not establish the model's dense or sparse configurations. Each study
reports full-intersection crossings and elementary rejections separately (\S\ref{sec:setup}).
The online supplement gives complete protocols, marginal-process constructions, audit identities
and result grids.

\subsection{Multi-cohort basket trials}\label{sec:basket}

Basket trials test a treatment across several disease cohorts, each contributing a separate
response stream. The primary design follows the National Cancer Institute's Molecular Analysis
for Therapy Choice (NCI-MATCH) program: \(K=10\) cohorts, nominal level \(0.05\),
a null response rate \(p_0=0.05\), a
working alternative \(p_{\rm alt}=0.25\), and a cap of 31 analyzable patients per cohort, motivated
by the signal-seeking subprotocol design~\citep{flaherty2020,odwyer2023}; two five-basket designs
reproduce the planned and realized cohort sizes of~\citet{maulikzhou2026} at \(p_0=0.15\),
\(p_{\rm alt}=0.45\), and level \(0.10\).  Each cohort's marginal evidence is the Bernoulli
likelihood ratio against the working alternative, a supermartingale throughout the composite null
\(p_k\le p_0\), and the boundaries are computed on finite response-count states. Strong FWER control requires
feasibility for every true-null subset's sampling profile, as detailed in the supplement;
thinned-accrual simulations also serve as sensitivity checks and do not by themselves certify
arbitrary asynchronous sampling.
The NCI grid spans one to ten active cohorts. These are prospective operating-characteristic
simulations, not patient-level reanalyses, over 42
nonnull configurations and six million audited paths
(\suppref{Appendix}{app:basketresults}{Appendix~P}).

The comparisons measure power and patient costs separately. Geometric PEM closure's marginal-power
point estimate exceeds product closure's in every one of the 42
nonnull configurations, and its capped mean full-intersection crossing round is smaller than
arithmetic-mean closure's in 41 of them; the exception is the single-active-cohort NCI
configuration at $p=0.25$, where geometric trails by $0.012$ rounds.  These are paired point-estimate
comparisons under common random numbers; the frozen aggregates do not retain paired-difference
variances, so the counts are descriptive and are not significance tests. Against Bonferroni at the
protocol's prespecified interim analyses, geometric also has a higher marginal-power point estimate
in all 42 nonnull configurations: \(0.795\) versus \(0.458\) in the dense five-basket configuration.
Across the 41 deterministic-accrual configurations with identifiable costs, the median reduction in
mean patient outcomes to the \emph{first} correct rejection is \(31\%\), charging paths with no
correct rejection the full design total (\suppref{Table}{tab:basketsavings}{Table~S14}).
This capped cost does not measure total accrual after individual stream discontinuations, and the
comparison concerns the specified interim-look rule rather than optimized group-sequential designs
generally. The evaluated rule has no futility stopping for lack of response.

The exact boundary is what makes the error budget usable.  Across the simulated
null-containing configurations, the largest FWER estimates were \(0.0437\) at nominal \(0.05\) and
\(0.0886\) and \(0.0644\) at nominal \(0.10\).  Under the flat threshold the same three estimates
were \(0.0198\), \(0.0379\), and \(0.0343\): a single response multiplies a cohort's evidence by
\(p_{\rm alt}/p_0\), which is \(5\) in the NCI design and \(3\) in the two five-basket designs, so
the process overshoots \(1/\alpha\), and monitoring at Ville's threshold used only between a third and two
fifths of the nominal error budget in these simulations. At the primary geometric prior the deployed boundaries are
\(40\)--\(52\%\) below \(1/\alpha\) across the three designs and all intersection sizes, and every
operating characteristic above uses those boundaries; over the full set of closure priors, including
the product corner, the reduction ranges from \(38\) to \(80\%\)
(\suppref{Table}{tab:boundarytable}{Table~S2}).  The result is a prespecified robustness pattern, not
universal dominance: full-intersection and elementary rankings need not agree, and the published
Maulik--Zhou thresholds target terminal weak FWER (control under the all-null configuration)
rather than continuous strong FWER\@.

\subsection{Benchmark-stratified language-model evaluation}

Sequential benchmark evaluation asks whether a candidate language model improves on an incumbent
within each prespecified task stratum. We compare GPT-Neo 1.3B with GPT-Neo 125M and, as a
separate replication family, GPT-2 large with GPT-2 medium, on a fixed version of the HellaSwag
validation population, a multiple-choice sentence-completion benchmark~\citep{zellers2019,eleuther2021cards,radford2019}, with ten outcome-blind activity
strata per contrast, itemwise paired correctness differences, and a common 161-item prefix per
stratum. The null in each stratum is that the candidate's accuracy over its full finite population
is no greater than the incumbent's. We use level \(\alpha=0.05\) and betting fraction
\(\eta=0.25\), which sets the evidence multiplier's sensitivity to each paired outcome.
Because the population is fixed and revealed in uniform random order, each stratum's
evidence is an exact finite-population e-process
(\suppref{Proposition}{prop:finitepop}{Proposition~N.1})
rather than a likelihood ratio: the family-wise guarantee of Proposition~\ref{prop:fwer} transfers,
the optimality results do not, and the flat threshold \(1/\alpha\) is used.  The two contrasts are
separate families; no joint twenty-hypothesis claim is made.

Observed contrasts are positive in all twenty cells. Averaged over the ten strata, the balanced
prefixes give GPT-Neo 1.3B a paired-accuracy advantage of \(22.80\) percentage points and GPT-2 large
one of \(9.01\) points. These describe scored prefixes, not the full populations of up to 2,627 items;
the random-order construction supplies inference for those populations. At the intersection level,
which asserts only that \emph{at least one} stratum improves, geometric PEM closure reaches the
full-intersection claim at \(41.6\%\) and \(36.1\%\) lower mean cost than arithmetic-mean closure
across 2,000 within-prefix order replays, and on the prespecified random orders (the canonical replays) its
crossing uses \(79.5\%\) and \(54.7\%\) fewer scored items than the complete prefix.  At the
elementary level it rejects the finite-population nulls in \(10/10\) and \(9/10\) strata, never
later than arithmetic-mean closure in any of the 20 cells and strictly earlier in 17, and never
later than \(e\)-Bonferroni, which rejects only \(4/10\) nulls in the GPT-2 family under its fixed multiplicity correction.  The least-favorable shortcut reproduces literal
closure in all 20 cells. The single GPT-2 nonrejection is inconclusive; a full-intersection claim
does not identify which slices improve. Replay costs condition on the fixed scored prefixes and
are counterfactual: recorded forward passes were not avoided. The geometric prior was prespecified
(\suppref{Appendix}{app:llmprotocol}{Appendix~O}).

\subsection{Stratified randomized-advertising replay}

The CRITEO-UPLIFTv2.1 release records \(13{,}979{,}592\) users from randomized advertising
experiments~\citep{diemert2021criteo}.  We use the randomized treatment indicator, split the users
outcome-blind, fit a model to rank predicted treated-minus-control differences outside the
confirmatory holdout, and freeze ten
score strata before their outcomes are used. Within each stratum, treated and control users are
paired; under the prospective i.i.d.\ paired-arm model and its nonpositive-mean null, the betting
evidence constructed from paired outcome differences is a supermartingale
(\suppref{Appendix}{app:criteoresults}{Appendix~Q}).  The visit endpoint is
primary and conversion a separate secondary family, each tested at \(\alpha=0.05\).
One outer round processes one matched pair from each nonexhausted stratum. This is the concentrated-evidence regime, the
empirical counterpart of sparsity: across \(1{,}467{,}725\) matched holdout pairs the visit
difference is \(0.68\) percentage points, and the highest-score stratum holds \(8.9\%\) of the pairs
but \(74.0\%\) of the net paired-score total.  Which stratum-level nulls are in fact true is not
known, and nothing below assumes it.

Geometric PEM closure rejects the highest-score visit stratum at outer round $3{,}441$, which is $34{,}410$
processed matched pairs out of $1{,}467{,}725$, or $2.344\%$ of the matched-pair replay, excluding
unmatched users. Product closure neither crosses the full-intersection threshold nor
rejects any stratum.  Arithmetic-mean closure and $e$-Bonferroni also reject at round $3{,}441$.
The stratum's own level-$\alpha$ evidence first crosses at round $2{,}943$, leaving a closure delay
of $498$ rounds on this path. This is an observed singleton benchmark, not a pathwise lower bound
for every valid test. Meanwhile,
the level-uniform prior crosses the full intersection earliest but rejects the stratum only
after $109{,}900$ pairs. In the conversion
family all four exact closures reject the same four strata against two for $e$-Bonferroni, and
geometric's last rejection comes $10.7\%$ earlier than arithmetic's in outer rounds; measured
instead in processed pairs, which charges only strata that are not yet exhausted, the reduction is
$9.9\%$.  These are outcome-blind random-order replays, since the public data carry no timestamps, and causal
interpretation requires the original randomization and sampling assumptions.

All ten empirical visit contrasts are positive, and the terminal paired-sign Holm comparator
rejects seven under its own i.i.d.\ sign model. Thus a single sequential rejection does not
establish that the remaining nine nulls are true. Positive empirical contrasts do not ensure positive growth of a fixed betting process: eight
visit strata have negative empirical log growth, contributing to the product's poor performance
on this replay. \suppref{Appendix}{app:criteoresults}{Appendix~Q} gives
the calculation. This is a descriptive calculation; the sparse-configuration likelihood-ratio
theorem does not apply directly to these betting processes.

\FloatBarrier
Table~\ref{tab:appsummary} collects the headline results and their comparison-specific scope.

\begin{table}[!tbp]
\centering
\footnotesize
\setlength{\tabcolsep}{3pt}
\caption{Headline results from the three prespecified audited application studies.  ``Earlier,'' ``lower,''
and ``above'' refer to the prespecified point estimates or frozen replay paths, not universal
dominance.  The language-model percentages compare conditional within-prefix replay means; basket rows are
Monte Carlo operating characteristics; Criteo is one canonical random-order replay.}
\label{tab:appsummary}
\begin{tabular}{@{}L{2.2cm}L{2.9cm}L{2.3cm}L{6.2cm}@{}}
\toprule
Study & Frozen contrast & Geometric PEM closure & Main comparative finding\\
\midrule
GPT-Neo & \(1.3\)B vs.\ 125M; \(+22.80\) pp
& \(10/10\) strata
& No later than arithmetic in \(10/10\) cells; replay full-intersection cost \(41.6\%\) lower than arithmetic\\
GPT-2 & Large vs.\ medium; \(+9.01\) pp
& \(9/10\) strata
& Arithmetic rejects \(8/10\), $e$-Bonferroni \(4/10\); replay full-intersection cost \(36.1\%\) lower than arithmetic\\
Basket grids & 42 nonnull configurations; six million paths
& Continuous monitoring
& Marginal power above product in \(42/42\) and full-intersection crossing time below arithmetic in \(41/42\); first-correct cost below product and planned interim analyses in \(41/41\)\\
Criteo visit & \(1{,}467{,}725\) pairs; \(+0.6796\) pp
& \(1/10\) strata
& Product rejects \(0/10\); geometric rejection uses \(2.344\%\) of matched pairs\\
Criteo conversion & Separate secondary family; \(+0.09709\) pp
& \(4/10\) strata
& \(e\)-Bonferroni rejects \(2/10\); last rejection \(10.7\%\) earlier in outer rounds than arithmetic\\
\bottomrule
\end{tabular}
\end{table}

The same prespecified full-support rule accommodates these different evidence patterns, with
different local and elementary benefits in each; neither the simulations nor the replays establish
universal dominance.

\section{Discussion}\label{sec:disc}

Pooling and elementary identification face different limits. In the product model, $l$ active
streams yield optimal first-order local delay $\log(1/\alpha)/(lD)$, while elementary rejection
faces a deadline-specific single-stream power envelope. Literal full-support PEM closure at Ville's
boundary eventually attains the singleton crossing pathwise under dense alternatives; its general-configuration sandwich has a fixed
log-threshold width. These results do not order finite-level rejection times. Horizon calibration
offers a separate improvement for a specified design.

Deployment begins with joint validity, in this order. Among valid ESP mixtures, use a scientific configuration
prior when defensible, or a full-support default with prespecified sensitivity checks. Next,
calibrate against the design's null crossing law or use Ville's $1/\alpha$. Use literal closure
when feasible; at larger $K$, label the least-favorable shortcut conservative and audit it on
tractable instances. Expected-log optimality does not order all multiple-testing policies by
power: terminal-power optimization and group-sequential design address different objectives.
The weights' posterior interpretation also requires the stated mixture model.

Conditional independence both validates higher products and identifies configuration likelihood
ratios, supporting the Bayes, worst-case and rate results. Under arbitrary contemporaneous
dependence, full-filtration marginal e-processes still support arithmetic-mean closure and
$e$-Bonferroni. Admissible dependence-robust merges are weighted arithmetic means, allowing weight
on the constant one; under permutation symmetry these mix $\bar e_1$ and
$1$~\citep{vovk2021,wang2025}. Higher products need joint conditional-moment conditions; a known
dependent null may instead supply a joint density process without ESP factorization
(\suppref{Appendix}{app:dependence}{Appendix~J}). After changing the sampling or dependence law, establish
intersection validity before recalibrating or using Ville: that threshold cannot repair an invalid
process. Dependence robustness applies to Ville monitoring; reduced boundaries need calibration
under the allowed joint law. Valid application-specific intersection processes transfer error control, but not automatically the
likelihood-ratio model's optimality, rates or logarithmic multiplicity costs. Staggered arrivals,
composite alternatives and weighted error criteria likewise need new validity
arguments~\citep{songfellouris2019}.

The horizon enters the results unevenly. Martingale and Ville validity need none; sufficiency and
expected-log optimality hold at each fixed time, and the oracle inequality holds pathwise. Rates
use an uncapped small-level limit or a growing horizon containing the crossing. The savings
objective, terminal Neyman--Pearson comparison, stopping benchmark and calibrated boundaries use
$N$ explicitly. A possible extension therefore requires prospective calibration for the extended
horizon or Ville monitoring from the outset.
Beyond the calibrated horizon, the generic bound is only $1/b_m>\alpha$ when $b_m<1/\alpha$.

Section~\ref{sec:sota} and Table~\ref{tab:streams} locate the contribution within established
merging and closed-testing frameworks. The evidence arguments specialize likelihood-ratio log
optimality and GROW theory~\citep{koolengrunwald2022,grunwald2024,larsson2024,ramlarsson2026,sahaharamdas2026}.
Related limits concern asymptotic evidence optimality~\citep{ramramdas2026}, changes of
filtration~\citep{choeramdas2025}, and sequential change of measure~\citep{wald1945}.

Three questions remain. First, for $m\ge2$, when is a time-varying level set of $E^\omega$ optimal,
and when does the full state improve constrained rejection time? Theorem~\ref{thm:levelset} treats
only the one-stream Lagrangian problem; the numerical brackets combine scalar reduction and
flatness rather than separate them. Second, a general second-order theory of elementary expected
delays is missing. The pathwise sandwich supplies neither an exchange of limits and expectations
nor a sharp renewal constant. The uniform-in-time oracle comparison with every size-$l$
configuration forces a $\log\binom ml$ log-evidence loss, but does not establish that lower bound for expected
delays. Third, can the sufficient-state reduction extend beyond the product experiment?
Dependence can destroy its factorization, and even here the fixed-time ESPs omit the permanent
intersection marks needed to summarize a closed-testing history.

\fi 

\ifavmain\ifavsupp
  \clearpage
  \bibliographystyle{imsart-nameyear}
  \IfFileExists{references.bib}{\bibliography{references}}{\bibliography{manuscript/references}}
  \global\avbibdonetrue
\fi\fi

\ifavsupp
\ifavmain\clearpage\fi
  \begin{frontmatter}
  \title{Supplementary Material for\\``\papertitle''}
  \runtitle{Supplement: \runningtitle}

  \ifdefined\AVBLIND
\begin{aug}
\author{\fnms{}~\snm{Anonymous}}
\end{aug}
  \else
\begin{aug}
\author[A]{\fnms{Prasanjit}~\snm{Dubey}\ead[label=e1]{pdubey31@gatech.edu}}
\and
\author[A]{\fnms{Xiaoming}~\snm{Huo}\ead[label=e2]{huo@gatech.edu}}
\address[A]{H.~Milton Stewart School of Industrial and Systems Engineering,
Georgia Institute of Technology,
\printead{e1,e2}}
\end{aug}
  \fi

  \end{frontmatter}

  \noindent This supplement contains the exact algorithm, complete proofs and supporting
  derivations, additional simulations, the finite-population construction, and the three application
  audits.  Equation, table, figure and algorithm numbers carry an S prefix; theorem-family results
  are numbered within the appendix that states them; a number carrying neither prefix is the main
  article's. Notation follows the main article: $K$ is the number
  of hypotheses, $S$ an intersection of size $m$, $L_{k,t}$ the marginal evidence in stream $k$,
  $N$ the monitoring horizon, and $\alpha$ the error level. PEM closure combines the
  prior-weighted elementary-symmetric-polynomial (ESP) mixtures by closed testing; geometric
  PEM closure uses weights $w_l^{(m)}=(2^{1/m}-1)^l$.
  \medskip
  \appendix
  \setcounter{equation}{0}
  \setcounter{table}{0}
  \setcounter{figure}{0}
  \setcounter{algorithm}{0}
  \renewcommand{\theequation}{S\arabic{equation}}
  \renewcommand{\thetable}{S\arabic{table}}
  \renewcommand{\thefigure}{S\arabic{figure}}
  \renewcommand{\thealgorithm}{S\arabic{algorithm}}

\noindent\textbf{Roadmap.}
Appendix~\ref{app:algorithm} gives the exact closed-testing algorithm.
Appendices~\ref{app:spending}--\ref{app:dependence} collect the supporting derivations, proofs,
self-financing analysis, and dependence extension. Appendix~\ref{app:dpvalidation} gives the
stopping-rule proofs, finite-state stopping certificates, and Gaussian numerical benchmarks,
Appendix~\ref{app:simdetails} the boundaries, tables, and
conventions of the Gaussian simulations, and Appendix~\ref{app:hetero} the heterogeneity
diagnostics. Appendix~\ref{app:finitepop} establishes the
random-order construction used by the benchmark application. Appendices~\ref{app:llmprotocol}--\ref{app:criteoresults}
give the three application protocols, computational audits, and complete results; the small-GPT
outcome tables begin in Subsection~\ref{app:llmresults}. Administrative statements close the supplement.
\medskip

\section{Exact closed-testing pseudocode}\label{app:algorithm}

The main article defines both literal closure and the conservative latched
least-favorable shortcut.
Algorithm~\ref{alg:main} gives the literal procedure used for every reported exact-closure result; it is written with the flat threshold $1/\alpha$, and the horizon-$N$ results replace $1/\alpha$ by the deployed $b_{|S|}$ and stop at $t=N$, under the corresponding feasibility guarantee.

\begin{algorithm}[!tbp]
\caption{PEM closure}
\label{alg:main}
\begin{algorithmic}[1]
\State \textbf{input:} prespecified marginal evidence processes $\{L_{k,t}\}$ whose required
intersection products are full-filtration e-processes; level $\alpha$; nonnegative weights $w^{(m)}$ for each $m$,
$\sum_{l=1}^m\binom{m}{l}w_l^{(m)}=1$
\State initialize $L_{k,0}\gets1$ for all $k$; $\mathcal A\gets[K]$ (unrejected hypotheses);
  $\mathcal M\gets\varnothing$ (permanently marked intersections)
\For{$t=1,2,\dots$}
  \For{$k\in\mathcal A$} \Comment{only unrejected streams are read; see the note below}
    \State update $L_{k,t}$ \Comment{baseline model: $L_{k,t}\gets L_{k,t-1}f_A(X_{k,t})/f_0(X_{k,t})$}
  \EndFor
  \For{each nonempty $S\subseteq[K]$ with $S\notin\mathcal M$} \Comment{$O(K^2)$ each, ESP recursion; marked $S$ are skipped}
    \State $E_{S,t}\gets\sum_{l=1}^{|S|} w^{(|S|)}_l\,e_l\big(\{L_{k,t}\}_{k\in S}\big)$
    \If{$E_{S,t}\ge1/\alpha$} $\mathcal M\gets\mathcal M\cup\{S\}$ \Comment{the mark is permanent} \EndIf
  \EndFor
  \State $\mathcal B_t\gets\varnothing$ \Comment{elementary hypotheses newly rejected at time $t$}
  \For{$k\in\mathcal A$}
    \If{every $S\ni k$ is marked rejected} $\mathcal B_t\gets\mathcal B_t\cup\{k\}$ \EndIf
  \EndFor
  \State reject every $H_k$ with $k\in\mathcal B_t$; $\mathcal A\gets\mathcal A\setminus\mathcal B_t$
  \Comment{and discontinue those streams}
\EndFor
\State \textbf{guarantee:} $\Pr_{\vec h}(\sup_t V_t>0)\le\alpha$ for every configuration
  (\mainref{Proposition}{prop:fwer})
\end{algorithmic}
\end{algorithm}

\paragraph{Why reading only the unrejected streams is enough.}  The loop reads $X_{k,t}$ only for
$k\in\mathcal A$, which is what makes the promised saving of $(N-\tau_k)^+$ observations real rather
than notional.  It costs nothing, because no unmarked intersection ever contains a discontinued
stream.  Suppose $H_k$ was rejected at $\tau_k$.  By the closure rule that is precisely the event
that every $S\ni k$ carried a mark at $\tau_k$, and marks are permanent, so at every $t>\tau_k$ every
$S\ni k$ lies in $\mathcal M$ and is skipped.  The inner loop therefore never asks for a likelihood
ratio it has stopped updating.  Two conventions make this exact rather than approximate: a mark is a
persistent bit, never re-derived from the current value of $E_{S,t}$, which may have fallen back
below the threshold; and a stream that is discontinued for any other reason, such as exhausting its
cap before the horizon, receives factor-one idle updates, which leave $E_{S,t}$ a full-filtration
e-process and therefore leave Proposition~\mainnum{prop:fwer} in force at the threshold $1/\alpha$.
Those idle updates do \emph{not} preserve a smaller calibrated boundary; see
Appendix~\ref{app:bstar}.

\paragraph{Latched versus current-time shortcut.} Requiring all $K$ least-favorable statistics to exceed their thresholds at one common time gives a current-time rule, which discards a statistic that crossed earlier and subsequently fell back.  The latched rule instead records each $s_{k,m}$ permanently.  If a stream is discontinued after a shortcut rejection, its likelihood ratio is frozen at its last observed value and continues to enter the sort at that value.  For the following comparison, evaluate the current-time rule on this same frozen array; with $\tau_k$ the literal closure time on the underlying data path,
\[
\tau_k\le\tau_k^{\mathrm{LF}}
\le\inf\bigl\{t\le N:E_{S_{k,m,t},t}\ge b_m\text{ for every }m=1,\dots,K\bigr\}.
\]
At time $s_{k,m}$ the least-favorable statistic crosses, so every size-$m$ intersection containing $k$ is then above threshold on the frozen array.  To compare with literal closure, induct on shortcut discontinuations.  Each previously discontinued stream has already been rejected by literal closure; all intersections containing it were therefore marked before freezing could affect them.  An intersection containing no such stream still has its original observations, so a new crossing is also a literal crossing.  Thus every $S\ni k$ is marked by $\tau_k^{\mathrm{LF}}=\max_ms_{k,m}$, proving the first inequality; the second is the first-crossing definition on the common array.  This argument also establishes validity for a current-time shortcut run with its own discontinuations, but does not compare two different freezing schedules pathwise.  The latched shortcut equals literal closure on a path exactly when $s_{k,m}\le\tau_k$ for every $m$.  Latching adds $K$ bits per hypothesis and no arithmetic.

\smallskip\noindent(This appendix is cited in \S\mainnum{sec:closure} of the main article.)

\section{Ville monitoring versus pointwise time-union calibration}\label{app:spending}

Ville monitoring rejects no later than pointwise time-union calibration on every sample path.  Both routes monitor the same evidence $E_t=L_{1,t}=\prod_{s\le t}r_{1,s}$, a nonnegative martingale
with $\E_{f_0}[E_t]=1$ under the null, and differ only in the rejection boundary. Ville monitoring rejects
at $\tau_{\mathrm e}=\inf\{t:E_t\ge1/\alpha\}$, valid since
$\Pr_{f_0}(\sup_tE_t\ge1/\alpha)\le\alpha$ by Ville. The pointwise time-union route, with
$\delta_t=6\alpha/(\pi^2t^2)$ so that $\sum_t\delta_t=\alpha$, rejects at
$\tau_{\mathrm s}=\inf\{t:E_t\ge1/\delta_t\}$ and is valid by a union bound over time. Since
$\delta_t\le\delta_1=6\alpha/\pi^2<\alpha$ for all $t\ge1$, we have $1/\delta_t\ge1/\alpha$, hence
\[
\tau_{\mathrm e}=\inf\{t:E_t\ge1/\alpha\}\ \le\ \inf\{t:E_t\ge1/\delta_t\}=\tau_{\mathrm s}
\qquad\text{on every sample path.}
\]

The boundary inflation is $\log(1/\delta_t)-\log(1/\alpha)=2\log t+O(1)$: this union-bound comparator
pays a $\log t$ penalty that Ville avoids. In $2\times10^5$ Gaussian simulations at
$\alpha=0.05$, every path crossed both boundaries by the specified horizon. At $\mu=1.0$ and
horizon $2{,}000$, the estimated mean delays were $7.44$ (standard error $0.01$) and $19.63$ ($0.02$), a
ratio of $2.64$; at $\mu=0.3$ and horizon $8{,}000$, they were $70.89$ ($0.13$) and $337.26$ ($0.31$), a
ratio of $4.76$.

\smallskip\noindent(This appendix is cited in \S\mainnum{sec:calibration} of the main article.)

\section{The horizon-\texorpdfstring{$N$}{N} boundary: attainment, schedules, and composite marginal nulls}\label{app:bstar}

\paragraph{Attainment.} Feasibility in \mainref{Equation}{eq:bfeasible} is what error control uses.
For finite-state enumeration the prespecified candidate grid is finite and contains the feasible
Ville threshold, so its feasible subset has a minimum.  For Gaussian increments, the running
maximum over $0\le t\le N$ is atomless above $1$.  If $\Pr_0(\max_{t\le N}E_{S,t}>1)>\alpha$,
the infimum of the feasible support candidates is above $1$, where tail continuity makes it
feasible.  The condition matters: for $m=N=1$, a Gaussian shift $\mu=4$ and $\alpha=.05$ give
$\Pr_0(L_{1,1}>1)=1-\Phi(2)<\alpha$, so every $b>1$ is feasible but $b=1$ is not.  Even this
Gaussian population candidate set has no minimum.  The implemented finite Monte Carlo candidate
set with its Ville fallback is well-defined in either case.  Restricting to the support of an
arbitrary law does not by itself ensure attainment.

Attainment is not automatic, and it does not follow from the feasible set being nonempty.  If the law has an atom at some $v$ whose mass carries the crossing probability from strictly above $\alpha$ to at most $\alpha$, with support extending continuously above $v$, then every $b>v$ is feasible while $v$ itself is not, and the feasible subset of $\mathcal V_{N,m}$ has no least element.  This is realizable inside the standing assumptions.  Let $m=1$, $N=1$ and $\alpha=0.15$, and let the per-observation likelihood ratio $r$ have null law with mass $0.7$ at $0.5$, mass $0.2$ at $2$, and mass $0.1$ spread uniformly over $[2,3]$.  Then $\E_{f_0}[r]=0.7(0.5)+0.2(2)+0.1(2.5)=1$, so $r$ is a genuine likelihood ratio and both $D$ and $\bar D$ are finite and positive.  Here $\max_{t\le1}E_{S,t}=\max(1,r)$, so for $b>1$ the crossing probability is $\Pr_0(r\ge b)$, which equals $0.3$ at $b=2$ and $0.1(3-b)\le0.1$ for $b\in(2,3]$.  Every $b>2$ is feasible, $b=2$ is not, and although $1/\alpha=20/3$ is feasible the infimum $2$ of the feasible set is not attained.  In this situation the stated fallback deploys $1/\alpha$, which is feasible whether or not the process can attain it.

For a finite-state process, let $v_j<v_{j+1}$ be consecutive possible running maxima.  Every
threshold in $(v_j,v_{j+1}]$ induces the same rejection event, whereas the inclusive threshold
$v_j$ can have a larger crossing probability.  The implementation searches a grid of gate values,
which need not equal the support of the running maximum: a gate can occur only after a larger
previous value, and rectangular state arrays can also include unreachable states.  These extra
candidates select numerical representatives within intervals on which the crossing event is
unchanged.  The recursion starts at the reachable initial state and computes each candidate's
crossing probability on the actual observation law.  Definition~\mainnum{def:bstar} therefore fixes
the candidate set explicitly.  A schedule guard re-evaluates the numerical threshold itself under
the new schedule; test equivalence under the original schedule is not assumed to transfer.
For the next counterexample take the support of the running maximum as the candidate set,
augmented by $1/\alpha$; the reachable-gate candidate set gives the same two selected values.

\paragraph{Horizon monotonicity.} At a fixed numerical threshold $b$, $\Pr_0(\max_{t\le N}E_{S,t}\ge b)$ is nondecreasing in $N$ because the running maximum is. Calibration at $N$ alone supplies no guarantee beyond $N$; Ville still bounds the crossing probability by $1/b$, which may exceed $\alpha$.  The selected value $b^\star(\alpha,N,m)$ is a different object and is not monotone in $N$.  Take Bernoulli streams with $p_0=0.2$ and $p_{\mathrm{alt}}=0.6$ at $\alpha=0.05$, $m=1$, so that a response multiplies the likelihood ratio by $3$ and a nonresponse by $1/2$.  At $N=1$ the attained values are $\{1,3\}$ and $\Pr_0(\max\ge3)=0.2>\alpha$, so no attained value below $1/\alpha$ is feasible and $b^\star=20$.  At $N=2$ the attained values are $\{1,3/2,3,9\}$ and $\Pr_0(\max\ge9)=0.04\le\alpha$, so $b^\star=9$.  The selected boundary therefore falls from $20$ to $9$ as the horizon grows.  The $N=1$ case also shows that $b^\star<1/\alpha$ does not follow from the process failing to reach $1/\alpha$: the running maximum never exceeds $3$ there, yet no attained value is feasible, because the selected finite candidate set need not offer a strictly smaller feasible representative.

\paragraph{Observation schedules.} \mainref{Definition}{def:bstar} fixes the observation schedule, and it must: the null law of the running maximum, and hence feasibility, depends on which streams are read at each round and on how many observations each contributes in all.  Idle updates with factor one preserve the martingale property of $E_{S,t}$, and therefore Ville's threshold at every stopping time, but they do not preserve a smaller calibrated threshold.  Take two independent Bernoulli streams with $p_0=1/2$ and $p_{\mathrm{alt}}=3/4$, horizon $N=6$, $\alpha=0.1$, and the full-support statistic $E_t=\tfrac14(L_{1,t}+L_{2,t})+\tfrac12L_{1,t}L_{2,t}$.  Exhaustive enumeration gives the exact synchronous boundary $b^\star=30051/8192\approx3.668335$, whose synchronous crossing probability is $51/512\approx0.099609$.  Under the schedule that gives stream~1 one observation and then freezes it while stream~2 continues to six, the crossing probability at that same threshold is $13/128\approx0.101563$, above the nominal level.  Freezing spares a null likelihood ratio the further decay that continued sampling would bring, so thinning can raise or lower the level, and the direction has to be computed rather than argued.

For a fixed configuration, closure needs feasibility only for its true-null set.  The frozen
basket calibration checked those sets in the prespecified simulation scenarios.  This is enough for
the error statement at those scenarios, but does not establish strong FWER across other signal
assignments under the same unequal-accrual design. The sensitivity priors include a
\emph{conditional-binomial} family: each stream is independently alternative with probability $q$,
conditional on at least one alternative, giving
$w_l^{(m)}=q^l(1-q)^{m-l}/\{1-(1-q)^m\}$, $0<q<1$.
The one recorded guard increase was the NCI
conditional-binomial $q=0.5$ entry at $m=3$: accrual $(1,0.95,0.90)$ gave level $0.050017$ at
$10.049901$, so the boundary was increased to $10.143601$, giving thinned level $0.049410$.

A separate exact audit checks every nonempty subset of the five-basket designs at all four closure
priors.  All $31$ subsets pass in each design: the worst levels are $0.099801$ for planned accrual
and $0.099787$ for the realized caps $(20,10,8,18,7)$.  Under synchronous NCI accrual, exchangeability
already covers every subset of a given size.  For the additional NCI accrual vector
$(1,.95,.9,.85,.8,.9,.85,1,.95,.9)$, the audit covers all five singleton profiles, all $14$ distinct
size-two profiles, and the size-three true-null profile used in the thinned scenario.
All entries pass except the same level-uniform $m=2$ boundary under three profiles:
$(.85,1)$, $(.9,1)$ and $(.95,1)$ give $0.0500254$, $0.0500315$ and $0.0500255$.
These failures concern assignments absent from the frozen thinned scenario, whose true-null set
has size three; its seven prior-specific thresholds all pass, with maximum level $0.049955$.
The thinned sensitivity results therefore retain their scenario-specific interpretation, not an
all-configurations strong-FWER claim.

The current boundary generator addresses prospective deployment separately.  Its default enumerates
all nonempty subset profiles under each prespecified accrual/cap design, independently of simulated
signal labels.  It increases any failing threshold; for example the level-uniform $m=2$ NCI
threshold increases from $10.257182$ to $10.611539$.  If a required asymmetric state space exceeds
the declared computational budget, it uses $1/\alpha$ for that size and records an analytic Ville
certificate instead of skipping the profile.  A historical reproduction option retains the original
scenario-only guard and labels that narrower coverage.  These prospective changes do not alter the
frozen boundaries or reported operating characteristics.  Table~\ref{tab:scheduleaudit} describes
the coverage of those frozen results.

\begin{table}[!tbp]
\centering
\footnotesize
\setlength{\tabcolsep}{3pt}
\caption{Schedule and subset coverage of the frozen basket boundaries.  Full accrual and both
five-basket designs have all-subset coverage; the NCI thinned sensitivity has the narrower coverage
listed below.  Outside the verified class a design must be re-priced or
monitored at $1/\alpha$, which Ville's inequality makes valid at every stopping time and under every
schedule.}
\label{tab:scheduleaudit}
\begin{tabular}{@{}L{2.3cm}L{2.4cm}L{5.2cm}L{3.4cm}@{}}
\toprule
Design & Caps and accrual & Verified class & Worst verified level\\
\midrule
NCI-MATCH-style, $K=10$, $N=31$, $\alpha=0.05$
& caps all $31$; full accrual, and one thinned accrual vector
& every true-null set under full accrual; the realized thinned $m=3$ profile.  Other thinned
  assignments are not covered uniformly: the level-uniform $m=2$ entry fails on three profiles.
& $\le0.05$ under full accrual; $0.049955$ across priors for the realized thinned $m=3$ profile\\
\addlinespace
Five-basket planned, $K=5$, $N=13$, $\alpha=0.10$
& caps all $13$; full accrual
& all $31$ nonempty true-null subsets, each following the calibration schedule
& $0.099801$\\
\addlinespace
Five-basket realized, $K=5$, $N=20$, $\alpha=0.10$
& caps $(20,10,8,18,7)$; full accrual
& all $31$ nonempty true-null subsets, at all four closure priors, checked exactly
& $0.099787$\\
\bottomrule
\end{tabular}
\end{table}

Outside the verified class a calibrated boundary carries no guarantee and must not be assumed to
transfer.  A design with other caps, other accrual probabilities, or a true-null family the audit did
not price must either be re-priced in the same way before deployment or monitored at $1/\alpha$.  One
case needs no computation: at $m=1$ the statistic is the stream's own likelihood ratio, and any
schedule with cap $c\le N$ lets it visit only a prefix of the path it would visit under full accrual,
so its running maximum is pathwise no larger and the full-accrual boundary stays feasible.  For
$m\ge2$ no such comparison holds, as the two-stream example above shows.  A schedule-free bound does
exist, since the gate is coordinatewise nondecreasing and the maximum over any schedule respecting the
caps is therefore at most the gate evaluated at the independent per-stream running maxima over their
own caps; but on these designs it is far too conservative to certify anything, reaching $0.19$ against
$\alpha=0.10$ at the geometric prior with $m=5$, so it is recorded here as a structural observation
rather than as a usable certificate.

\paragraph{Composite marginal nulls.} For a composite marginal null such as $H_k:p_k\le p_0$, the boundary is computed at $p_k=p_0$ for every $k\in S$.  Justifying this requires a monotone coupling of the \emph{data}, not merely of their marginal law: construct the Bernoulli draws from a common uniform variable $U_{k,t}$, setting the response to $1$ iff $U_{k,t}\le p_k$, so raising $p_k$ raises the realized data pathwise.  When the per-observation likelihood-ratio factor is increasing in the response, as it is for the basket model of the main article, the coupling makes $E_{S,t}$, and hence $\max_{t\le N}E_{S,t}$, pathwise nondecreasing in every $p_k$, so crossing $b^\star$ by time $N$ is an increasing function of the coupled data, not merely an event whose marginal probability happens to be ordered at each $t$.  Closure needs this jointly, for the intersection null $\{p_k\le p_0:k\in S\}$ rather than each marginal $H_k$ in isolation: the coupling raises every coordinate $p_k$ at once, so $(p_k)_{k\in S}=(p_0,\dots,p_0)$ is the least-favorable configuration for $H_S$, and the boundary is evaluated there.

\paragraph{Computation.} Both calibration routes search a finite sorted candidate list by
bisection (Algorithm~\ref{alg:bstar}). The population crossing probability
$\pi(b)=\Pr_0(\max_{t\le N}E_{S,t}\ge b)$ is nonincreasing in $b$ and, for a finite-state law,
constant between consecutive possible running maxima. On the Monte Carlo route, it is the
empirical hit count and its upper confidence bound that are nonincreasing step functions of the
boundary; the continuous population crossing law need not be stepwise.

The exact route applies to Bernoulli marginals with null response probability $p_0$ and working
alternative $p_{\rm alt}$.  A response multiplies $L_{k,t}$ by $r_1=p_{\rm alt}/p_0$ and a
nonresponse by $r_0=(1-p_{\rm alt})/(1-p_0)$. If $n_k$ is the number of responses among
$t$ observations in stream $k$, then $L_{k,t}=r_1^{n_k}r_0^{\,t-n_k}=c^{\,n_k}r_0^{\,t}$ with $c=r_1/r_0$, and $E_{S,t}=G_t(n_1,\dots,n_m)$
is a function of the count vector on the lattice $\{0,\dots,t\}^m$.  Step~1 tabulates $\log G_t$ by
the ESP prefix recursion in log scale.  Step~2 evaluates $\pi(b)$ by backward recursion with
absorption. Write $n=(n_1,\dots,n_m)$ and let $F_t(n)$ be the probability of a crossing from
state $n$ at time $t$ through $N$, counting a crossing at $t$ itself. Then
$F_N(n)=\mathbf 1\{G_N(n)\ge b\}$, and for $t<N$, $F_t(n)=1$ if $G_t(n)\ge b$ and
otherwise
\[
F_t(n)=\sum_{u\in\{0,1\}^m}p_0^{|u|}(1-p_0)^{m-|u|}\,F_{t+1}(n+u),
\]
where $|u|=\sum_k u_k$ counts the next round's responses. This is the expectation over $m$
independent Bernoulli$(p_0)$ draws; then
$\pi(b)=F_0(0,\dots,0)$.  Step~3 collects the distinct tabulated values $G_t(n)\le1/\alpha$ as
candidates, checks that the largest is feasible, returning $1/\alpha$ if it is not, and bisects to
the smallest feasible candidate.

Independence lets the implementation apply the one-coordinate Bernoulli transitions successively,
without explicitly summing over all $2^m$ response vectors at each state. Swept literally this costs
$(N+1)^m$ states per time layer and $O(mN(N+1)^m)$ per evaluation of
$\pi$, which is affordable up to about four million lattice sites and is what the implementation uses
there, so that published entries do not move in their last bits over a change of engine.  Beyond
that threshold three exact reductions are used instead.

\emph{(i) Permutation-orbit collapse.}  Under the all-null law the $m$ cohorts are i.i.d.\ and
$G_t$ is symmetric, so the transition kernel and the gate both factor through the \emph{multiset}
of counts.  Lumping the lattice onto multisets is therefore an exact aggregation of the Markov
chain, not an approximation.  It requires exchangeability and is switched off whenever caps or
accrual differ across cohorts.

\emph{(ii) Absorbing count cap.}  For any prior with $w_1>0$,
\[
E_{S,t}\;\ge\;w_1e_1\bigl(\{L_{k,t}\}\bigr)\;\ge\;w_1\max_kL_{k,t}
   \;=\;w_1c^{\,\max_kn_k}r_0^{\,t}\;\ge\;w_1c^{\,\max_kn_k}r_0^{\,N},
\]
using $r_0<1$ and $t\le N$.  Let $x^\star$ be the least integer with
$w_1c^{\,x^\star}r_0^{\,N}>1/\alpha$.  A cohort whose count reaches $x^\star$ therefore puts the gate
strictly above every candidate, at that look and at every later one, since the displayed lower bound
does not decrease as counts accumulate.  The recursion keeps live counts only through
$x^\star-1$ and absorbs transitions to $x^\star$.  Nothing live is discarded: the bound is uniform in $t$, so no path is
removed that could still fail to cross, and the candidate pool is unaffected because every discarded
gate value exceeds $1/\alpha$ and was never a candidate.  The implementation re-checks this on the
last look, where the gate is smallest, rather than trusting the algebra.  The cap is vacuous when
$w_1=0$, and the code refuses to apply it there.

\emph{(iii) Product chain.}  At the product corner $w_1=0$ and the cap is unavailable, but the gate
is $w_m\prod_kL_{k,t}=w_mc^{\,\sum_kn_k}r_0^{\,mt}$, a function of the total response count alone,
whose null law after $t$ rounds is $\mathrm{Binomial}(mt,p_0)$.  The state is then the scalar total.

For the NCI design, $m=10$ and $N=31$, the literal final layer has $32^{10}\approx1.13\times10^{15}$
states.  Reductions (i) and (ii) leave $8{,}008$ live multiset states at the arithmetic prior,
$19{,}448$ at the geometric and the two sparser conditional-binomial priors, $43{,}758$ at
level-uniform and $92{,}378$ at conditional-binomial $q=0.5$, the largest of the seven; reduction
(iii) leaves $311$.  The per-entry calibration times recorded in the frozen configuration files sum
to $228$, $26$ and $541$ seconds for the NCI, planned and realized designs, respectively.  These
historical timings cover synchronous calibration and exclude the later schedule-guard evaluations;
they are provenance records rather than a fresh hardware benchmark.  Table~\ref{tab:boundarytable} lists every deployed
entry.

A separate retained benchmark runs only the NCI $m=10$ geometric calibration in a fresh Python
process. It took $3.12$ seconds for the calibration call and reached $115$ MiB peak resident memory,
including the interpreter and imported libraries, on macOS x86-64 with Python~3.12.7,
NumPy~1.26.4 and SciPy~1.14.1.  The benchmark record includes the command, source hashes and
environment.  Concurrent verification work may affect this single timing; it is not extrapolated
to other priors or schedule guards.

The regression suite compares $57$ boundary calculations and $583$ crossing probabilities from
the collapsed and full engines on tractable sizes, requiring discrepancies below $10^{-13}$ in
both log threshold and level.  The engines can name different floating-point copies of the same
gate value.  Every calibration comparison $G_t(n)\ge b$ uses a tolerance of $10^{-9}$ in log scale, which absorbs floating-point noise
between permutation-equivalent lattice states without merging genuinely distinct values, since
distinct gate values near the boundary are about $10^{-4}$ apart; erring toward counting a crossing
can only raise the selected boundary, never lower it.  There is deliberately no Monte Carlo route
here: Bernoulli increments put the accumulated log likelihood ratio on a lattice, simulated path
maxima tie in bulk, and a hit count taken at exact floating-point equality undercounts, after which a
Clopper--Pearson bound accepts thresholds that overspend $\alpha$.

The Monte Carlo route applies when no finite lattice exists, as for the Gaussian model.  Step~1
simulates $M$ independent null paths of the $m$ streams over $N$ steps, computes $\log E_{S,t}$ at
every step by the log-ESP recursion, and records the path maximum.  Step~2, for a candidate $b$,
counts the paths whose maximum is at least $\log b$, with the same tolerance, and declares $b$
feasible if the one-sided Clopper--Pearson upper bound at confidence $\gamma$ on the crossing
probability, the $\gamma$-quantile of the $\mathrm{Beta}(\mathrm{hits}+1,\,M-\mathrm{hits})$
distribution, is at most $\alpha$.  A point estimate alone does not certify the true
crossing probability.  Step~3 takes the distinct path maxima at most $\log(1/\alpha)$
as candidates, checks the largest, and bisects to the smallest feasible one; if none is feasible,
$\widehat b=1/\alpha$, valid by Ville's inequality regardless of the simulation.  The Gaussian
designs use $M=50{,}000$ and $\gamma=0.999$.

\paragraph{Why searching the calibration paths needs no multiplicity correction.}
Let $Z_1,\dots,Z_M$ be independent path maxima with distribution function $F$, and write
$\pi(b)=1-F(b-)$ and $H(b)=\sum_i\mathbf1\{Z_i\ge b\}$.  Define the deterministic allowable count
\[
h^\star=\max\{h\in\{0,\dots,M-1\}:
  \mathrm{Beta}^{-1}_{h+1,M-h}(\gamma)\le\alpha\},
\]
with $h^\star=-1$ if this set is empty.  If $h^\star<1$, no observed maximum qualifies and the
routine uses the Ville fallback.  Otherwise every nonfallback return satisfies $H(\widehat b)\le
h^\star$; counting additional near-ties can only strengthen this inequality for the exact hit count.

For a proof valid with atoms, represent the calibration sample as $Z_i=F^{-1}(U_i)$ with
independent uniform $U_i$.  If $\pi(b)>\alpha$, then $F(b-)<1-\alpha$, so every $U_i>1-\alpha$
implies $Z_i\ge b$.  This implication holds simultaneously for all infeasible $b$.  Consequently,
with $B=\sum_i\mathbf1\{U_i>1-\alpha\}\sim\mathrm{Binomial}(M,\alpha)$,
\[
\{\pi(\widehat b)>\alpha,\ \widehat b\text{ is not the fallback}\}
 \subseteq\{B\le h^\star\},\qquad
\Pr(B\le h^\star)\le1-\gamma.
\]
The final inequality is the defining binomial-tail identity of the one-sided
Clopper--Pearson bound.  Ville's fallback is feasible on every calibration draw, so
$\Pr\{\pi(\widehat b)\le\alpha\}\ge\gamma$.  The proof does not condition on the selected
threshold or assume that counts remain binomial after selection.
In the continuous, tolerance-free case, before the restriction to candidates at most $1/\alpha$,
the selected order statistic is $Z_{(M-h^\star+1)}$, whose inclusive upper-tail mass has law
$\mathrm{Beta}(h^\star,M-h^\star+1)$.  Ties, tolerance and fallback are covered by the preceding
argument, without using this sharper special-case identity.

For data independent of calibration, the feasibility event yields conditional level $\alpha$;
integrating over calibration gives $\alpha+(1-\alpha)(1-\gamma)$.  The uncapped order-statistic
expected-tail identity alone does not establish exact unconditional level $\alpha$ for the
implemented rule, because the Ville fallback changes its distribution.  Calibration confidence
across intersection sizes is combined as stated in \S\mainnum{sec:calibration}.

\begin{table}[!tbp]
\centering
\footnotesize
\setlength{\tabcolsep}{3pt}
\caption{Every deployed basket boundary, by design and closure prior.  Each row ranges over the
intersection sizes $m=1,\dots,K$; the deployed $b_m$ is reported at the two ends, the null crossing
probability and the reduction below $1/\alpha$ as ranges over $m$.  ``Engine'' records which exact
computation produced the entries: \emph{lattice} is the literal $(N+1)^m$ count sweep,
\emph{collapsed} the permutation-orbit and count-cap reduction, or the total-count chain at the
product corner.  No entry is estimated, so none carries a Monte Carlo confidence caveat.  ``Lifted''
counts entries raised above the unguarded synchronous optimum so as to remain feasible under a
realized true-null accrual profile; the one lift is the conditional-binomial $q=0.50$ prior at $m=3$
described above.  Per-size entries, full digests, and per-profile levels are in the machine-readable
configuration files.}
\label{tab:boundarytable}
\begin{tabular}{@{}llrrrrlr@{}}
\toprule
Design & Closure prior & $b_m$ at $m{=}1$ & at $m{=}K$ & Null level & Reduction (\%) & Engine & Lifted\\
\midrule
NCI & Geometric (primary) & 11.236 & 9.709 & 0.0485--0.0500 & 43.8--51.5 & collapsed, lattice & ---\\
 & Level-uniform & 11.236 & 9.137 & 0.0458--0.0500 & 43.8--60.7 & collapsed, lattice & ---\\
 & Binomial $q=0.10$ & 11.236 & 9.829 & 0.0494--0.0500 & 43.8--50.9 & collapsed, lattice & ---\\
 & Binomial $q=7/27$ & 11.236 & 10.126 & 0.0492--0.0500 & 43.8--51.0 & collapsed, lattice & ---\\
 & Binomial $q=0.50$ & 11.236 & 9.779 & 0.0471--0.0500 & 43.8--56.5 & collapsed, lattice & 1\\
 & Arithmetic & 11.236 & 9.170 & 0.0477--0.0500 & 43.8--54.2 & collapsed, lattice & ---\\
 & Product & 11.236 & 4.027 & 0.0176--0.0496 & 43.3--79.9 & collapsed, lattice & ---\\
\addlinespace
Planned & Geometric (primary) & 5.945 & 4.882 & 0.0893--0.0997 & 40.6--51.6 & lattice & ---\\
 & Level-uniform & 5.945 & 4.907 & 0.0893--0.0998 & 40.6--57.1 & lattice & ---\\
 & Arithmetic & 5.945 & 4.619 & 0.0893--0.0994 & 40.6--53.8 & lattice & ---\\
 & Product & 5.945 & 3.466 & 0.0651--0.0987 & 38.0--65.3 & lattice & ---\\
\addlinespace
Realized & Geometric (primary) & 5.945 & 5.626 & 0.0951--0.0998 & 40.3--43.7 & lattice & ---\\
 & Level-uniform & 5.945 & 5.516 & 0.0939--0.0999 & 40.0--49.3 & lattice & ---\\
 & Arithmetic & 5.945 & 5.655 & 0.0993--0.1000 & 40.3--43.7 & lattice & ---\\
 & Product & 5.945 & 3.466 & 0.0656--0.0998 & 38.0--65.3 & lattice & ---\\
\bottomrule
\end{tabular}
\end{table}

\begin{algorithm}[!tbp]
\caption{Computing a deployed horizon-$N$ boundary $b_m$: exact enumeration or Monte Carlo}
\label{alg:bstar}
\begin{algorithmic}[1]
\State \textbf{input:} $\alpha$, $N$, $m$, weights $w^{(m)}$; and either $(p_0,p_{\rm alt})$ for
the exact route, or a null increment sampler with path count $M$ and confidence $\gamma$ for the
Monte Carlo route
\State \textbf{exact route, state space:} \textbf{if} $(N+1)^m$ is affordable, use the full count
lattice; \textbf{else if} $w^{(m)}_1>0$, collapse to multisets with live counts below
$x^\star=\min\{x: w_1^{(m)}c^{\,x}r_0^{\,N}>1/\alpha\}$, absorbing transitions to $x^\star$;
\textbf{else if} the prior is the product corner, use the scalar total response count;
\textbf{else} use an affordable exact state space or return $1/\alpha$
\State \textbf{exact route:} tabulate $\log G_t$ on that state space for $t=0,\dots,N$; let
$\pi(b)$ be the crossing probability from the backward recursion with absorption;
candidates $\gets$ distinct tabulated gate values $\le1/\alpha$
\State \textbf{Monte Carlo route:} simulate $M$ null paths and record $\max_{t\le N}\log E_{S,t}$
on each, including $E_{S,0}=1$; let $\pi(b)$ be the Clopper--Pearson upper bound at confidence $\gamma$ on the fraction
of maxima at least $\log b$; candidates $\gets$ distinct maxima at most $\log(1/\alpha)$, converted to evidence scale
\State sort the candidates; \textbf{if} none exist or $\pi(\text{largest candidate})>\alpha$ \textbf{then
return} $1/\alpha$
\State bisect over the sorted candidates for the smallest $b$ with $\pi(b)\le\alpha$; count calibration
crossings with a tolerance of $10^{-9}$ in log scale (deployment uses $E_{S,t}\ge b$, without lowering $b$)
\State \textbf{schedule guard (exact route):} \textbf{for} each true-null cap and accrual profile
the design admits, recompute $\pi$ on the asymmetric state space;
\textbf{while} any profile has $\pi(b)>\alpha$, advance $b$ to the next grid candidate, capped at
$1/\alpha$; if a required calculation is unaffordable, use $1/\alpha$ with its analytic certificate
\State \textbf{return} $b_m=b$ with the per-profile levels or Ville certificate. Before guarding,
the exact return is $b^\star$; on the Monte Carlo route record $\widehat b$, the hit count, $M$ and $\gamma$
\end{algorithmic}
\end{algorithm}

Write $\delta=1-\gamma$. For a procedure that must be level $\alpha$ unconditionally, calibrating to the internal target
$(\alpha-\delta)/(1-\delta)$ when $\delta<\alpha$ makes the integrated bound equal to $\alpha$.
The cost is about $\delta$ in spent error budget. For simultaneous feasibility across sizes,
confidence $1-\delta_m$ at size $m$ gives joint confidence at least $1-\sum_m\delta_m$ by a union
bound; a fixed configuration's FWER proof uses only its true-null set's boundary.

\subsection{Closed-testing error guarantees}\label{app:closureproofs}

Both guarantees reduce family-wise error to a crossing by the true-null intersection. Let $S_0$
be the set of true nulls and $V_t$ the number of false rejections by time $t$. For a nonempty
$S_0$, write $b=b_{|S_0|}$ for its deployed threshold in the first proof.

\begin{proof}[Proof of Proposition~\mainnum{prop:bstarvalid}]
If $S_0=\varnothing$, then $V_t=0$ identically and \mainref{Equation}{eq:fwerN} holds trivially.  Otherwise,
rejecting any $k\in S_0$ requires rejecting $H_{S_0}$.  Under Assumption~\mainnum{ass:indep} the law of
the streams in $S_0$, under the specified outcome-independent accrual design, is the all-null law
$P_0^{S_0}$ used by the feasibility hypothesis.  Hence
$\Pr(\exists t\le N: E_{S_0,t}\ge b)\le\alpha$.  The inclusion
$\{\exists t\le N:V_t>0\}\subseteq\{\exists t\le N: E_{S_0,t}\ge b\}$ finishes the proof.
\end{proof}

\begin{proof}[Proof of Proposition~\mainnum{prop:fwer}]
Let $S_0$ be the true-null set. If $S_0=\varnothing$, then $V_t=0$ identically. Otherwise, rejecting
any $k\in S_0$ requires rejecting the nonempty true intersection $H_{S_0}$.  Put
$\sigma_{S_0}=\inf\{t:E_{S_0,t}\ge1/\alpha\}$.  For every deterministic $T$, the e-process property at
the bounded stopping time $\sigma_{S_0}\wedge T$ and nonnegativity give
\[
1\ge \E[E_{S_0,\sigma_{S_0}\wedge T}]
  \ge \alpha^{-1}\Pr(\sigma_{S_0}\le T).
\]
Letting $T\to\infty$ yields $\Pr(\sigma_{S_0}<\infty)\le\alpha$.  Since
$\{\sup_tV_t>0\}\subseteq\{\sigma_{S_0}<\infty\}$, the result follows.  The full-filtration
test-martingale argument of \S\mainnum{sec:object} verifies the stated baseline specialization.
\end{proof}

\smallskip\noindent(This appendix is cited in \S\S\mainnum{sec:calibration} and \mainnum{sec:validity} of the main article.)

\section{Proofs of Lemma~\mainnum{lem:sufficient} and Proposition~\mainnum{prop:espcoords}: the sufficiency results}\label{app:sufficiency}

\paragraph{Proof of Lemma~\mainnum{lem:sufficient} (minimal sufficiency).} The family is dominated by $P_0^{(t)}$, and conditional independence gives $dQ_c^{(t)}/dP_0^{(t)}=L_{c,t}=\prod_{k\in c}L_{k,t}$, a measurable function of $T_t=(L_{k,t})_{k\in S}$ for every $c$; the Halmos--Savage factorization criterion for dominated families gives sufficiency.  For minimality, if $U$ is any sufficient statistic for the family then, again by Halmos--Savage, each density $dQ_c^{(t)}/dP_0^{(t)}$ has a version measurable with respect to $\sigma(U)$; the singleton configurations $c=\{k\}$ give $dQ_{\{k\}}^{(t)}/dP_0^{(t)}=L_{k,t}$, so every coordinate of $T_t$ is a $\sigma(U)$-measurable function, i.e.\ $T_t$ is a function of $U$.  Since this holds for every sufficient $U$, $T_t$ is minimal sufficient. \qed

\paragraph{Proof of Proposition~\mainnum{prop:espcoords}(a) (symmetrization).}
For a feasible integrable scalar rule $\delta$, put $\delta_T=\E_{P_0}[\delta\mid T_t]$.
Sufficiency supplies the same conditional expectation under each configuration law.  Conditional
Jensen's inequality therefore gives no greater risk at each configuration, and the assumed
Rao--Blackwell closure makes $\delta_T$ feasible.  Next average $\delta_T$ over the finite permutation
group.  Convexity and relabeling closure preserve feasibility.  Convexity of the loss, its invariance,
and the exchangeable prior show that this average has no larger Bayes risk; the group step need
not improve risk separately at every configuration.  The result is a symmetric function of $T_t$.
Apply these two operations to an optimizer if one exists.

For the scalar e-value class, nonnegativity and the budget $\E_{P_0}E_t\le1$ are preserved by both
operations.  Relabeling preserves the null product law, and conditioning preserves the null mean,
so all the required closure properties hold.  For vector actions whose coordinates carry stream
labels, the appropriate group averaging also acts inversely on the output labels and gives an
equivariant rule; scalar invariance is not the conclusion for that different decision problem.
This is a fixed-horizon assertion and does not eliminate the history needed to implement a
sequential crossing rule. \qed

\paragraph{Proof of Proposition~\mainnum{prop:espcoords}(b) (the invariant coordinates).}

The ESP vector determines the unordered likelihood-ratio vector, and sorting recovers a
measurable representative. Let $C=\{x\in\mathbb R^m:x_1\le\cdots\le x_m\}$, a Borel set.  The restriction $e|_C$ of $e=(e_1,\dots,e_m)$ is injective (by Vieta $\prod_k(z-x_k)=\sum_{j=0}^m(-1)^je_j(x)\,z^{m-j}$, so a monic polynomial determines its root multiset, its coefficients being the $e_j(x)$ up to sign, and each multiset has exactly one nondecreasing representative) and Borel.  By the Lusin--Souslin theorem~\citep[Theorem~15.1]{kechris1995}, an injective Borel map between standard Borel spaces has Borel image and Borel inverse, so $e|_C$ is a Borel isomorphism onto $e(C)=e(\mathbb R^m)$.

The containment $\sigma(e_1(T_t),\dots,e_m(T_t))\subseteq\sigma(T_t)^{\mathfrak S_m}$ is immediate, since each $e_l$ is symmetric and Borel.
For the reverse containment, let $E\in\sigma(T_t)^{\mathfrak S_m}$; since $E\in\sigma(T_t)$, $E=T_t^{-1}(A)$ for some Borel $A\subseteq\mathbb R^m$, and invariance of $E$ gives $T_t^{-1}\big(\bigcap_{\rho\in\mathfrak S_m}\rho(A)\big)=\bigcap_{\rho\in\mathfrak S_m}\rho^{-1}(E)=E$, so $A^\star=\bigcap_{\rho\in\mathfrak S_m}\rho(A)$ is a symmetric Borel set with $E=T_t^{-1}(A^\star)$: every invariant event is $T_t^{-1}(A)$ for some symmetric Borel $A$, so we may take $A$ symmetric.  With $A$ symmetric and Borel, $A=e^{-1}\big(e(A\cap C)\big)$, and $e(A\cap C)$ is Borel by the above, so $E=T_t^{-1}(A)\in\sigma(e_1(T_t),\dots,e_m(T_t))$.  The two $\sigma$-algebras therefore coincide. \qed

\section{Per-step GRO derivation}\label{app:gro}

The per-step growth-rate-optimal e-value is the likelihood ratio, and under the product model its worst-case version is the arithmetic mean $\tilde e_1$ of the per-step ratios.  For the per-step e-value $M\ge0$ with $\E_{P_0}[M]\le1$, growth-rate optimality maximizes $\E_Q[\log M]$ for a fixed one-step alternative law $Q$.
Put $L=dQ/dP_0$.  Under the standing equivalence of $Q$ and $P_0$, Jensen's
inequality yields, with extended-real logarithms,
\[
\E_Q[\log M]-\KL(Q\|P_0)
=\E_Q[\log(M/L)]
\le \log\E_Q[M/L]
=\log\E_{P_0}[M]\le0.
\]
Equality requires the budget to be tight and $M/L$ to be constant $Q$-almost surely, hence
$M^\star=L$.  This argument covers arbitrary measurable candidates without a functional
stationarity assumption. For
the composite alternative $\mathcal C$, the worst-case (GROW) optimum is the likelihood ratio against
the KL-projection $\bar Q_{\lambda^\star}$~\citep{grunwald2024}. Since
$dQ_c/dP_0=\prod_{k\in c}r_k$,
\[
M^\star=\sum_c\lambda_c^\star\prod_{k\in c}r_k.
\]
When $\lambda_c^\star=a_{|c|}$ is exchangeable, this is the PEM statistic
$\sum_l a_l e_l(r_1,\dots,r_m)$ from the main article; equivalently, with total level mass
$\beta_l^\star=\binom{m}{l}a_l$, it is $\sum_l\beta_l^\star\tilde e_l(r_1,\dots,r_m)$.
Under the present product model,
\mainref{Proposition}{prop:growmean} identifies
$\lambda^\star$ as uniform
on the singletons, so $M^\star=m^{-1}\sum_{k\in S}r_k=\tilde e_1$.

\smallskip\noindent(This appendix is cited from Appendix~\ref{app:bayes} and supports \S\mainnum{sec:opt} of the main article.)

\section{Proofs of prior-matched optimality}\label{app:bayes}

The proof of \mainref{Theorem}{thm:bayes} has two steps: Gibbs' inequality identifies the unique terminal maximizer, and a linear-independence argument identifies its weights.  Fix $t\ge1$ and a local e-process $E$ adapted to $(\F_s^S)_{s\ge0}$. Define the finite measure $G_t$ on
$\F_t^S$ by $dG_t=E_t\,dP_0^{(t)}$. Because deterministic $t$ is a stopping time,
$G_t(\Omega)=\E_{P_0^{(t)}}E_t\le1$. The two finite KL assumptions imply that $f_0$ and $f_A$ are
mutually absolutely continuous, hence $\bar Q_\omega^{(t)}$ and $P_0^{(t)}$ are equivalent. With
extended-real logarithms (if the left side is $-\infty$, the inequality is immediate), Jensen's
inequality gives
\[
\begin{split}
\E_{\bar Q_\omega^{(t)}}[\log E_t]
-\KL\big(\bar Q_\omega^{(t)}\,\big\|\,P_0^{(t)}\big)
&=\E_{\bar Q_\omega^{(t)}}\!\left[
  \log\frac{dG_t}{d\bar Q_\omega^{(t)}}\right]\\
&\le \log G_t(\Omega)\le0.
\end{split}
\]
The reference value is finite because convexity of KL yields
$\KL(\bar Q_\omega^{(t)}\|P_0^{(t)})\le\sum_c\omega_c l_cDt$.
Equality requires both $G_t(\Omega)=1$ and
$dG_t/d\bar Q_\omega^{(t)}=1$ almost surely, hence
$G_t=\bar Q_\omega^{(t)}$. Thus the unique maximizing terminal e-value is
$d\bar Q_\omega^{(t)}/dP_0^{(t)}$. The likelihood-ratio identity~(\mainnum{eq:lr_identity}) of the main
article identifies it with the
accumulated mixture at $w=\omega$. The fixed configuration mixture on the infinite product
experiment has these consistent finite-horizon likelihood ratios, so the same test martingale is
optimal at every horizon simultaneously.

It remains to identify its weights. Put $\mu_0=f_0^{\otimes t}$ and
$\mu_1=f_A^{\otimes t}$. Since $D>0$, $\mu_0\ne\mu_1$, and the two probability measures are linearly
independent. Choose a measurable set $A$ on which they differ. Linear combinations of
$1_A$ and $1$ give bounded functions $u_0,u_1$ satisfying
$\int u_i\,d\mu_j=\mathbf 1\{i=j\}$. If a linear combination of the $m$-fold tensor products
$\bigotimes_{k=1}^m\mu_{b_k}$, $b\in\{0,1\}^m$, were zero, integrating it against
$\prod_{k=1}^m u_{b_k}$ would isolate each coefficient. The tensor products are therefore linearly
independent, as is the subfamily $\{Q_c^{(t)}:c\ne\varnothing\}$. Consequently
$\bar Q_w^{(t)}=\bar Q_\omega^{(t)}$ implies $w=\omega$. \qed

\paragraph{First-order growth of the Bayes objective.} Write $\psi_{t,\omega}(w):=\Psi_{t,\omega}(E^w)$, concave in $w$.  For every full-support $w$,
\[
\sum_c\omega_cl_cDt-\log\frac1{\min_c w_c}
\le \psi_{t,\omega}(w)
\le \KL(\bar Q_\omega^{(t)}\|P_0^{(t)})
\le \sum_c\omega_cl_cDt,
\]
so $t^{-1}\psi_{t,\omega}(w)\to\sum_c\omega_cl_cD$.  The pointwise bound $E^w_{S,t}\ge w_cL_{c,t}$ gives the lower bound, while Gibbs' inequality and convexity of KL give the upper bounds.  At $w=\omega$, even if the prior has zeros, summing the same lower bound over its support gives the finite remainder $\sum_{c:\omega_c>0}\omega_c\log\omega_c$, and hence the same limit.  The pathwise version is \mainref{Theorem}{thm:regret}.

\paragraph{Accumulated versus repeated one-step mixing.} Repeating the fixed one-step mixture $M_w(r_s)=\sum_{l=1}^{m}w_l e_l(\{r_{k,s}\}_{k\in S})$ gives a valid process $\prod_{s\le t}M_w(r_s)$ that is generally not the accumulated mixture of the main article.  At $S=\{1,2\}$ with configuration weights $(w_{\{1\}},w_{\{2\}},w_{\{1,2\}})=(a,a,b)$, $2a+b=1$, the accumulated process is $aL_{1,t}+aL_{2,t}+bL_{1,t}L_{2,t}$ while the repeated one is $\prod_{s\le t}\{ar_{1,s}+ar_{2,s}+br_{1,s}r_{2,s}\}$, whose expansion lets the active configuration change across times; the two therefore differ in general for $t\ge2$.  The repeated process is one of the local e-processes compared in \mainref{Theorem}{thm:bayes}, so for every fixed design prior $\omega$ the accumulated process $E^\omega_{S,t}$ has at least its expected log evidence under the design-mixture law at every finite horizon.  The one-step growth rate $g_c(w):=\E_{Q_c^{(1)}}[\log M_w(r)]$ governs the repeated per-step mixture, not the accumulated object; its Kelly derivation is Appendix~\ref{app:gro}.

\paragraph{Exchangeable priors and familiar merging rules.}
For \mainref{Corollary}{cor:esp}, grouping the terms in the configuration mixture by cardinality
gives $\sum_c\omega_cL_{c,t}=\sum_l\omega_l e_l(\{L_{k,t}\}_{k\in S})$ when
$\omega_c=\omega_{|c|}$. There are $\binom ml$ configurations of size $l$, so their total prior
mass is $\beta_l=\binom ml\omega_l$; this determines the level weights uniquely.

\begin{proof}[Proof of Corollary~\mainnum{cor:corners}]
Each statistic is $E^\omega_{S,t}$ for the stated $\omega$, and Theorem~\mainnum{thm:bayes} identifies the
maximizer uniquely with its weights.
\end{proof}

\section{Proof of Theorem~\mainnum{thm:grow}: the finite-horizon minimax representation}\label{app:grow}

The proof applies Sion's minimax theorem to a concave--linear objective and evaluates the inner maximum with \mainref{Theorem}{thm:bayes}.  Fix the horizon $t$ and write $G_c(w)=\E_{Q_c^{(t)}}[\log E^w_{S,t}]$, concave in $w$ (the log composed with
the linear map $w\mapsto E^w_{S,t}$) and continuous on the simplex. Indeed, all configuration
likelihood ratios are positive almost surely and
$\min_dL_{d,t}\le E^w_{S,t}\le\max_dL_{d,t}$. The finite configuration set and the two finite KL
assumptions make $\max_d|\log L_{d,t}|$ integrable under every $Q_c^{(t)}$, providing a common
dominating envelope. Meanwhile,
$\sum_c\lambda_cG_c(w)$ is linear in the index distribution
$\lambda\in\Delta(\mathcal C)$; both $w$ and $\lambda$ range over the full simplex $\Delta(\mathcal C)$.
The minimax interchange itself does not use exchangeability. Under
\mainref{Assumption}{ass:exch}, the saddle
identified below is exchangeable by symmetry. By Sion's minimax
theorem~\citep{sion1958},
\[
\max_w\min_{c}G_c(w)=\max_w\min_\lambda\sum_c\lambda_cG_c(w)=\min_\lambda\max_w\sum_c\lambda_cG_c(w).
\]
By \mainref{Theorem}{thm:bayes} (with
$\omega=\lambda$), the inner maximum is attained at $w=\lambda$, and
its value is $\KL(\bar Q_\lambda^{(t)}\|P_0^{(t)})$.

Minimizing over $\lambda$ gives the displayed $Q\|P_0$ divergence minimization over
$\mathrm{conv}\{Q_c^{(t)}\}$,
and the saddle has $w^\star_t=\lambda^\star_t$. (Sion gives equality of the optimal values;
attainment of the outer maximum at $w=\lambda^\star_t$ follows from the first-order condition of the
minimization,
\[
G_c(\lambda^\star_t)\ \ge\ \KL(\bar Q_{\lambda^\star_t}^{(t)}\|P_0^{(t)})\ \text{ for every }c,
\qquad\text{with equality on }\mathrm{supp}(\lambda^\star_t),
\]
so $\min_cG_c(\lambda^\star_t)$ equals the minimax value.) Finally, the value is optimal among
\emph{all} local $(\F_s^S)_{s\ge0}$-adapted e-processes, not only mixtures: for any such e-process $E$,
\[
\min_c\E_{Q_c^{(t)}}[\log E_t]
\le\E_{\bar Q_{\lambda^\star_t}^{(t)}}[\log E_t]
\le\KL(\bar Q_{\lambda^\star_t}^{(t)}\|P_0^{(t)}).
\]
The last inequality follows from
\mainref{Theorem}{thm:bayes}, and the mixture with
$w=\lambda^\star_t$ attains the
bound. \qed

\paragraph{Proof of Proposition~\mainnum{prop:growmean} (the horizon-uniform GROW solution).} Although
$\KL(\bar Q_\lambda^{(t)}\|P_0^{(t)})$ is not additive in $t$ ($\bar Q_\lambda^{(t)}$ is a mixture of
products, not a product), its minimizer is horizon-free. Let $\lambda^{u}$ be uniform on the $m$
singleton configurations, so that
$d\bar Q_{\lambda^u}^{(t)}/dP_0^{(t)}=\bar e_1(t)=m^{-1}\sum_{k\in S}L_{k,t}$. By exchangeability
$G_{\{j\}}(\lambda^u)$ is the same for every $j\in S$, and
\[
\KL(\bar Q_{\lambda^u}^{(t)}\|P_0^{(t)})
=\sum_{j\in S}\lambda^u_{\{j\}}G_{\{j\}}(\lambda^u)
=G_{\{j\}}(\lambda^u)
\qquad\text{for every }j\in S.
\]
Thus the first-order condition holds with equality on the support. For any configuration $c$ and any
$j\in c$: under $f_A$ each per-step ratio $r$ is the size-biased version of its $f_0$-law
($\Pr_{f_A}(r\in dx)=x\,\Pr_{f_0}(r\in dx)$), hence stochastically larger, and so is each
$\log L_{k,t}$, a sum of independent stochastically ordered increments. We can therefore couple the
coordinates of $c\setminus\{j\}$, independently of the remaining coordinates, so that their
accumulated likelihood ratios increase pathwise. Because $\log\bar e_1(t)$ is coordinatewise
nondecreasing, this coupling gives
\[
G_c(\lambda^u)\;\ge\;G_{\{j\}}(\lambda^u)\;=\;\KL(\bar Q_{\lambda^u}^{(t)}\|P_0^{(t)}).
\]

The first-order condition of the convex problem $\min_\lambda\KL(\bar Q_\lambda^{(t)}\|P_0^{(t)})$ is
therefore verified at $\lambda^u$ at every horizon, so $\lambda^\star_t=\lambda^u$ for every $t$.
Strict convexity of $Q\mapsto\KL(Q\|P_0^{(t)})$ makes the minimizing mixture law unique, and the
linear independence proved in Appendix~\ref{app:bayes} makes
$\lambda\mapsto\bar Q_\lambda^{(t)}$ injective, hence makes the weights unique. \qed

\section{Proofs for local and elementary rejection times}\label{app:twosided}

\subsection{Pathwise regret}\label{app:regretproof}

The mixture retains a positive multiple of the true-configuration likelihood ratio on every path.
Fix $S$ of size $m$ with positive weights $w_l^{(m)}$, and write
$\beta_l^{(m)}=\binom mlw_l^{(m)}$ and
$\bar E_{S,t}^{(l)}=e_l(\{L_{k,t}\}_{k\in S})/\binom ml$.
For a nonempty $c\subseteq S$ with $|c|=l$, $L_{c,t}=\prod_{k\in c}L_{k,t}$, and the hitting times are
$\tau_S^E(a)=\inf\{t:E_{S,t}\ge1/a\}$ and $\tau_c^L(a)=\inf\{t:L_{c,t}\ge1/a\}$.

\begin{proof}[Proof of Theorem~\mainnum{thm:regret}]
$E_{S,t}=\sum_{l'}\beta_{l'}^{(m)}\bar E_{S,t}^{(l')}
\ge w_l^{(m)}e_l(\{L_{k,t}\}_{k\in S})\ge w_l^{(m)}L_{c,t}$ since all dropped summands are
nonnegative. Taking logs and applying the definitions of the hitting times proves both claims.
\end{proof}

\paragraph{Sharpness and prior costs.} The which-subset term is minimax \emph{within the mixture class}: if a configuration mixture
allocates total mass $\beta_l$ arbitrarily among the $\binom{m}{l}$ size-$l$ configurations, some
configuration receives at most $\beta_l/\binom{m}{l}$, so the bound cannot be improved uniformly by
reallocating, and uniform allocation attains equality.  That is a statement about how to spend a
given level mass inside this family; it is not a finite-horizon lower bound asserting that every
symmetric e-process must pay $\log\binom ml$.  A uniform-in-horizon version does hold over all
e-processes.  For a finite family
$\mathcal C_0$ of distinct configurations, suppose a deterministic constant $a$ satisfies
$E_t\ge e^{-a}L_{c,t}$ almost surely for every $c\in\mathcal C_0$ and every $t$.  Then
\[
e^a\ge\E_0\max_{c\in\mathcal C_0}L_{c,t}
=\sum_{c\in\mathcal C_0}Q_c(\widehat c_t=c)\longrightarrow |\mathcal C_0|,
\]
where $\widehat c_t$ maximizes the configuration likelihood, with a fixed tie rule.  Under $Q_c$,
$t^{-1}\log(L_{c,t}/L_{d,t})\to |c\setminus d|D+|d\setminus c|\bar D>0$ for $d\ne c$, so the
finite-family maximum-likelihood classifier is eventually correct almost surely, proving the
limit.  Thus $a\ge\log|\mathcal C_0|$, and the uniform mixture over that family attains equality.
Taking all size-$l$ configurations gives $\log\binom ml$.  This is a bound on a deterministic
pathwise shortfall uniform over all horizons, not on finite-horizon or expected rejection delay.
The level-prior cost is set by the weights, which the analyst controls.  Its
\emph{drift-normalized} worst case $\max_l\log(1/w_l)/(lD)$, the surcharge measured in rounds at
the rate the evidence actually accumulates rather than the raw log cost, is minimized by the geometric
weights $w_l=\gamma_m^l$, $\gamma_m=2^{1/m}-1$, which equalize $\log(1/w_l)/l$ across $l$; this is a
minimax statement about the displayed bound, not an exact expected-delay optimum.  Under a fixed
configuration $c$ with $w_c>0$, every ratio $L_{d,t}/L_{c,t}$ for $d\ne c$ tends to zero almost
surely by the same drift calculation.  Consequently $E_{S,t}/L_{c,t}\to w_c$ and
$t^{-1}\log E_{S,t}\to lD$ almost surely: the local log-regret bound is asymptotically sharp,
including at dense configurations.  This local cost does not determine elementary closure delay,
where the singleton eventually binds under full density (Proposition~\mainnum{prop:sandwich}).
Omitting the true
configuration can change the exponential rate, which is what the corners do.

\subsection{Local rejection rates}\label{app:localrateproofs}

The lower bound is a change of measure at a bounded stopping time, and the upper bound and the
corner rates are the hitting-time law of a random walk with drift.  The setting is that of
\mainref{Proposition}{prop:twosided}: $S$ has $m$ streams, the configuration $\vec h$ of all $K$ streams restricts on $S$ to
$c$ with $|c|=l\ge1$, $t_\alpha=\log(1/\alpha)/(lD)$, $\tau_S^E(a)=\inf\{t:E_{S,t}\ge1/a\}$ and
$\tau_c^L(a)=\inf\{t:L_{c,t}\ge1/a\}$ are the hitting times of the PEM statistic and of the
configuration likelihood ratio $L_{c,t}=\prod_{k\in c}L_{k,t}$, and $\bar e_1(t)$ is the arithmetic
mean of the $L_{k,t}$, $k\in S$.  Let $\vec h^0$ agree with $\vec h$ outside $S$ and be null on
$S$, so $\vec h^0$ satisfies $H_S$.  By conditional independence the two laws differ only on the
streams in $c$, and $dP_{\vec h}/dP_{\vec h^0}=L_{c,t}$ on $\F_t$.

\paragraph{(i) Lower bound, any procedure.}  Let $(\sigma_\alpha)$ be $\F_t$-stopping times and
$N_\alpha\ge t_\alpha$ horizons with $\Pr_{\vec h'}(\sigma_\alpha\le N_\alpha)\le\alpha$ for every
configuration $\vec h'$ satisfying $H_S$.  Fix $\epsilon\in(0,1)$, put
$u_\alpha=\lfloor(1-\epsilon)t_\alpha\rfloor\le N_\alpha$, and choose $\eta>0$ with $(1+\eta)(1-\epsilon)<1$.  The
event $A=\{\sigma_\alpha\le u_\alpha,\ L_{c,\sigma_\alpha\wedge u_\alpha}\le e^{(1+\eta)lDu_\alpha}\}$ is
$\F_{\sigma_\alpha\wedge u_\alpha}$-measurable, and the change of measure at the bounded stopping
time $\sigma_\alpha\wedge u_\alpha$ gives
\[
\Pr_{\vec h}(A)=\E_{\vec h^0}\bigl[L_{c,\sigma_\alpha\wedge u_\alpha}\mathbf 1_A\bigr]
\le e^{(1+\eta)lDu_\alpha}\Pr_{\vec h^0}(\sigma_\alpha\le N_\alpha)
\le \alpha^{\,1-(1+\eta)(1-\epsilon)}\longrightarrow0 .
\]
On the complement of $A$ within $\{\sigma_\alpha\le u_\alpha\}$, $\max_{s\le u_\alpha}\log
L_{c,s}>(1+\eta)lDu_\alpha$.  Under $P_{\vec h}$, $\log L_{c,s}$ is a random walk with drift $lD>0$,
so by the strong law $\max_{s\le u}\log L_{c,s}/u\to lD$ almost surely as $u\to\infty$, and since
$u_\alpha\to\infty$ the probability of this event also tends to zero.  Hence
$\Pr_{\vec h}(\sigma_\alpha\le(1-\epsilon)t_\alpha)\to0$.

\paragraph{(ii) Upper bound, the PEM statistic.}  By \mainref{Theorem}{thm:regret},
the pathwise regret theorem,
$\tau_S^E(\alpha)\le\tau_c^L(\alpha w_l^{(m)})$ whenever $w_l^{(m)}>0$.  For a random walk with
drift $lD>0$ the hitting time of the level $x$ satisfies $\tau_c^L(e^{-x})/x\to1/(lD)$ almost surely
as $x\to\infty$, and $\log(1/(\alpha w_l^{(m)}))/\log(1/\alpha)\to1$; hence
$\limsup_{\alpha\to0}\tau_S^E(\alpha)/t_\alpha\le1$ almost surely.  Combining with (i), applied to
$\sigma_\alpha=\tau_S^E(\alpha)$ with $N_\alpha=\lceil t_\alpha\rceil$, which is admissible because the PEM
statistic is an e-process under $H_S$
(\mainref{Proposition}{prop:fwer}), gives
$\tau_S^E(\alpha)/t_\alpha\to1$ in $P_{\vec h}$-probability.  The deployed rejection time at
$b^\star(\alpha,N,m)\le1/\alpha$ is no later than $\tau_S^E(\alpha)$ whenever
$\tau_S^E(\alpha)\le N$, by the definition of $b^\star$.

\paragraph{(iii) The corners.}  $\log\bar e_1(t)$ lies between $\max_{k\in S}\log L_{k,t}-\log m$
and $\max_{k\in S}\log L_{k,t}$, and the maximum of the $m$ random walks has drift
$\max(D,-\bar D)=D$, so $t^{-1}\log\bar e_1(t)\to D$ almost surely and the arithmetic mean's
first-order delay is $\log(1/\alpha)/D$ at every $l$.  For the product, $\log\prod_{k\in S}L_{k,t}$
is a random walk with drift $lD-(m-l)\bar D$; when the drift is negative its supremum is finite
almost surely, so $\Pr_{\vec h}(\sup_t\prod_{k\in S}L_{k,t}\ge1/\alpha)\to0$ as $\alpha\to0$; when it
is positive, the hitting-time law used in (ii) gives first-order delay
$\log(1/\alpha)/(lD-(m-l)\bar D)$, which exceeds $t_\alpha$ by the factor $lD/(lD-(m-l)\bar D)$
when $l<m$. At zero drift, put $S_t=\log\prod_{k\in S}L_{k,t}$ and
$T(x)=\inf\{t:S_t\ge x\}$. Integrability and nondegeneracy imply oscillation of this random
walk, so every finite upper level is crossed almost surely. Its strong law gives $S_t/t\to0$;
at the first crossing of $x$, $x\le S_{T(x)}$, whence $T(x)/x\to\infty$ almost surely as
$x\to\infty$.  Thus crossing persists but there is no positive first-order linear rate. \qed

\begin{proof}[Proof of Corollary~\mainnum{cor:rate}]
Combine parts (i)--(iii) of Proposition~\mainnum{prop:twosided}, noting that (ii) requires only
$w_l^{(m)}>0$ for the true $l$, which full support guarantees for every $l$.  For the last claim,
(ii) gives $\tau_S^E(\alpha)/t_\alpha\to1$ in probability while \mainref{Equation}{eq:growinghorizon} gives
$N_\alpha/t_\alpha\ge1+\epsilon$ eventually for some $\epsilon>0$, so
$\Pr(\tau_S^E(\alpha)>N_\alpha)\le\Pr(\tau_S^E(\alpha)/t_\alpha>1+\epsilon)\to0$; on the complementary
event the deployed time is at most $\tau_S^E(\alpha)$ because the deployed boundary is no larger than
$1/\alpha$.  Its local level guarantee permits applying part (i) to the deployed time, which supplies
the matching lower bound in probability.  Thus its ratio to $t_\alpha$ converges to one in probability.
For elementary claims the same argument uses Corollary~\mainnum{cor:penalty} and the lower bound of
Lemma~\mainnum{lem:ceiling}.
\end{proof}

\subsection{Elementary rejection times}\label{app:elementaryproofs}

The deadline-specific ceiling follows by treating the other streams as independent randomization.
Let $\boldsymbol\tau$ be any procedure satisfying the strong FWER guarantee in
\mainref{Equation}{eq:fwerN} under every configuration, and let $\tau_k$ be its rejection time for $H_k$.
Fix a configuration $\vec h$ with active stream $k$. As in \mainref{Lemma}{lem:ceiling},
$\mathcal T_\alpha$ consists of stopping times based on stream $k$ and independent randomization
whose probability of stopping by $N$ under $f_0$ is at most $\alpha$.

\begin{proof}[Proof of Lemma~\mainnum{lem:ceiling}]
Let $\vec h'$ agree with $\vec h$ except that $h'_k=0$.  Under $\vec h'$ the hypothesis $H_k$ is
true, so \mainref{Equation}{eq:fwerN} gives $\Pr_{\vec h'}(\tau_k\le N)\le\alpha$.  Under both $\vec h$ and
$\vec h'$ the streams other than $k$ have the same law and, by Assumption~\mainnum{ass:indep}, are
independent of stream $k$.  Viewed as a function of stream $k$ and of the other streams, $\tau_k$ is
therefore a stopping time of the filtration generated by stream $k$ and an independent randomization,
and its level under $\vec h'$ is its level under $f_0$ on stream $k$: $\tau_k\in\mathcal T_\alpha$.
The displayed bound follows, the Neyman--Pearson statement is Remark~\mainnum{rem:degenerate}, and the
first-order statement is Proposition~\mainnum{prop:twosided}(i) with $S=\{k\}$, $l=1$, and
$N_\alpha=N$.
\end{proof}

For the remaining proofs, let $c$ be the active set, $|c|=l$, and $k\in c$. Run uncapped literal
closure with fixed full-support weights at $1/\alpha$, so that
$\tau_k=\max_{S\ni k}\tau_S^E(\alpha)$.

\begin{proof}[Proof of Corollary~\mainnum{cor:penalty}]
The displayed bound applies Theorem~\mainnum{thm:regret} to each $S\ni k$ with the true configuration
$c\cap S$ of $S$, and $\tau_k=\max_{S\ni k}\tau^E_S(\alpha)$.  For (a), the intersections with
$|c\cap S|=1$ contribute hitting times of random walks with drift $D$ and those with $|c\cap S|\ge2$
hitting times of walks with drift at least $2D$, so the maximum is governed by the intersections
with a single active stream, and its
first-order delay is $\log(1/\alpha)/D$ by the hitting-time law used in
Proposition~\mainnum{prop:twosided}(ii).  The lower bound is Lemma~\mainnum{lem:ceiling}.  For (b), an
intersection with $c\cap S=\{k\}$ has $|S|\le K-l+1$ and pays $\log(1/w_1^{(|S|)})$, which is
nondecreasing in $|S|$ for both defaults; when $l=K$ no such \emph{nonsingleton} intersection
exists, the singleton $S=\{k\}$ being precisely the binding case and costing nothing since
$w^{(1)}_1=1$, every $S\ni k$ with
$|S|\ge2$ has drift at least $2D$ and therefore crosses before $\{k\}$ for small $\alpha$, and
$E_{\{k\},t}=L_{k,t}$ with $w^{(1)}_1=1$.  The $e$-Bonferroni time is $\tau^L_{\{k\}}(\alpha/K)$ by
definition.  For arithmetic-mean closure, $\bar e_1(t)\ge L_{k,t}/|S|$ gives
$\tau_S\le\tau^L_{\{k\}}(\alpha/|S|)\le\tau^L_{\{k\}}(\alpha/K)$, and
Proposition~\mainnum{prop:twosided}(iii) gives the rate $D$ of each of its intersections.
\end{proof}

The pathwise comparison uses $x=\log(1/\alpha)$,
$T_k(x)=\inf\{t:\log L_{k,t}\ge x\}$, and
$C=\max_{m\le K-l+1}\log(1/w_1^{(m)})$, the largest singleton log-boundary surcharge among
intersections containing $k$ and no other active stream. Write
$\tau_k(e^{-x})=\max_{S\ni k}\tau_S^E(e^{-x})$ for the uncapped closure time at level $e^{-x}$.

\begin{proof}[Proof of Proposition~\mainnum{prop:sandwich}]
For the lower bound, $S=\{k\}$ is one of the intersections closure must clear and
$E_{\{k\},t}=L_{k,t}$ because $w_1^{(1)}=1$, so $\tau_k(e^{-x})\ge T_k(x)$ on every path.  For the
upper bound, take any $S\ni k$ and let $d=c\cap S$, which contains $k$ and so is nonempty.
Theorem~\mainnum{thm:regret} gives $\log E_{S,t}\ge\log L_{d,t}-\log(1/w^{(|S|)}_{|d|})$ pathwise.  If
$|d|=1$, then $d=\{k\}$ and every stream of $S$ other than $k$ is null, so $|S|\le K-l+1$ and
$\log E_{S,t}\ge\log L_{k,t}-C$; hence $S$ has crossed by $T_k(x+C)$.  If $|d|\ge2$, then
$\log L_{d,t}$ is a random walk with drift $|d|D\ge2D$, so its hitting time of $x+\log(1/w^{(|S|)}_{|d|})$
divided by $x$ tends almost surely to $1/(|d|D)\le1/(2D)$, while $T_k(x+C)/x\to1/D$; as there are
finitely many such $S$, almost surely for all large $x$ every one of them has crossed strictly before
$T_k(x+C)$.  Taking the maximum over the finitely many $S\ni k$ gives the upper bound.
\end{proof}

\paragraph{Default weights and comparison with the arithmetic mean.} The level-uniform default has $w_1^{(m)}=1/m^2$ and hence
$C=2\log(K-l+1)$, which at $l=1$ is $2\log K$; the geometric default has
$C=\log\{1/(2^{1/(K-l+1)}-1)\}$, which at $K=10$, $l=1$ is $2.64$ against $\log K=2.30$.  For
arithmetic-mean closure the corresponding surcharge is $\log K$, since $\bar e_1(t)\ge L_{k,t}/|S|$
for every $S\ni k$; that its intersections all grow at the same first-order rate $D$ does not by
itself prove that its nonsingleton intersections remain binding on every path, and lattice examples
exist in which $e$-Bonferroni and the singleton coincide along selected boundary sequences.

For a size-$m$ intersection containing only one active stream, the PEM statistic's log evidence
is asymptotically below the arithmetic mean's by $\log(1/\beta_1^{(m)})$ as time grows.
For the geometric default $\beta_1=m(2^{1/m}-1)$, which at $m=10$ equals $0.718$ and gives
$\log(1/\beta_1)=0.332$ nats; as $m$ grows $\beta_1$ decreases to $\log2=0.693$ and the constant
increases to $\log(1/\log2)=0.367$.  This is
consistent with Proposition~\mainnum{prop:growmean} and with the observed arithmetic-mean advantage
at the least-favorable singleton configuration in \S\mainnum{sec:sims}; the asymptotic log-evidence
comparison does not itself order finite-level rejection times.  Under
dense evidence the PEM statistic's higher-order terms clear the intersections containing several
active streams at rate $2D$ or more, while the arithmetic mean must wait for each of them at rate
$D$.  This explains the different log-boundary comparisons, while Proposition~\mainnum{prop:sandwich}
gives the precise eventual guarantee for full-support PEM closure.

\section{Self-financing reweighting within the accumulated ESP family}\label{app:imposs}

Predictable reweighting of the PEM statistic preserves the martingale property only under a
pathwise balance constraint. Under a fixed design prior, it cannot improve expected log evidence
over the Bayes-matched PEM statistic.  This appendix concerns adaptivity within the local accumulated-ESP family; it is not a statement about all adaptive e-processes.
Write $L_t=(L_{k,t})_{k\in S}$. Set $E_0=1$, and for $t\ge1$ suppose
$E_t=\sum_lw_{l,t}e_l(L_t)$, where the nonnegative weights $w_{\cdot,t}$ are predictable in the local
filtration ($\F_{t-1}^S$-measurable). Assume all displayed terms are integrable and impose the initial
unit-wealth condition $\sum_l\binom mlw_{l,1}=1$. Since each $e_l(L_t)$ is a martingale,
$\E[e_l(L_t)\mid\F_{t-1}^S]=e_l(L_{t-1})$; the initial condition gives
$\E[E_1\mid\F_0^S]=E_0$, and thereafter the martingale property is equivalent to the pathwise
\emph{balance constraint}
\begin{equation}\label{eq:balance}
\sum_l\,(w_{l,t}-w_{l,t-1})\,e_l(L_{t-1})=0
\qquad\text{almost surely, for every }t\ge2.
\end{equation}
Two consequences follow.

\emph{(i) Deterministic schedules must be constant.} If the schedule $w_{\cdot,t}$ is deterministic,
Equation~\eqref{eq:balance} is a linear identity with constant coefficients holding
$P_0^{(t-1)}$-almost surely.  No support condition on the likelihood-ratio vector is needed to
conclude from it.  Write $\Delta w_l=w_{l,t}-w_{l,t-1}$ and expand the elementary symmetric
polynomial into its configuration terms, $e_l(L_{t-1})=\sum_{|c|=l}L_{c,t-1}$ with
$L_{c,t-1}=dQ_c^{(t-1)}/dP_0^{(t-1)}$.  Then~\eqref{eq:balance} says
\[
\sum_{\varnothing\ne c\subseteq S}\Delta w_{|c|}\,\frac{dQ_c^{(t-1)}}{dP_0^{(t-1)}}=0
\qquad P_0^{(t-1)}\text{-a.s.},
\]
that is, the signed measure $\sum_c\Delta w_{|c|}Q_c^{(t-1)}$ vanishes.  Appendix~\ref{app:bayes}
shows that $\{Q_c^{(t-1)}:c\ne\varnothing\}$ is linearly independent whenever $D>0$, using only that
$f_0\ne f_A$; hence every $\Delta w_{|c|}$ is zero, so $w_{l,t}=w_{l,t-1}$ for every $l$ and every
$t\ge2$.  This covers both the Bernoulli basket experiments and the Gaussian experiments without requiring
the likelihood-ratio vector to have full support.  Here $w_{\cdot,1}$ is the initially
normalized choice of prior, and no pre-planned retuning of that prior after the first observation is
compatible with the martingale property.

\emph{(ii) Martingale-preserving predictable reweighting is self-financing.}
Path-dependent solutions to~\eqref{eq:balance} do exist: from an interior weight vector with $m=2$, take
$w_{1,t}=w_{1,t-1}+\varepsilon_te_2(L_{t-1})$ and
$w_{2,t}=w_{2,t-1}-\varepsilon_te_1(L_{t-1})$ with $\varepsilon_t\ge0$ predictable and small enough to
keep the weights nonnegative. This solves~\eqref{eq:balance}, hence keeps $E_t$ a mean-one test
martingale (and therefore an e-process),
though it does not preserve the pathwise level-normalization $\sum_lw_{l,t}\binom{m}{l}=1$ (that
constraint is not needed after initialization for validity: $E_0=1$ and the martingale property give
$\E_{P_0^S}[E_t]=1$;
for $m\ge3$ a normalized direction exists as described next). Equation~\eqref{eq:balance}
says precisely that every such move is \emph{self-financing}: mass bought on one level is paid for by
mass sold on the others at their current values $e_l(L_{t-1})$, so the realized value at the
switch time is unchanged (this constrains the value at the switch, not the whole future trajectory).

No self-financing reweighting improves on the fixed Bayes-matched PEM statistic in expected log evidence.  Because every such process $E$ is an e-process, \mainref{Theorem}{thm:bayes} gives
$\Psi_{t,\omega}(E)\le\Psi_{t,\omega}(E^\omega)$ for every fixed design prior $\omega$ and horizon $t$.
When $\omega$ is exchangeable, the comparator belongs to the level-weight family above. Thus
reweighting cannot improve on the Bayes likelihood ratio in expected log evidence under the specified
design law. This is not pathwise dominance or a universal, design-uniform impossibility theorem.

Under pathwise prior normalization, genuinely data-dependent accumulated weights exist for $m\ge3$ but generically not for $m=2$.  If pathwise prior normalization is also imposed, a move $\Delta w_t$ must satisfy
\[
\sum_{l=1}^m\Delta w_{l,t}e_l(L_{t-1})=0,
\qquad
\sum_{l=1}^m\binom ml\Delta w_{l,t}=0.
\]
For $m\ge3$, the orthogonal complement of the two displayed coefficient vectors has dimension at
least $m-2$. Starting from an interior weight vector, any nonzero predictable direction in that
complement can be scaled to preserve nonnegativity. Thus genuinely data-dependent, normalized
accumulated weights exist for $m\ge3$. For $m=2$, the two constraints generically leave only the zero
direction.

\paragraph{A predictable plug-in.} A less constrained data-driven alternative mixes fresh
per-step ratios $r_s=(r_{k,s})_{k\in S}$ using weights chosen from the past:
\[
E^{\rm plug}_{S,t}:=\prod_{s\le t}M_s,\qquad
M_s=\sum_l\pi_{l,s-1}\,\tilde e_l(r_s),
\]
with predictable weights $\pi_{l,s-1}\propto\pi^0_l\,\tilde e_l(L_{s-1})$, the running posterior on
$l=1,\ldots,m$, conditional on the nonempty composite alternative, under a fixed baseline level prior
$\pi^0$. It is valid for any predictable simplex-valued weights:
each $\tilde e_l(r_s)$ is a fresh mean-one per-step e-value and $\pi_{l,s-1}$ is
$\F_{s-1}^S$-measurable, so
$\E_{P_0^S}[M_s\mid\F_{s-1}^S]=\sum_l\pi_{l,s-1}=1$. Being an e-process, the plug-in likewise satisfies
$\Psi_{t,\omega}(E^{\rm plug})\le\Psi_{t,\omega}(E^\omega)$ for every fixed $(t,\omega)$ by
\mainref{Theorem}{thm:bayes}.
In particular,
if the baseline level prior $\pi^0$ is also the design level prior, then
$\omega_c=\pi^0_{|c|}/\binom{m}{|c|}$.

Except at the all-alternatives corner, no symmetric single-level merge attains the configuration-aware oracle per-step rate $lD$.  For a configuration $c$ of size $l$, write $R_c(r_s)=\prod_{k\in c}r_{k,s}$. For any fixed level
$q=1,\ldots,m$, independence gives
\[
\E_{Q_c^{(1)}}\!\left[\frac{\tilde e_q(r_s)}{R_c(r_s)}\right]
=\binom{m}{q}^{-1}\sum_{|d|=q}
\E_{Q_c^{(1)}}\!\left[
\prod_{k\in d\setminus c}r_{k,s}
\prod_{k\in c\setminus d}r_{k,s}^{-1}\right]=1,
\]
because $\E_{f_0}[r]=1$ and $\E_{f_A}[1/r]=1$. Jensen's inequality therefore gives
$\E_{Q_c^{(1)}}[\log\tilde e_q(r_s)]\le lD$. Equality requires
$\tilde e_q(r_s)=R_c(r_s)$ $Q_c^{(1)}$-almost surely, equivalently $P_0^{(1)}$-almost surely by the
standing mutual absolute continuity; the linear-independence argument in
Appendix~\ref{app:bayes} shows that this occurs only in the all-alternatives product corner
$c=S$, $q=m$. The predictable plug-in mixes levels with data-driven weights; its asymptotic rate
depends on posterior concentration and integrability conditions that we do not analyze here.

\smallskip\noindent(This appendix is cited in \S\mainnum{sec:opt} of the main article.)

\section{Dependence: validity of the intersection merges}\label{app:dependence}

This section proves the main article's dependence-validity claims while relaxing its conditional
independence assumption. It does not extend the Bayes or GROW optimality statements beyond the
product model. Throughout, the synchronously observed vectors
$X_{\cdot,t}=(X_{1,t},\dots,X_{K,t})$ are i.i.d.\ over $t$ from a joint law $P$ with \emph{arbitrary}
cross-hypothesis dependence whose $k$-th margin is $f_0$ (if $H_k$ is null) or $f_A$ (if false).
Temporal independence ($X_{\cdot,t}\perp\F_{t-1}$) gives, for any null stream $k$,
$\E_P[r_{k,t}\mid\F_{t-1}]=\E_{f_0}[r_{k,t}]=1$, so each $L_{k,t}$ is a mean-one $\F_t$-martingale
\emph{regardless} of the dependence across hypotheses. This fact drives the dependence-robust
arithmetic-mean result below.

\paragraph{(A) Weighted arithmetic means are valid under arbitrary dependence.}
For the intersection null $H_S$ with $|S|=m$, the merge
\[
\bar e_1(t)=m^{-1}\sum_{k\in S}L_{k,t}
\]
is a nonnegative $\F_t$-martingale, because each $L_{k,t}$ is a mean-one martingale. Hence the closed test whose
intersection e-processes are $\bar e_1$ controls anytime-valid FWER in the strong sense under arbitrary
cross-hypothesis dependence. More generally, for any prespecified coefficients
$a_0,\{a_k:k\in S\}\ge0$ satisfying $a_0+\sum_{k\in S}a_k=1$, the process
\[
a_0+\sum_{k\in S}a_kL_{k,t}
\]
is a test martingale. The displayed $\bar e_1$ is the equal-weight, permutation-symmetric member of
this family. Weighted arithmetic averages, allowing weight on the constant $1$, characterize the
admissible merging functions for arbitrarily dependent e-values; under permutation symmetry this
reduces to a mixture of $\bar e_1$ and $1$~\citep{vovk2021,wang2025}.

\paragraph{(B) Higher terms: necessary and sufficient conditions under temporal independence.}
A fixed nonnegative monomial mixture is an e-process for the specified temporally i.i.d.\ null law
if and only if every supported monomial has cross-moment at most one.  For $|c|=l\ge2$, temporal independence gives
\[
\E_P\Big[\prod_{k\in c}L_{k,t}\Big]=\rho_c^{\,t},\qquad
\rho_c=\E_P\Big[\prod_{k\in c}r_k\Big]=\int\prod_{k\in c}\frac{f_A}{f_0}(x_k)\,dP(x).
\]
Here $\rho_c$ is computed under the true law of the $c$-coordinates, which for a genuine strong-FWER
guarantee must be considered over every joint law consistent with the intersection null (nuisance
coordinates outside $S$ unconstrained). If $\rho_c\le1$ for every $c$ in the support of $w$
(configurations with $w_c=0$ impose no constraint), then
\[
\E_P[E_{S,t}\mid\F_{t-1}]=\sum_c w_c L_{c,t-1}\rho_c\le\sum_c w_c L_{c,t-1}=E_{S,t-1},
\]
so $E_{S,t}$ is a supermartingale and is anytime valid by Ville.

Conversely, the coefficientwise condition is necessary, and under temporal independence this needs
no support assumption at all.  Because the vectors are i.i.d.\ over $t$ and the weights are fixed and
nonnegative, expectations add:
\[
\E_P[E_{S,t}]=\sum_c w_c\,\E_P[L_{c,t}]=\sum_c w_c\,\rho_c^{\,t}.
\]
If some $c$ in the support of $w$ has $\rho_c>1$, then $\E_P[E_{S,t}]\ge w_c\rho_c^{\,t}\to\infty$,
so at some deterministic time $t$, which is a bounded stopping time, the e-process budget
$\E_P[E_{S,t}]\le1$ fails.  Hence $\rho_c\le1$ for every supported $c$ is necessary as well as
sufficient, and the characterization is complete for temporally i.i.d.\ vectors and fixed weights.
For temporally dependent increments the moments may instead depend on the past; neither this
fixed-moment characterization nor its expectation calculation transfers without an additional
conditional-moment argument.

Three regimes follow for each monomial $L_{c,t}$: $\rho_c=1$ (conditional independence, or any
copula with $\E_P[\prod_{k\in c}r_k]=1$) makes it a martingale; $\rho_c<1$ (a moment
condition, met for instance under suitable negative dependence) makes it a conservative
supermartingale; and $\rho_c>1$ makes its mean grow as $\rho_c^t$ and invalidates that e-process.
The whole mixture is an exact martingale when every supported $c$ has $\rho_c=1$. With a nondecreasing likelihood ratio
$g=f_A/f_0$ and positively associated null coordinates, meaning that increasing functions of
these coordinates have nonnegative covariance,
(e.g.\ nonnegatively correlated Gaussians, associated by Pitt's theorem~\citep{pitt1982};
multivariate total positivity of order two, $\mathrm{MTP}_2$, suffices), the
Esary--Proschan--Walkup inequality~\citep{esary1967} gives
$\E_P[\prod_{k\in c}g(X_k)]\ge\prod_{k\in c}\E_P[g(X_k)]=1$, i.e.\ $\rho_c\ge1$; association
alone does not imply strictness. In the one-sided Gaussian model
$r_k=\exp(\mu X_k-\mu^2/2)$ with null correlation matrix $(\rho_{ij})$, the exact expression is
\[
\rho_c=\exp\!\left(\mu^2\sum_{\substack{i<j\\i,j\in c}}\rho_{ij}\right),
\]
which is strictly above one whenever the displayed correlation sum is positive.
Conditional independence is thus sufficient for $\rho_c=1$, not necessary.

\paragraph{(C) A known dependent null, and why the naive latent mixture fails.} Integrating conditional likelihood ratios over a latent mixing law does not in general produce an e-process; the valid object is the ratio of the two marginal mixtures.  One might hope to integrate conditional ESP likelihood ratios over an alternative latent mixing law
$\Pi$ on a factor $U$. Let $\Pi_0$ be the law of $U$ under the null, and suppose the conditional-law
kernels below are specified on their common parameter space, at least $(\Pi_0+\Pi)$-almost
everywhere. Write $P_{0,u}^{(t)}=P_0^{(t)}(\cdot\mid u)$ and
$\bar Q_u^{(t)}=\bar Q^{(t)}(\cdot\mid u)$. A conditional Radon--Nikodym derivative
$d\bar Q_u^{(t)}/dP_{0,u}^{(t)}$ is defined only $P_{0,u}^{(t)}$-almost surely, whereas the proposed
integral is evaluated under the marginal null. To make that proposal unambiguous, suppose for this
calculation that the conditional laws have densities $p_{0,u}^{(t)}$ and $\bar q_u^{(t)}$ with respect
to a common $\sigma$-finite measure $\nu_t$, that $\bar Q_u^{(t)}\ll P_{0,u}^{(t)}$ for
$\Pi$-almost every $u$, and that, for $\Pi$-almost every $u$ and $\nu_t$-almost every $x$,
\[
p_{0,u}^{(t)}(x)>0\quad\text{whenever}\quad
p_0^{(t)}(x):=\int p_{0,u'}^{(t)}(x)\,d\Pi_0(u')>0
\]
holds. Equivalently, up to null sets, the marginal null is dominated by each denominator used in the
integral. The naive construction is
\[
\tilde E_{S,t}(x)
=\int\frac{\bar q_u^{(t)}(x)}{p_{0,u}^{(t)}(x)}\,d\Pi(u).
\]
Tonelli's theorem gives
\[
\begin{split}
\E_{P_0^{(t)}}[\tilde E_{S,t}]
&=\int\!\!\int
  \frac{\bar q_u^{(t)}(x)}{p_{0,u}^{(t)}(x)}
  p_0^{(t)}(x)\,d\nu_t(x)\,d\Pi(u)\\
&=\int\!\!\int\!\!\int
  \frac{p_{0,u'}^{(t)}(x)}{p_{0,u}^{(t)}(x)}
  \bar q_u^{(t)}(x)\,
  d\nu_t(x)\,d\Pi_0(u')\,d\Pi(u),
\end{split}
\]
which need not equal one. If the displayed domination condition fails, the proposal has the additional
problem that it can depend on arbitrary versions of the conditional likelihood ratios outside their
conditional-null supports.

The valid construction uses one fixed marginal denominator. Let
\[
\bar q^{(t)}(x)=\int\bar q_u^{(t)}(x)\,d\Pi(u).
\]
If the resulting marginal laws $\bar Q$ and $P_0$ are consistent on the full sequence and
$\bar Q^{(t)}\ll P_0^{(t)}$ for every $t$, then
\[
E_{S,t}(x)
=\frac{d\bar Q^{(t)}}{dP_0^{(t)}}(x)
=\frac{\bar q^{(t)}(x)}{p_0^{(t)}(x)}
\]
is the nonnegative $P_0$ density martingale and hence an anytime-valid e-process. Thus latent and
configuration mixtures are allowed, but the valid object is the ratio of the two marginal mixtures,
not the mixture of conditional ratios. The marginal denominator generally does not factorize, so the
ESP closed form and its prefix-recursion implementation are generally lost.

Transferring such a construction into closed testing needs more than a valid marginal object.
\mainref{Proposition}{prop:fwer} requires, for each $S$, an e-process in the \emph{full} filtration
$(\F_t)$, valid under \emph{every} law in the intersection null family $H_S$, with the streams
outside $S$ unconstrained.  A density process built for one specified dependent law of the
coordinates in $S$ delivers neither by itself: it is adapted to the local filtration, and it is
calibrated against one member of the null family rather than the whole of it.  Both premises have to
be re-established for the dependent model in question before the closure argument may be applied:
the full-filtration property by checking that the nuisance coordinates cannot invalidate the
stopping-time expectation bound, and the null-family property by proving that bound for each
allowed law.

\paragraph{(D) Prespecified mixed deployment.} Closed testing needs only a valid e-process for each
intersection, so different e-processes may be preassigned using model assumptions or external
dependence information: for example, an ESP e-process where the joint model is justified and the
arithmetic mean elsewhere. Selecting among them after inspecting their realized values is not covered
by \mainref{Proposition}{prop:fwer} and requires a separate valid e-process
construction. Moreover, a
robust e-process at one intersection can bind several elementary decisions through closure; its power cost
need not remain confined to a single elementary hypothesis.

\smallskip\noindent(This appendix is cited in \S\mainnum{sec:disc} of the main article.)

\section{Stopping-rule proofs and local optimal-stopping benchmarks}\label{app:dpvalidation}

\subsection{Terminal tests and single-stream stopping}\label{app:stoppingproofs}

The terminal test uses the exact likelihood ratio against the design mixture. For a configuration
prior $w$, recall $\bar Q_w^{(N)}=\sum_cw_cQ_c^{(N)}$ and
$E^w_{S,N}=\sum_cw_cL_{c,N}$.

\begin{proof}[Proof of Corollary~\mainnum{cor:np}]
By \mainref{Equation}{eq:lr_identity}, $E^w_{S,N}=d\bar Q_w^{(N)}/dP_0^{(N)}$.  The claim is the
Neyman--Pearson lemma for the simple pair $(P_0^{(N)},\bar Q_w^{(N)})$.
\end{proof}

The single-stream stopping problem rewards early rejection. For $m=1$, write $E_t=L_{1,t}$.
The Lagrangian payoff under the null is $g_t(E)=E(N-t)-\kappa$, where $\kappa>0$ penalizes a
rejection, and never stopping earns zero. The proof compares this payoff with the value of
continuing to sample.

\begin{proof}[Proof of Theorem~\mainnum{thm:levelset}]
At $t=N$ the payoff is $g_N\equiv-\kappa<0$ while never stopping earns $0$, so the value is
$U_N\equiv0$ and the stopping region at $N$ is empty, that is $b_N=\infty$.  For $t<N$, under $P_0$
we have $E_s=E_tR$ for $s>t$ with $R\ge0$ independent of $\F_t$ and $\E R=1$, so the value function
is a function $U_t$ of $(E_t,t)$; as a supremum of affine functions of $E_t$ with nonnegative
slopes, together with the constant $0$, it is convex, nondecreasing and nonnegative.  Write
$C_t(E)=\E[U_{t+1}(ER)]$ for the continuation value, so $U_t=\max(g_t,C_t)$.

We show by backward induction that $U_t$ is Lipschitz with constant at most $N-t$.  The base case
$U_N\equiv0$ has constant $0$.  Assume $U_{t+1}$ is $L$-Lipschitz with $L\le N-t-1$.  Then for
$E'>E\ge0$,
\[
|C_t(E')-C_t(E)|\le\E\bigl[\,\bigl|U_{t+1}(E'R)-U_{t+1}(ER)\bigr|\,\bigr]
\le L\,\E[R]\,(E'-E)=L\,(E'-E),
\]
so $C_t$ is $L$-Lipschitz with $L\le N-t-1$.  No differentiability is used: the estimate is a
difference quotient, which is all a convex function supplies.  Since $g_t$ has slope exactly $N-t$,
\[
(g_t-C_t)(E')-(g_t-C_t)(E)\ \ge\ (N-t)(E'-E)-(N-t-1)(E'-E)\ =\ E'-E\ >\ 0,
\]
so $g_t-C_t$ is strictly increasing; and $g_t(0)=-\kappa<0\le C_t(0)$.  It therefore changes sign
exactly once, the stopping region $\{g_t\ge C_t\}$ is an upper interval $\{E\ge b_t\}$, and $b_t$ is
finite because $g_t$ grows at rate $N-t$ and $C_t$ at rate at most $N-t-1$.  Finally
$U_t=\max(g_t,C_t)$ is Lipschitz with constant at most $\max(N-t,N-t-1)=N-t$, which is the
hypothesis needed at $t-1$, closing the induction.
\end{proof}

\subsection{Certified and numerical local benchmarks}\label{app:localbenchmarks}

This section reports what the unrestricted optimum of the local rejection-time objective gains over
the local PEM threshold rule at its calibrated boundary, on the problems where that optimum can be bracketed.  The
benchmark is \mainref{Problem}{eq:costobj} with the conventions fixed there: a single claim, the
claim's own filtration enlarged by an independent randomization, the design mixture as the reward
measure, and the option never to stop.  It is not the vector problem under closure, and its value is
not total saved observations across streams.

\paragraph{Primal and dual bounds.}  The solver bisects the Lagrange multiplier and retains a
numerically feasible endpoint.  For the exact-arithmetic derivation, let $J_\kappa$ and $p_\kappa$
be the reward and null crossing probability of a Lagrangian-optimal policy with
$p_\kappa\le\alpha$.  On a
finite lattice $p_\kappa$ is a step function of $\kappa$, so the returned policy generally lands
strictly inside the constraint; its value $J_\kappa$ is then the value of one feasible policy and is
a lower bound on the constrained optimum $J^\star$, never an estimate of it.  The flat rule is also
feasible, so $\max(J_{\rm flat},J_\kappa)\le J^\star$.  For the upper bound, weak duality at that
same multiplier gives, for every feasible $\sigma$,
\[
\E[(N-\sigma)^+]\ \le\ \E[(N-\sigma)^+]-\kappa\bigl(\Pr_0(\sigma\le N)-\alpha\bigr)
\ \le\ J_\kappa+\kappa(\alpha-p_\kappa),
\]
the first step because $\kappa\ge0$ and $\Pr_0(\sigma\le N)\le\alpha$, the second because the
Lagrangian-optimal policy maximizes the middle expression over all stopping times.  Hence
\mainref{Equation}{eq:dpbracket}.  The bracket is reported as a percentage of $J_{\rm flat}$: the
left column of each pair is $J_\kappa/J_{\rm flat}-1$, the feasible-policy gain, and the right is
$\{J_\kappa+\kappa(\alpha-p_\kappa)\}/J_{\rm flat}-1$, the certified upper endpoint for the
optimum's gain.  A negative left entry says only that the returned feasible policy is worse than the
flat rule at a smaller error spend; the corresponding right entry is what bounds the optimum.

Outward-rounded interval arithmetic certifies the Bernoulli policies' error probabilities and
the upper bounds on their possible gains: lower endpoints are rounded down and upper endpoints
up. The decimal Bernoulli probabilities are treated as exact rationals.
The geometric weight base $\gamma_m$ is enclosed between binary floating-point numbers whose
$m$th powers after adding one bracket $2$, checked using integer arithmetic; the level-uniform
weights are rational.  Positive ESP, transition and reward recursions then round each addition
and multiplication outwards.  A separate upper Bellman recursion uses the maximum of stopping,
continuation and zero, so the dual upper bound remains valid even if a floating-point near-tie
selects a suboptimal policy.  The fixed count-state decisions of the returned and flat policies
are evaluated with the same interval arithmetic, including terminal rejection for the flat
policy.  Every reported Bernoulli policy has an upper null-probability endpoint below its nominal
level.  Dividing the dual upper endpoint by the flat reward's lower endpoint gives the reported
upper gain; the table rounds this endpoint upwards.  These checks certify the implemented
count-state policies, including their numerical tie convention, rather than silently identifying
them with a different real-arithmetic decision at a boundary atom.

The alternative to bracketing is to solve the constrained problem itself, by locating the critical
multiplier and randomizing between the two adjacent optimal policies.  We do not do that here;
interpolating arbitrary feasible and infeasible policies without verifying optimality at the
critical multiplier would not establish the constrained optimum either.

\paragraph{Comparison conventions.}  Every entry favors the dynamic program slightly.  The flat rule
is calibrated and evaluated at every $t\le N$ including the deadline, which is what is deployed,
whereas the program never exercises at $t=N$, where its payoff is $-\kappa<0$
(\mainref{Theorem}{thm:levelset} with $b_N=\infty$).  The flat rule therefore spends part of its
budget at the deadline for no return, and the comparison charges it for that.

\paragraph{Stopping-region diagnostic.} On states with positive surviving null mass, no violation of the up-set property was detected:
increasing the evidence coordinates never changed a stopping decision to continuation. This diagnostic
does not establish the monotonicity used by the shortcut of \S\mainnum{sec:closure}, which instead
uses the scalar PEM statistic's coordinatewise monotonicity and a threshold common to subsets of each
size.

\paragraph{Which designs are covered.}  Table~\ref{tab:dpvalidation} solves the NCI design at its
actual horizon $N=31$ and the five-basket planned design at its actual horizon $N=13$, together with
the nominal five-basket cells at the additional horizon settings $N=24$ and $N=31$.  Two limits should
be read with it.  The Bernoulli cells reach $m=4$, and $m=5$ at $N=13$, whereas the NCI design
deploys $K=10$; its larger intersections are unsolved, and we do not extrapolate to them from the
ratio $N/t_{\mathrm{cross}}$ alone.  The five-basket realized design has unequal caps
$(20,10,8,18,7)$, so no common-horizon cell describes it and none is claimed.  Across the NCI
design at $N=31$ together with the nominal five-basket cells at $N=24$ and $N=31$, the largest
certified upper endpoint is $1.69\%$, so on that set of solved local Bernoulli problems the
constrained optimum improves on the flat rule by less than $1.7\%$.  That is a statement about
those cells, not about every deployment, and it is separate from the $6.7\%$ endpoint of the
planned design at its own shorter horizon.

\paragraph{The Gaussian panel is numerical.}  Its solver discretizes the log-evidence increment onto
a lattice of spacing $h=0.25$, truncates the increment kernel at six standard deviations,
exponentially tilts it so that its exponential has mean one before state truncation,
and clamps the finite state grid.  Its left edge is the null mean at $N$ minus four cumulative
standard deviations; its right edge is $\log(1/\alpha)+\max(4\mu,5)$.
Clamping does not preserve the exact martingale identity at the edges.  The reported lattice
reward therefore approximates the null-measure expression in
\mainref{Equation}{eq:waldreduction}; the clamped gate is not asserted to be an exact mixture
density process for the clamped chain.
Table~\ref{tab:dpconvergence} reports a targeted
spacing and domain study at four cells.  Widening the domain from four to six standard deviations
does not change the displayed two-decimal gains in the checked two-dimensional cases; this is a
sensitivity check, not a proved truncation-error bound.  Halving the spacing does move
the entries, and by more than the reported precision at the shortest horizon: the $m=2$, $N=15$
level-uniform cell falls from $7.92\%$ to $6.99\%$ and is then stable to $0.01$ percentage points
under a further halving, while the geometric cell at the same horizon moves by about $0.19$ and the
$N=50$ cell by about $0.04$ percentage points without settling monotonically.  The one $m=3$ cell
checked behaves like the $m=2$ ones, moving from $8.85\%$ to $8.74\%$ when the spacing is halved; a
further halving there costs a $313^3$ state space per look and was not run.  The Gaussian entries
are therefore reported as numerical comparisons with the observed sensitivity, not a certified
discretization-error estimate.  A
continuous-model certificate would need a bound on the approximation error.

We also evaluated the $m=2$, $N=15$ level-uniform policy at $h=0.0625$ on $200{,}000$ fresh
continuous Gaussian null paths and $200{,}000$ fresh design-mixture paths, without recalibration.
At each look the continuous log likelihood-ratio vector is mapped to its nearest lattice point, clamped at the
edges, before applying the learned decision.  The null crossing estimates were $0.049705$ for
this policy and $0.050145$ for the flat rule at the retained lattice threshold, with two-sided
$99\%$ exact binomial intervals $[0.04846,0.05098]$ and $[0.04889,0.05142]$, rounded outwards.
The respective mixture rewards were $4.3140$ and $4.0432$ rounds (standard errors $0.0108$ and
$0.0097$); the paired reward difference was $0.2708$ (standard error $0.0041$).  These independent
checks do not certify level $0.05$ in the continuous model or an optimality gap there.

\begin{table}[!tbp]
\centering
\small
\caption{Local stopping comparisons relative to the calibrated flat PEM rule.  Bernoulli cells
give percentage gains and the bound on the unrestricted optimum from
\mainref{Equation}{eq:dpbracket} as \emph{feasible gain / certified upper endpoint}, with upper
endpoints rounded upwards after interval certification; a negative
feasible gain is a feasible policy landing inside the constraint, not a worse optimum.  Gaussian
cells give the feasible-policy gain only and are numerical, with the discretization error assessed
in Table~\ref{tab:dpconvergence}.  \(t_{\mathrm{cross}}=\log(1/\alpha)/D\), with \(D=\mu^2/2=0.125\)
for the Gaussian panel and \(D=\KL(p_{\rm alt}\|p_0)\) for the Bernoulli panel.  Rows marked
\emph{actual} are solved at the horizon the design in \S\mainnum{sec:apps} actually uses.}
\label{tab:dpvalidation}
\begin{tabular}{rrrrrr}
\multicolumn{6}{l}{\emph{Gaussian} (numerical), \(\mu=0.5\), \(\alpha=0.05\), \(t_{\mathrm{cross}}=24.0\), \(h=0.25\)}\\
\toprule
 & & \multicolumn{2}{c}{\(m=2\)} & \multicolumn{2}{c}{\(m=3\)}\\
\cmidrule(lr){3-4}\cmidrule(lr){5-6}
\(N\) & \(N/t_{\mathrm{cross}}\) & geometric & level-unif. & geometric & level-unif.\\
\midrule
15 & 0.63 & 12.51 & 7.92 & --- & ---\\
20 & 0.83 & 7.05 & 3.95 & 8.85 & 2.49\\
25 & 1.04 & 5.14 & 3.02 & --- & ---\\
30 & 1.25 & 3.21 & 1.54 & 3.94 & 1.08\\
40 & 1.67 & 1.78 & 1.21 & --- & ---\\
50 & 2.09 & 0.81 & 0.58 & 0.92 & 0.37\\
100 & 4.17 & 0.04 & 0.08 & --- & ---\\
\bottomrule
\end{tabular}

\vspace{1.0em}

\footnotesize
\setlength{\tabcolsep}{3.5pt}
\begin{tabular}{@{}L{4.0cm}rrrrr@{}}
\multicolumn{6}{@{}l}{\emph{Bernoulli} (exact lattice), feasible gain / certified upper endpoint}\\
\toprule
design and design prior & \(N/t_{\mathrm{cross}}\) & \(m=2\) & \(m=3\) & \(m=4\) & \(m=5\)\\
\midrule
NCI, geometric ($\alpha=0.05$, $N=31$, \emph{actual}) & 2.33 & 1.11 / 1.26 & 0.95 / 1.04 & 1.23 / 1.25 & ---\\
NCI, level-uniform ($\alpha=0.05$, $N=31$, \emph{actual}) & 2.33 & 0.55 / 0.65 & $-0.03$ / 0.48 & $-0.19$ / 0.18 & ---\\
Five-basket planned, geometric ($\alpha=0.10$, $N=13$, \emph{actual}) & 1.44 & 4.75 / 4.87 & 4.05 / 4.10 & 4.06 / 5.47 & 4.02 / 6.62\\
Five-basket, geometric ($\alpha=0.10$, $N=24$) & 2.66 & 0.80 / 0.81 & 1.58 / 1.69 & 0.41 / 0.87 & ---\\
Five-basket, geometric ($\alpha=0.10$, $N=31$) & 3.43 & 0.32 / 0.39 & 0.26 / 0.42 & 0.44 / 0.48 & ---\\
\bottomrule
\end{tabular}
\end{table}

\begin{table}[!tbp]
\centering
\small
\caption{Spacing and domain study for the Gaussian panel, \(\mu=0.5\), \(\alpha=0.05\).  Entries are
the feasible-policy gain in percent; \(h\) is the log-evidence lattice spacing and \(n_{\rm sd}\) the
left-tail width parameter (the asymmetric right edge is defined in the text), with \(h=0.25\),
\(n_{\rm sd}=4\) the setting used in Table~\ref{tab:dpvalidation}.  The checked domain changes do not
affect the displayed gains, whereas spacing does, and the
shortest horizon is where it matters most.  A dash marks a setting not run: the cost of one solve
grows like the number of grid points to the power \(m\), so the finest spacings were affordable only
at \(m=2\).}
\label{tab:dpconvergence}
\setlength{\tabcolsep}{4pt}
\begin{tabular}{@{}L{3.7cm}rrrrrr@{}}
\toprule
 & \multicolumn{3}{c}{\(n_{\rm sd}=4\)} & \multicolumn{2}{c}{\(n_{\rm sd}=6\)} & \\
\cmidrule(lr){2-4}\cmidrule(lr){5-6}
cell & \(h=0.25\) & \(h=0.125\) & \(h=0.0625\) & \(h=0.25\) & \(h=0.125\) & grid, \(h=0.25\)\\
\midrule
\(m=2\), \(N=15\), geometric      & 12.51 & 12.70 & 12.63 & 12.51 & 12.70 & 72\\
\(m=2\), \(N=15\), level-uniform  & 7.92  & 6.99  & 6.99  & 7.92  & 6.99  & 72\\
\(m=2\), \(N=50\), geometric      & 0.81  & 0.85  & 0.81  & 0.81  & 0.85  & 115\\
\(m=3\), \(N=20\), geometric      & 8.85  & 8.74  & ---   & ---   & ---   & 79\\
\bottomrule
\end{tabular}
\end{table}

\smallskip\noindent(This appendix is cited in \S\mainnum{sec:dpvalidation} of the main article.)

\section{Gaussian simulation details}\label{app:simdetails}

The four Gaussian experiments of the main article illustrate its theoretical predictions at the stated
Monte Carlo sizes; this appendix gives the boundaries they use, the complete tables, and the
conventions behind each reported number.  All use \(f_0=\mathcal N(0,1)\),
\(f_A=\mathcal N(0.5,1)\), \(\alpha=0.05\), \(D=0.125\), and \(t_{\mathrm{cross}}=24.0\).  The
level-uniform prior \(\beta^{(m)}_l=1/m\), or \(w_l^{(m)}=\{m\binom ml\}^{-1}\), is the \(a=b=1\)
member of the beta-binomial family \(w_l^{(m)}\propto B(l+a,m-l+b)\), with \(B\) the beta function,
induced by a \(\mathrm{Beta}(a,b)\) prior on the per-hypothesis alternative probability, conditional
on at least one alternative.  The choice \(a=b=1\) gives a uniform prior on that probability before
conditioning; it is a design choice rather than a unique representation of ignorance.

\paragraph{Calibrated boundaries.}  Table~\ref{tab:boundary} lists the boundaries
\(\widehat b(\alpha,2{,}500,m)\) used by the \(K=6\) closed test, computed by the Monte Carlo route
from \(50{,}000\) null paths at confidence \(\gamma=0.999\).  Three quantities must be kept apart
here: the population minimum \(b^\star\), which is not computed; the fraction of calibration paths
whose maximum reaches \(\widehat b\), which is an empirical proportion; and the Clopper--Pearson
upper confidence bound on the population crossing probability, which is what the acceptance rule
tests and what the validity statement rests on.  Every size accepts the same hit count,
\(2{,}350\) of \(50{,}000\), so every row has empirical fraction \(0.04700\) and the same
\(0.999\)-confidence bound \(0.049996\).  Equal accepted hit counts across sizes, or across the
priors of Table~\ref{tab:priorfamily}, do not mean equal true operating levels: they mean only that
each boundary was advanced until it fell into the same acceptance class.  The calibration
certificate is pointwise in size: each displayed boundary has a \(0.999\)-confidence upper bound
of \(0.049996\) on its null crossing probability.  A union bound gives simultaneous confidence
at least \(0.994\) for the six rows.  The in-sample fraction is \(94\%\) of the nominal budget;
it is not an independent estimate of the selected boundary's population spend.  The flat threshold
is \(1/\alpha=20\) at every size, so calibration lowers the
threshold by \(20\)--\(33\%\) in this table, and the flat threshold's estimated spend is between
\(63\%\) and \(76\%\), the shortfall broadly growing with \(m\) because a mixture over a larger
family overshoots \(1/\alpha\) further.  Those percentages describe this table, not every merge:
at \(K=10\) the calibrated arithmetic boundary is \(16.638\) and the product boundary \(8.601\),
reductions of \(16.8\%\) and \(57.0\%\), while the matched-level reference and \(e\)-Bonferroni keep
flat thresholds by construction.

\begin{table}[!tbp]
\centering
\small
\caption{The Monte Carlo boundary $\widehat b(\alpha,N,m)$ for the $K=6$ Gaussian design ($\mu=0.5$,
$\alpha=0.05$, horizon $N=2{,}500$, level-uniform prior), from $50{,}000$ null paths at confidence
$\gamma=0.999$.  ``Hits'' is the number of calibration paths whose maximum reaches $\widehat b$ and
``fraction'' is that count over $50{,}000$; neither is the population crossing probability.  ``CP
bound'' is the one-sided $0.999$ Clopper--Pearson upper bound on that probability, the quantity the
acceptance rule tests.  The last column is the Monte Carlo point estimate of the crossing
probability at the flat threshold $1/\alpha=20$.}
\label{tab:boundary}
\begin{tabular}{rrrrrrr}
\toprule
$m$ & $\widehat b$ & $\log\widehat b$ & hits & fraction & CP bound & est.\ level at $1/\alpha$\\
\midrule
1 & 15.99 & 2.7722 & 2{,}350 & 0.04700 & 0.049996 & 0.03784\\
2 & 15.33 & 2.7300 & 2{,}350 & 0.04700 & 0.049996 & 0.03574\\
3 & 14.83 & 2.6968 & 2{,}350 & 0.04700 & 0.049996 & 0.03400\\
4 & 13.92 & 2.6330 & 2{,}350 & 0.04700 & 0.049996 & 0.03278\\
5 & 13.73 & 2.6195 & 2{,}350 & 0.04700 & 0.049996 & 0.03174\\
6 & 13.45 & 2.5991 & 2{,}350 & 0.04700 & 0.049996 & 0.03220\\
\bottomrule
\end{tabular}
\end{table}

\paragraph{Dense scaling.}  Under dense evidence the exact closed delay is nearly flat in \(K\),
whereas the \(e\)-Bonferroni delay grows with \(\log K\); Table~\ref{tab:kscale} gives the complete
study.  The study runs closed testing at the product corner, the Bayes-optimal member for the
all-alternatives configuration, and the exact closed delay is the time until every intersection
containing the selected hypothesis has crossed.  From \(K=4\) upward the first-order benchmark \(1+\log K/\log(1/\alpha)\)
exceeds the observed ratio.  This drift benchmark omits discrete boundary overshoot, calibration,
and the closure's residual cost; at \(K=2\), the observed ratio instead exceeds the benchmark,
\(1.28\) against \(1.23\).
\(e\)-Bonferroni uses the fixed threshold \(K/\alpha\) and is unchanged by calibration.

\begin{table}[!tbp]\centering
\caption{Mean rejection delay of a hypothesis vs.\ $K$, all-alternatives configuration, exact closure
at the product corner (all $2^{K-1}$ intersections containing the hypothesis; $6{,}000$ Monte Carlo
replications, horizon $8{,}000$, no censoring observed, standard errors $\le0.4$). The estimated
delay gain over $e$-Bonferroni increases overall across the examined range of $K$; the final column is
the drift benchmark. The ratio is $e$-Bonferroni delay divided by closed-testing delay,
computed from unrounded means.}
\label{tab:kscale}
\begin{tabular}{rrrrr}
\toprule
$K$ & closed delay & $e$-Bonf delay & ratio & benchmark $1+\tfrac{\log K}{\log(1/\alpha)}$\\
\midrule
2 & 25.1 & 32.1 & 1.28 & 1.23\\
4 & 26.4 & 37.8 & 1.43 & 1.46\\
6 & 26.9 & 40.6 & 1.51 & 1.60\\
8 & 27.8 & 43.4 & 1.56 & 1.69\\
12 & 29.5 & 46.0 & 1.56 & 1.83\\
\bottomrule
\end{tabular}
\end{table}
\FloatBarrier

\paragraph{Sparsity sweep at \(K=10\).}  The level-uniform PEM statistic crosses on every simulated path at every displayed \(l\), whereas product crosses on a minority of paths at \(l\le3\).  Table~\ref{tab:competitors} gives the full sparsity-axis comparison; the supporting drift calculations are in Appendix~\ref{app:hetero}.  At each displayed \(l\), the
level-uniform PEM statistic crosses by the horizon on all 3,000 paths.  Product crosses on only 245,
485, and 830 paths at \(l=1,2,3\), or \(8.2\%\), \(16.2\%\), and \(27.7\%\), matching its negative
log drift below half-density, and on 1,608 at \(l=4\), or \(53.6\%\): one active stream short of
half-density, the calibrated threshold carries a majority of paths where the flat threshold left
the product stalling.  At \(l=7\), the product median of 4 is one round earlier than the PEM median of 5, and
under calibration the exact distribution-free 95\% order-statistic intervals separate, \([4,4]\)
against \([5,5]\), where under the flat threshold both were \([5,6]\) and the ordering was unresolved.  The arithmetic mean is faster at \(l=1\), where the full mixture
pays both the which-subset and level-prior costs.  These are full-intersection outcomes, not elementary
closed-testing power.

\begin{table}[!tbp]\centering
\caption{Censoring-aware median full-intersection delay by configuration $l$ at $K=10$ ($3{,}000$
replications, horizon $2{,}500$; noncrossers coded as the horizon plus one). ``Stall'' means fewer than half the
paths cross by the horizon; the binomial standard errors of the corresponding crossing
fractions are at most $0.0092$. The product
merge stalls for $l\le3$; the arithmetic mean
is robust but slow under dense alternatives; the PEM statistic tracks the better of the two corners closely for
$l\ge2$ (not always strictly fastest, e.g.\ $l=7$) and does not stall in the configurations shown.
For the highlighted $l=7$ comparison, exact distribution-free $95\%$ order-statistic intervals are
$[5,5]$ for the PEM median and $[4,4]$ for the product median.
The final column is the level-$l$ mixture, not the configuration-aware likelihood-ratio oracle.}
\label{tab:competitors}
\begin{tabular}{rrrrr}
\toprule
$l$ & PEM (ours) & arith.\ mean & product & matched-level $\tilde e_l$\\
\midrule
1  & 46 & 37 & stall & 38\\
2  & 24 & 26 & stall & 22\\
3  & 15 & 22 & stall & 14\\
4  & 10 & 19 & 40 & 10\\
5  & \phantom{0}7 & 17 & \phantom{0}8 & \phantom{0}8\\
7  & \phantom{0}5 & 15 & \phantom{0}4 & \phantom{0}5\\
10 & \phantom{0}3 & 13 & \phantom{0}2 & \phantom{0}3\\
\bottomrule
\end{tabular}
\end{table}

\FloatBarrier

\paragraph{Full closed test at \(K=6\).}  Table~\ref{tab:fwer} reports the literal closed test at
\(K=6\), including all \(2^K-1=63\) intersections, across configuration classes, monitored at the
calibrated boundaries of Table~\ref{tab:boundary}.  Through horizon \(2{,}500\), the maximum FWER
estimate is \(0.0461\); after accounting for selection over the six null-containing classes, the
largest Bonferroni-adjusted one-sided \(95\%\) Clopper--Pearson upper bound is \(0.049822<0.05\).
That margin is thin, since four further false rejections among the maximal row's \(20{,}000\)
replications would carry it past \(\alpha\), and it bounds Monte Carlo precision under a six-way
selection adjustment, not the validity of PEM closure.  Rerunning the identical experiment at the
flat threshold \(1/\alpha\) returns a maximum of \(0.0369\), about four fifths of the calibrated
figure: the lower calibrated boundaries use more of the nominal error budget.  Every false null is rejected by the horizon on every simulated path; the corresponding
exact lower bound is \(0.999850\) within each nonnull class and \(0.999761\) simultaneously over
the six classes.  The pooled median elementary rejection delay falls from \(43\) at \(l=1\), where
null-padded intersections of five sizes must clear, to \(22\) at \(l=6\), where every nonsingleton
intersection can pool multiple active streams and the calibrated boundaries sit below \(1/\alpha\).

\begin{table}[!tbp]
\centering
\small
\caption{Full closed testing at $K=6$ under the calibrated boundary of Table~\ref{tab:boundary}:
finite-horizon Monte Carlo estimates for all configuration classes ($20{,}000$ replications,
horizon $N=2{,}500$; pointwise plug-in binomial standard errors $\le0.002$).  ``Alternative
completion'' is the probability that every alternative hypothesis is rejected by $N$.  ``Pooled
median alt delay'' is the median of individual alternative-hypothesis rejection times over paths
and alternative coordinates, with non-rejections coded as $N+1$.  No alternative non-rejections by
$N$ were observed; the simultaneous bounds in the text use all six nonnull configuration classes
displayed below.}
\label{tab:fwer}
\begin{tabular}{lrrr}
\toprule
configuration & FWER by $N$ & alternative completion by $N$ & pooled median alt delay\\
\midrule
all null ($l{=}0$) & 0.014 & N/A & N/A\\
$l{=}1$ & 0.018 & 1.00 & 43\\
$l{=}2$ & 0.022 & 1.00 & 40\\
$l{=}3$ & 0.031 & 1.00 & 37\\
$l{=}4$ & 0.040 & 1.00 & 32\\
$l{=}5$ & 0.046 & 1.00 & 27\\
all alt ($l{=}6$) & N/A & 1.00 & 22\\
\bottomrule
\end{tabular}
\end{table}

\paragraph{Design-prior sweep at \(K=10\).}  The bound-minimax geometric weights equalize
\(\log(1/w_l)/l\).  This worst-case log-regret criterion does not determine finite-level
crossing delays.  Table~\ref{tab:priorfamily} sweeps a
prespecified prior grid; the beta-binomial family is indexed by \(\mathrm{Beta}(a,b)\) on the
per-hypothesis alternative probability, and the binomial \(q\) member is the point mass on that
probability used in the basket sensitivity of Table~\ref{tab:basketncisensitivity}.  The
\(\max_l\log(1/w_l)/l\) column is infinite for the arithmetic mean because it puts zero weight on
cardinalities above one.  The full-support priors have finite bounds at every cardinality.
Their per-configuration weights need not decrease across \(l\): for
\(w_l\propto B(l+a,m-l+b)\) at \(a=b=1\) and \(m=10\) the weights are \(0.01\) at \(l=1\), fall to
\(3.97\times10^{-4}\) at \(l=5\), and rise again to \(0.1\) at \(l=10\), because the level mass
\(\beta_l=1/m\) is constant while the number \(\binom ml\) of configurations sharing it is not.
Here the level mass is constant; other beta-binomial choices need not have monotone level masses.
The \(\beta_{10}\) column below reports the level mass on the all-alternatives configuration, which
for that member is \(0.1\).  Every row except the arithmetic mean falls by a
factor of at least four from \(l=1\) to \(l=10\), whereas the mean falls by less than three,
illustrating the rate pattern of \mainref{Corollary}{cor:rate} at this finite level.

Prior choices with similar worst-case bounds can still give different dense-configuration delays.
Here differences are median delays relative to geometric, with \(95\%\) paired
percentile-bootstrap intervals from 2,000 resamples of the evaluation paths, holding calibrated
boundaries fixed. Within the family, the \(\mathrm{Beta}(1/2,8)\) and binomial \(q=0.10\)
members agree to within \(0.04\) on \(\max_l\log(1/w_l)/l\), yet their all-alternatives mass
differs by five orders of magnitude.  At \(l=1\) neither member separates from geometric: the point
estimates are \(+1.0\) and \(+0.5\) rounds, with \(95\%\) confidence intervals \([+0.0,+1.0]\) and
\([-1.0,+1.0]\), so the sparse price the two pay is not resolvable from zero at this Monte Carlo
size.  The dense gain is where they part, and there it is sharp: the beta-binomial member returns
\(-2.0\) rounds at \(l=K\), \([-2.0,-2.0]\), exactly twice the binomial member's \(-1.0\),
\([-1.0,-1.0]\), both intervals excluding zero.  Under the flat threshold the binomial member's
dense interval covered zero, so that separation rested on the point estimates alone; calibrating
each prior's threshold resolves it away from zero.  Nor is the ordering confined to the cell in
Table~\ref{tab:priorfamily}: the beta-binomial member's dense gain is strictly the larger of the two
in all ten cells of the sweep, the primary cell and the nine robustness cells at
\(K\in\{6,10,15\}\) and \(\mu\in\{0.35,0.5,0.7\}\).  The robustness cells run at a smaller Monte
Carlo size and shorter horizon.  The rows with the largest dense gains have longer median delays
at the sparse configuration in this experiment.  The binomial member nonetheless matches or beats
geometric at every \(l\) in Table~\ref{tab:priorfamily} to the table's integer resolution; its
\(+0.5\)-round point estimate at \(l=1\) sits inside an interval that covers zero, and the binomial
member and geometric separate only at \(K=15\), where the sparse configuration is hardest.
The geometric prior remains the primary choice because it minimizes the worst-case bound: max regret \(2.63\) against
\(2.82\), a design-time property of the weights rather than a measured delay.  The accompanying analysis
code produces the full sweeps.

\begin{table}[!tbp]
\centering
\footnotesize
\caption{Design-prior sweep at \(K=10\), \(\mu=0.5\), \(\alpha=0.05\), on the Monte Carlo size and
horizon of the sparsity sweep (3,000 paths, horizon 2,500) with common random numbers across
priors.  Every row is monitored at its own \(\widehat b\), each accepted at the same calibration
hit count \(2{,}350/50{,}000\) and hence the same \(0.999\) Clopper--Pearson bound \(0.049996\).
Thus every row uses the same calibration protocol; an equal accepted hit count is not an equal true
operating level. Both calibration and evaluation runs use seeds distinct from those of the
sparsity sweep, so the arithmetic and level-uniform rows do not reproduce the medians of
Table~\ref{tab:competitors} exactly.
Entries are median first crossings of the full-intersection statistic.  ``max reg'' is
\(\max_l\log(1/w_l)/l\), minimized by geometric; \(\beta_{10}\) is the design-prior mass on the
all-alternatives configuration.  Level-uniform is the \(a=b=1\) beta-binomial member.}
\label{tab:priorfamily}
\begin{tabular}{@{}lrrrrrr@{}}
\toprule
Design prior & \(l=1\) & \(l=2\) & \(l=5\) & \(l=10\) & max reg & \(\beta_{10}\)\\
\midrule
Arithmetic (GROW)                  & $36$ & $25$ & $17$ & $13$ & $\infty$ & $0$\\
Geometric (primary)                & $36$ & $23$ & $13$ & $8$  & $2.63$ & $3.6\times10^{-12}$\\
Level-uniform $=\mathrm{Beta}(1,1)$ & $46$ & $23$ & $8$  & $3$  & $4.61$ & $1.0\times10^{-1}$\\
$\mathrm{Beta}(1/2,1/2)$           & $45$ & $23$ & $8$  & $3$  & $4.49$ & $2.1\times10^{-1}$\\
$\mathrm{Beta}(1/2,2)$             & $39$ & $21$ & $8$  & $4$  & $3.55$ & $1.0\times10^{-2}$\\
$\mathrm{Beta}(1/2,8)$             & $37$ & $22$ & $11$ & $6$  & $2.86$ & $1.8\times10^{-5}$\\
Binomial $q=0.10$                  & $36$ & $22$ & $12$ & $7$  & $2.82$ & $1.5\times10^{-10}$\\
\bottomrule
\end{tabular}
\end{table}

\smallskip\noindent(This appendix is cited in \S\S\mainnum{sec:ebonf} and \mainnum{sec:sims} of the main article.)
\FloatBarrier
\section{Heterogeneity-robustness growth table}\label{app:hetero}

For $c_l=\{1,\dots,l\}\subseteq[K]$, let $Q_l:=Q_{c_l}$ denote the one-step product law in the
Gaussian model used in the main article's simulations: the coordinates in $c_l$ have law $\mathcal N(0.5,1)$ and the remaining
coordinates have law $\mathcal N(0,1)$. By exchangeability, the expectation below is the same for every
configuration of size $l$. For each single-level merge $\tilde e_q=e_q/\binom{K}{q}$,
Table~\ref{tab:hetero} reports $\E_{Q_l}[\log\tilde e_q(r)]$ ($\times100$) at $K=10$.
These are drift diagnostics for the merges themselves. The accumulated statistics in the main article's
closed-testing procedure grow at rate $lD$ for any full-support prior, as shown in
\mainref{Corollary}{cor:rate}. Only the product row is additive in log evidence and therefore has the same per-step and
accumulated growth rate. In the equal-variance Gaussian model, that product rate is $(2l-K)D$.

The product ($\tilde e_{10}$) stalls (negative
growth) for $l<K/2$; $\tilde e_1$ is positive for every $l$; the matched level peaks near the
diagonal: no fixed merge is uniformly best, motivating mixtures across levels. The per-step GROW
prior is exactly the uniform singleton distribution, so its row duplicates $\tilde e_1$: the sparsest
configuration binds and no cross-level borrowing improves the worst case, as
\mainref{Proposition}{prop:growmean} proves.

\begin{table}[!tbp]\centering
\caption{Expected one-step log-growth ($\times100$) of the per-step merges at $K=10$ by configuration
$l$ (number of alternatives; $300{,}000$ Monte Carlo draws per column). The nonproduct entries are
Monte Carlo estimates with standard errors at most $0.15$ on the displayed $\times100$ scale; the
product row is exact. Negative drift means that eventual crossing is not almost sure, and zero is the boundary case.
The product has negative drift for $l<K/2$, while $\tilde e_1$ has positive drift in every displayed
nonnull configuration; the per-step GROW row equals $\tilde e_1$ and is omitted.}
\label{tab:hetero}
\begin{tabular}{lrrrrrr}
\toprule
merge & $l{=}1$ & $l{=}2$ & $l{=}3$ & $l{=}5$ & $l{=}8$ & $l{=}10$\\
\midrule
$\tilde e_1$ (average) & 1.4 & 4.1 & 6.8 & 11.8 & 19.1 & 23.6\\
$\tilde e_2$ & 0.0 & 5.4 & 10.7 & 20.8 & 35.4 & 44.6\\
$\tilde e_5$ & $-19.1$ & $-6.1$ & 6.9 & 32.0 & 69.1 & 92.9\\
$\tilde e_{10}$ (product) & $-100$ & $-75$ & $-50$ & 0.0 & 75.0 & 125\\
\bottomrule
\end{tabular}
\end{table}
\smallskip\noindent(This appendix is cited from Appendix~\ref{app:simdetails} and supports \S\mainnum{sec:sims} of the main article.)

\section{Random-order finite-population evidence}\label{app:finitepop}

The likelihood-ratio model in the main text describes prospective i.i.d.\ streams.  A fixed benchmark
requires a different sampling statement: its questions are a finite population, not an i.i.d.\ sample
from an unspecified task distribution.  The following construction supplies exact randomization
validity for the benchmark applications without extending the Bayes or GROW optimality claims beyond
the product model.

\begin{proposition}[Random-order validity]\label{prop:finitepop}
For stream $k$, fix values $z_{k,1},\ldots,z_{k,N_k}\in[-1,1]$. Let $\Pi_k$ be a uniform random
permutation of $[N_k]$, independently across $k$, and define $Z_{k,t}=z_{k,\Pi_k(t)}$. The
finite-population null is
\[
H_k^{\rm fp}:\sum_{i=1}^{N_k}z_{k,i}\le0.
\]
For $1\le t\le N_k$, immediately before draw $t$, let
\[
s_{k,t-1}=\sum_{j<t}Z_{k,j},\qquad n_{k,t}=N_k-t+1,
\qquad b_{k,t}=\min\{1,-s_{k,t-1}/n_{k,t}\}.
\]
Write $s_{k,t}=s_{k,t-1}+Z_{k,t}$ after draw $t$.
Fix $0<\eta<1$.  On every history compatible with $H_k^{\rm fp}$ and having $b_{k,t}>-1$, first
compute the finite linear factor
\[
\widetilde R_{k,t}=1+\eta\frac{Z_{k,t}-b_{k,t}}{1+b_{k,t}}.
\]
Define $R_{k,t}$ in priority order: set $R_{k,t}=+\infty$ if the current observation makes
$s_{k,t}>N_k-t$; otherwise set $R_{k,t}=\widetilde R_{k,t}$ when $b_{k,t}>-1$, and set
$R_{k,t}=1$ in the remaining null-compatible boundary case $b_{k,t}=-1$ and $Z_{k,t}=-1$.
After a null-impossible history, keep the extended process absorbing at $+\infty$ (equivalently, any
nonnegative continuation can be chosen off the null support).  For $t>N_k$, set $R_{k,t}=1$ and keep
$L_{k,t}^{\rm fp}=L_{k,N_k}^{\rm fp}$.
Then, under every fixed population in $H_k^{\rm fp}$,
$L_{k,t}^{\rm fp}=\prod_{j\le t}R_{k,j}$ agrees, outside a set of probability zero under that
population's own randomization law, with a nonnegative supermartingale, and is hence an e-process.
The $+\infty$ convention applies only on histories that are impossible under every population in
$H_k^{\rm fp}$.  For any intersection $S$, every fixed nonnegative mixture
\[
E_{S,t}^{w,\rm fp}=\sum_{\substack{\varnothing\ne c\subseteq S\\w_c>0}}w_c
                    \prod_{k\in c}L_{k,t}^{\rm fp},
\qquad \sum_cw_c=1,
\]
is therefore an intersection e-process.  Restricting the sum to positive weights avoids the
undefined product $0\cdot\infty$ on null-impossible histories; the implementation likewise skips
zero-weight terms.  Closed testing with these e-processes controls strong
anytime-valid FWER for the finite-population nulls.
\end{proposition}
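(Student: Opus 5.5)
The plan is to verify the one-step supermartingale inequality for a single stream by direct computation under the uniform random order, then lift it to products over $c$ using independence of the permutations, and finally invoke Proposition~\ref{prop:fwer}. Fix a population satisfying $H_k^{\rm fp}$, with total $T_k=\sum_i z_{k,i}\le0$, and let $\mathcal G_{t-1}$ be the $\sigma$-field generated by $Z_{k,1},\dots,Z_{k,t-1}$.

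\textbf{Conditional law of the next draw.} Given $\mathcal G_{t-1}$, the draw $Z_{k,t}$ is uniform over the $n_{k,t}$ values not yet drawn. Hence
\[
\mu_{k,t}:=\E[Z_{k,t}\mid\mathcal G_{t-1}]=\frac{T_k-s_{k,t-1}}{n_{k,t}}\le\frac{-s_{k,t-1}}{n_{k,t}} .
\]
Because every value lies in $[-1,1]$, also $\mu_{k,t}\le1$. Combining the two bounds gives $\mu_{k,t}\le b_{k,t}$.

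\textbf{Handling the degenerate histories.} First, on any history actually reachable under this population, the event $s_{k,t}>N_k-t$ is impossible. The remaining $N_k-t$ values are each at least $-1$, so they would force $T_k\ge s_{k,t}-(N_k-t)>0$. The $+\infty$ assignment therefore lives on a probability-zero set, which explains the ``agrees outside a null set'' clause. Second, if $b_{k,t}=-1$, then $s_{k,t-1}=n_{k,t}$. Then $T_k\le0$ forces every remaining value to equal $-1$, so almost surely $Z_{k,t}=-1$ and $R_{k,t}=1$.

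\textbf{The main case $b_{k,t}>-1$.} Since $Z_{k,t}\ge-1$, we have $Z_{k,t}-b_{k,t}\ge-(1+b_{k,t})$, so $\widetilde R_{k,t}\ge1-\eta>0$. The factor is also bounded above by $1+2\eta/(1+b_{k,t})$, which is finite and $\mathcal G_{t-1}$-measurable, so integrability is immediate. Taking conditional expectations,
\[
\E[\widetilde R_{k,t}\mid\mathcal G_{t-1}]=1+\eta\,\frac{\mu_{k,t}-b_{k,t}}{1+b_{k,t}}\le1 .
\]
For $t>N_k$ the factor is identically $1$. Thus $L^{\rm fp}_{k,t}$ agrees almost surely with a nonnegative supermartingale started at $1$, and is therefore an e-process.

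\textbf{Lifting to intersections and closure.} For the intersection null, fix populations satisfying the null for every $k\in S$; the streams outside $S$ have arbitrary fixed populations. Take the full filtration generated by all streams. Because the permutations $\Pi_k$ are independent, conditional on the joint past the next draws $(Z_{k,t})_k$ are independent, and each has the single-stream conditional law above. Hence for each $c$ with $w_c>0$,
\[
\E\Bigl[\prod_{k\in c}R_{k,t}\Bigm|\F_{t-1}\Bigr]=\prod_{k\in c}\E[R_{k,t}\mid\F_{t-1}]\le1 .
\]
So each $\prod_{k\in c}L^{\rm fp}_{k,t}$ is a full-filtration nonnegative supermartingale, up to null sets. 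A convex combination with positive weights is then again a supermartingale, hence an e-process for $H_S$. Restricting to $w_c>0$ avoids the expression $0\cdot\infty$ off the null support, where it could otherwise arise. Proposition~\ref{prop:fwer} then gives strong anytime-valid FWER control.

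\textbf{Main obstacle.} I expect the hardest part to be the bookkeeping, not the inequality itself. One must check that the $+\infty$ and $b=-1$ conventions really occur only on null-impossible or almost-sure-trivial histories under each fixed null population. One must also make precise the ``agrees almost surely with a supermartingale'' statement in the full filtration, including streams that are already exhausted ($t>N_k$). I would handle this by defining the modified process that equals $1$ on null-impossible histories, showing the two processes coincide almost surely, and proving the supermartingale property for the modified process.
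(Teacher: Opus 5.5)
Your proposal is correct and follows the paper's proof essentially step for step. You bound the conditional mean of the next draw by $b_{k,t}$ using $T_k\le0$ and $Z_{k,t}\le1$, note that the linear factor is at least $1-\eta$ with conditional mean at most one, and dispose of the $b_{k,t}=-1$ and $s_{k,t}>N_k-t$ cases as almost-sure or probability-zero events; you then use independence of the permutations to multiply conditional expectations across $c$, pass to the positive-weight mixture, and invoke Proposition~\ref{prop:fwer}. Your explicit check that $s_{k,t}>N_k-t$ would force $T_k>0$, and your modified-process device for the ``agrees almost surely'' clause, spell out details the paper states tersely; the case $b_{k,t}<-1$, which the paper mentions separately, is covered by your absorbing $+\infty$ step, since it can arise only after such a history.
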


\begin{proof}
Under $H_k^{\rm fp}$, the sum of the $n_{k,t}$ remaining values is at most
$-s_{k,t-1}$.  Uniform random order therefore gives
\[
\E[Z_{k,t}\mid\F_{t-1}]
\le \min\{1,-s_{k,t-1}/n_{k,t}\}=b_{k,t}.
\]
For $b_{k,t}>-1$, the smallest possible factor occurs at $Z_{k,t}=-1$ and equals
$1-\eta>0$, while its conditional expectation is at most one.  If $b_{k,t}=-1$, the remaining
mean is at least $-1$ and at most $-1$, so every remaining value equals $-1$ on a
null-compatible history.  A value $b_{k,t}<-1$ can occur only after a history incompatible with the
null.  Likewise, conditional on a previously compatible history, an observation producing
$s_{k,t}>N_k-t$ has probability zero under every population in $H_k^{\rm fp}$.  The immediate
$+\infty$ convention is thus an almost surely equivalent extended-value reporting convention under
each null and records that the null has become logically impossible, even on the last draw. Unit factors after exhaustion
preserve this property. This proves the marginal e-process assertion. Independent permutations make the
next-draw factors conditionally independent across streams; hence the conditional expectation of
each product multiplier is the product of quantities no larger than one.  Nonnegative fixed
mixtures preserve the supermartingale property, and \mainref{Proposition}{prop:fwer}
completes the closed-testing argument.
\end{proof}

The construction tests the realized finite benchmark contrast.  A claim about performance on a
future task population instead needs a prospective sampling or exchangeability assumption.  In the
language-model application, $Z_{k,t}$ is candidate correctness minus incumbent correctness in activity stratum $k$.
The exact guarantee uses mathematically independent uniform permutations. The recorded seeded streams
are their reproducible pseudorandom implementation, not by themselves a proof of realized
independence or exact uniformity.
The implementation handles the $b=-1$ boundary and histories that make the null impossible
explicitly, operates in log space, and uses factor one for prespecified idle updates after a stream
is exhausted.  Restricting the primary display to a common horizon no larger than
$\min_kN_k$ avoids any reliance on those idle updates.

\smallskip\noindent(This appendix is cited in \S\mainnum{sec:apps} of the main article.)

\section{Frozen small-GPT application protocol}\label{app:llmprotocol}

\subsection{Population, randomization, and task definition}

The HellaSwag source is \texttt{Rowan/hellaswag} at revision
\hashid{218ec52e09a7}.  The labeled validation population has 10,042
records (ordered-record SHA-256 prefix \hashid{9fb1890335fd}); the disjoint train split
used only for engineering smoke tests has 39,905 records (prefix
\hashid{f2b6149f71c8}).  Full revisions and digests are retained in the frozen design manifests. The dataset card and
upstream repository record an MIT license.  Applying the outcome-blind rule
``validation activity-label count at least 160'' gives the complete finite populations in
Table~\ref{tab:llmstratumresults}.  Ten seeds sampled independently using a cryptographically secure pseudorandom number generator
initialize the stratum
permutations, with no seed derivation across strata; their full values are retained in the design
manifest rather than repeated here.

The query/choice transformation is pinned to the HellaSwag task in the Language Model Evaluation
Harness~\citep{eleutherharness2026}.  The audit records identify the harness revision by prefix \hashid{e90e5b656bcc}, the task
utility and YAML by \hashid{07d23b60b28f} and \hashid{e9ef8ac3fed0}, and the ordered confirmatory
design by \hashid{2b871a1ed6b7}.  The corresponding manifests retain each full identity.

\subsection{Models, tokenizers, and scoring}

The frozen contrasts are \texttt{EleutherAI/\allowbreak gpt-neo-1.3B} versus
\texttt{EleutherAI/\allowbreak gpt-neo-125M}, at revision prefixes \hashid{dbe59a7f4a88} and
\hashid{21def0189f57}, and \texttt{gpt2-large} versus \texttt{gpt2-medium}, at prefixes
\hashid{32b71b12589c} and \hashid{6dcaa7a952f7}.  The manifests retain the full revisions.
The pinned cards identify all four as MIT-licensed.  The size names 1.3B, 125M, large, and medium are
model labels rather than audited counts; every inference manifest reports the sum of \texttt{numel}
over the unique tensors returned by \texttt{named\_parameters} before any numerical result is used.

The four pinned tokenizers have 50,257 entries and the same complete token-to-ID map, whose SHA-256
prefix is \hashid{79ff2372d75e}.  Hashing that map together with all special-token IDs gives
prefix \hashid{f33016e4a397}.  Both full identities are retained in the score manifests, recomputed
for each role, and checked for equality again during scoring.  Exactly as in the pinned task,
the query is formed from the activity label, \texttt{ctx\_a}, and capitalized \texttt{ctx\_b}; the
cleanup removes WikiHow bracket artifacts from the query and choices.  Query and choice are joined
by one ASCII space and retokenized as a whole; the separately tokenized query must be an exact
prefix.  Summed continuation-token log probability is divided by the preprocessed choice's Python
character count, in accordance with the task's \texttt{acc\_norm} convention.  The source option order and lowest-index
tie rule are fixed.  The reference source files supplying this transformation are pinned by the
hashes above.  All 1,610 selected items passed the fail-closed prefix and context-limit checks, so
every scored query--choice sequence remained within the common 1,024-token limit.

Write $C$ for the candidate model of a contrast, the larger checkpoint whose improvement is being
tested, and $I$ for the incumbent, the smaller one; $r\in\{C,I\}$ indexes the role.  For item
$(k,i)$, stratum $k$ and position $i$ within it, HellaSwag supplies four source endings indexed by
$j\in\{0,1,2,3\}$, and $a_{ki}$ denotes the index of the correct one as the pinned dataset records
it.  Let $u_{ki}$ be the preprocessed query, $e_{kij}$ the preprocessed choice, $d_{kij}$ its Python
character count, and $v_{kij,1:J_{kij}}$ the continuation tokens produced by retokenizing their
concatenation. Write $p_r$ for model $r$'s next-token conditional probabilities.
The executed score and paired outcome are
\[
q^{(r)}_{kij}=d_{kij}^{-1}\sum_{s=1}^{J_{kij}}
\log p_r(v_{kij,s}\mid u_{ki},v_{kij,1:s-1}),\qquad
\widehat a^{(r)}_{ki}=\argmax_j q^{(r)}_{kij},
\]
\[
Z_{k,i}=\mathbf 1\{\widehat a^{(C)}_{ki}=a_{ki}\}
-\mathbf 1\{\widehat a^{(I)}_{ki}=a_{ki}\},
\]
with the lowest source index breaking a tie.  This makes the finite-population estimand and every
normalization choice explicit.

\paragraph{Numerical implementation and audit.} Real inference uses local pinned snapshots,
half-precision model forward passes, float32 log-softmax,
and Python-binary64 accumulation of the resulting float32 scalar log probabilities, followed by
character-count normalization.  The item-level metadata's shorthand ``score accumulation in
float32'' describes the source scalars rather than this executed accumulation path; the stored scores
reflect the latter.  Evaluation mode, disabled TensorFloat-32 (TF32) arithmetic, deterministic PyTorch algorithms, and batch
size one are enforced.  Software, numerical flags, GPU identity,
checkpoint revision, tokenizer identity, exact parameter count, and Git commit are carried into the
score manifest.  That manifest also content-binds the complete design file and its manifest, whose
semantic design digest commits to every selected item's prompt hashes.  Candidate and incumbent
shards cannot be combined across different software identities, and all shards within a role must
have the same recorded hardware identity.  The checked-in configuration files for the two
contrasts are the authoritative machine-readable protocol.

\subsection{Audited small-GPT outcomes}\label{app:llmresults}

Both model pairs completed without prefix, context-overflow, tokenizer-alignment, or score-join
failures. Independent reconstruction from the scored item records reproduced every reported global
crossing, exact closed-testing decision, $\eta$ sensitivity, and conditional-order replay summary.

The audited runs used float16 on Tesla V100-PCIE-16GB devices and were generated from
\auditdetail{Git commit \hashid{3d3db54661f3} (full identity in the score manifests)}
{a software revision frozen before analysis}.
For GPT-Neo, the exact parameter counts were $1{,}315{,}575{,}808$ for the candidate and
$125{,}198{,}592$ for the incumbent; for GPT-2, they were $774{,}030{,}080$ and $354{,}823{,}168$,
respectively. Every role reproduced the common tokenizer identity in the protocol above.

Table~\ref{tab:llmmethods} compares canonical full-intersection crossings, exact-closure rejections, and
conditional-order replay costs for every primary merger and anytime baseline.

\begin{table}[!tbp]
\centering
\footnotesize
\setlength{\tabcolsep}{3pt}
\caption{Primary $\eta=0.25$ small-GPT results.  The four ESP entries use the full-intersection
crossing and literal exact closure; $\dagger$ marks instead the first $e$-Bonferroni marginal crossing,
which is not an ESP global-intersection process.  ``Rejected'' and ``last'' concern strong-FWER
elementary decisions on the canonical order.  Replay summaries are the mean capped crossing round and
10th--90th percentiles over 2,000 common conditional within-stratum permutations; they hold the
selected 161-item sets fixed and do not average over which full-stratum items enter those sets.  All
displayed crossing types crossed in every replay.  For $e$-Bonferroni the replay cost is again its first
elementary crossing.}
\label{tab:llmmethods}
\begin{tabular}{@{}llrrrr@{}}
\toprule
Contrast & Procedure & Canonical crossing & Rejected & Last rejection &
Replay mean $[q_{.10},q_{.90}]$\\
\midrule
GPT-Neo & Geometric PEM     & 33 & $10/10$ & 73  & $28.85\ [25,33]$\\
         & Level-uniform PEM & 10 & $10/10$ & 73  & $8.81\ [7,11]$\\
         & Arithmetic mean   & 52 & $10/10$ & 73  & $49.40\ [45,54]$\\
         & Product           & 8  & $10/10$ & 73  & $6.61\ [5,9]$\\
         & $e$-Bonferroni    & $73^\dagger$ & $10/10$ & 113 & $64.89\ [56,73]$\\
\addlinespace
GPT-2   & Geometric PEM      & 73  & $9/10$ & 141 & $66.90\ [58,76]$\\
        & Level-uniform PEM  & 19  & $9/10$ & 141 & $21.65\ [16,28]$\\
        & Arithmetic mean    & 108 & $8/10$ & 153 & $104.64\ [95,113]$\\
        & Product            & 14  & $9/10$ & 141 & $16.03\ [11,22]$\\
        & $e$-Bonferroni     & $129^\dagger$ & $4/10$ & 153 & $119.32\ [108,129]$\\
\bottomrule
\end{tabular}
\end{table}

For GPT-Neo, the canonical geometric full-intersection accounting summed across candidate and incumbent
roles is 2,640 conditional option-continuation forward passes and 255,026 full input-sequence token
positions. For GPT-2, the corresponding totals are 5,840 passes and 565,152 token positions. In the
audited batch-one V100 runs, the associated inference-call wall times were 74.27 and 144.91 seconds.
These implementation- and hardware-specific quantities are not scheduler allocation times, monetary
costs, or computation through the last elementary rejection.

For the geometric method, the conservative least-favorable shortcut reproduced the literal
exact-closure elementary rejection time in all 20 stratum--contrast cells.  The audit was run
under the current-time rule; since the latched shortcut is no later than that rule and no earlier
than literal closure, the same equality holds a fortiori for the latched form deployed here.  This
pathwise equality is an empirical audit result for these two contrasts, not a general equality claim
for the shortcut.

Under the separately assumed i.i.d.\ paired-score sign model, terminal Holm rejects all ten strata in
both contrasts.  The exact-sign Bonferroni grid with eight interim analyses uses rounds
20, 40, 60, 80, 100, 120, 140, and 161, with local level $0.05/(10\cdot8)=0.000625$; its first rejection occurs
at round 40 for GPT-Neo and round 100 for GPT-2, and it has rejected 10/10 and 8/10 strata,
respectively, by round 161.  These sign-test comparators do not test the full-stratum
finite-population null used by the primary procedure and cannot be monitored between the eight listed analysis times.

Table~\ref{tab:llmstratumresults} gives every balanced-prefix effect and geometric rejection;
the prespecified betting-fraction sensitivity appears in Table~\ref{tab:llmeta}. The $\eta=0.50$
rows remain sensitivities and do not replace the primary $\eta=0.25$ decisions.

\begin{table}[!tbp]
\centering
\footnotesize
\setlength{\tabcolsep}{3pt}
\caption{Per-stratum balanced-prefix outcomes.  C/I gives candidate/incumbent correct counts out of
161, $\widehat\Delta_k$ is their paired accuracy difference in percentage points, and
$\tau^{\rm cl}_{k,\rm geo}$ is the primary geometric PEM closure rejection time.  A dash is a
genuine noncrossing through round 161, not the censor code 162.  The $N_k$ column is the complete
finite-population size used by the random-order factors.}
\label{tab:llmstratumresults}
\begin{tabular}{@{}l r rrr rrr@{}}
\toprule
& & \multicolumn{3}{c}{GPT-Neo 1.3B versus 125M}
& \multicolumn{3}{c}{GPT-2 large versus medium}\\
\cmidrule(lr){3-5}\cmidrule(lr){6-8}
Stratum & $N_k$ & C/I correct & $\widehat\Delta_k$ & $\tau^{\rm cl}_{k,\rm geo}$
& C/I correct & $\widehat\Delta_k$ & $\tau^{\rm cl}_{k,\rm geo}$\\
\midrule
Personal care and style & 2627 & $77/42$ & 21.74 & 73  & $70/57$ & 8.07 & 121\\
Family life             & 980  & $80/44$ & 22.36 & 64  & $76/65$ & 6.83 & --\\
Food and entertaining   & 500  & $77/45$ & 19.88 & 69  & $69/55$ & 8.70 & 139\\
Computers and electronics & 454 & $87/44$ & 26.71 & 42 & $87/71$ & 9.94 & 98\\
Health                   & 427  & $80/47$ & 20.50 & 53  & $72/57$ & 9.32 & 115\\
Home and garden          & 390  & $75/35$ & 24.84 & 44  & $68/55$ & 8.07 & 141\\
Finance and business     & 265  & $84/41$ & 26.71 & 54  & $81/65$ & 9.94 & 101\\
Pets and animals         & 255  & $76/44$ & 19.88 & 72  & $73/57$ & 9.94 & 107\\
Education and communications & 201 & $91/49$ & 26.09 & 54 & $79/63$ & 9.94 & 106\\
Youth                    & 161  & $73/42$ & 19.25 & 56  & $68/53$ & 9.32 & 97\\
\bottomrule
\end{tabular}
\end{table}

\begin{table}[!tbp]
\centering
\small
\caption{Frozen betting-fraction sensitivity.  Each entry is global crossing round / exact-closure
rejected count at round 161.  Only $\eta=0.25$ is primary; the stronger $\eta=0.50$ result is a
prespecified sensitivity and does not replace the primary decision.}
\label{tab:llmeta}
\begin{tabular}{@{}lccccc@{}}
\toprule
Contrast & $\eta$ & Geometric & Level-uniform & Arithmetic & Product\\
\midrule
GPT-Neo & 0.10 & $68/10$  & $22/10$ & $104/10$ & $14/10$\\
GPT-Neo & 0.25 & $33/10$  & $10/10$ & $52/10$  & $8/10$\\
GPT-Neo & 0.50 & $19/10$  & $5/10$  & $33/10$  & $3/10$\\
\addlinespace
GPT-2   & 0.10 & $129/2$  & $48/2$  & $147/1$  & $33/2$\\
GPT-2   & 0.25 & $73/9$   & $19/9$  & $108/8$  & $14/9$\\
GPT-2   & 0.50 & $33/10$  & $9/10$  & $57/10$  & $6/10$\\
\bottomrule
\end{tabular}
\end{table}

The common finite-population implementation diagnostic used 50,000 paths in each frozen
$K=6$, horizon-60 configuration.  The maximum empirical closure FWER across its audited null
configurations and four closure methods was $0.00406$ (binomial standard error $0.000284$), below $0.05$ but not
interpreted as a proof or as a second empirical replication: the two checkpoint contrasts carry the
same diagnostic payload by construction.

\smallskip\noindent(This appendix is cited in \S\mainnum{sec:apps} of the main article.)

\section{Basket-trial protocol and confirmatory operating characteristics}\label{app:basketresults}

\subsection{Design, metrics, and computational audit}

For independent binary cohort responses $X_{k,t}$ and the composite null $H_k:p_k\le p_0$, a
prespecified working alternative $p_{\mathrm{alt}}>p_0$ gives the per-patient factor
\begin{equation}\label{eq:basketlr}
R_{k,t}=
\left(\frac{p_{\mathrm{alt}}}{p_0}\right)^{X_{k,t}}
\left(\frac{1-p_{\mathrm{alt}}}{1-p_0}\right)^{1-X_{k,t}},
\end{equation}
whose expectation is increasing in $p_k$ and equal to one at $p_0$, so the accumulated factor is a
supermartingale throughout the null.  Independence of disjoint cohorts then validates the ESP
products; monitoring at reduced boundaries additionally requires the schedule-specific calibration
in Appendix~\ref{app:bstar}.  Shared patients, common drift, or response-adaptive
dependence would require a different joint e-process.  The single-cohort version of this problem is
treated by~\citet{baas2026curtailment}, \citet{sokolova2026}, and \citet{brannathfischer2026}; what
the basket setting adds is the multiplicity layer across cohorts and the horizon at which it must be
controlled.  The NCI grid uses 200,000 paths per configuration and each five-basket grid 50,000,
for six million audited paths in total.

The primary NCI-MATCH-style experiment has $K=10$, $p_0=0.05$, a working
alternative $p_{\mathrm{alt}}=0.25$, a horizon of 31, and nominal $\alpha=0.05$.  Its primary
family varies the number $l=0,\ldots,10$ of cohorts with response probability
0.25; additional families use homogeneous rates 0.15 and 0.40, near-null rates
0.08 and 0.10, and two heterogeneous configurations.  Except for the explicitly
labeled unequal-accrual configuration, every cohort contributes one response
per outer round until its cap.  The five-of-31 rule is included only as
unadjusted protocol context.  The exact Bonferroni comparator with prespecified interim analyses
analyzes after 10, 20, and 31 observations per cohort, with $\alpha$ allocated across cohorts and
analysis times.

The two five-basket experiments have $p_0=0.15$, a working alternative
$p_{\mathrm{alt}}=0.45$, and nominal $\alpha=0.10$.  The planned design has 13 observations
per cohort and interim analyses after 5, 9, and 13 observations; the realized design has
caps $(20,10,8,18,7)$ and common outer interim analyses at 5, 10, 15, and 20 observations, with factor-one
updates after a cohort exhausts its cap.  The Maulik--Zhou terminal Bayes rules $L(k_1,k_2)$ minimize posterior expected loss. Each false
positive is weighted by $c$ times the number of true nulls raised to $k_1$, and each false negative
by $(1-c)$ times the number of alternatives raised to $k_2$. The rules use
independent $\operatorname{Beta}(1/2,1/2)$ priors and the published values of $c$:
$(0.9831,0.9946)$ for $L(1,1),L(1,2)$ in the planned design and
$(0.9804,0.9943)$ in the realized design. Those values target terminal
weak FWER under the global null, not strong FWER over partial-null
configurations.

For a method $M$ and configuration with active set $A$ and null set $S_0$, we
define FWER when $S_0\ne\varnothing$ and the two power functionals when
$A\ne\varnothing$ by
\[
\begin{split}
\operatorname{FWER}(M)&=\Pr\{\exists k\in S_0:\ k\text{ is rejected}\},\qquad S_0\ne\varnothing,\\
\operatorname{Power}_{\rm disj}(M)&=\Pr\{\exists k\in A:\ k\text{ is rejected}\},\qquad A\ne\varnothing,\\
\operatorname{Power}_{\rm conj}(M)&=\Pr\{\forall k\in A:\ k\text{ is rejected}\},\qquad A\ne\varnothing,
\end{split}
\]
and marginal power, the mean rejection indicator over $k\in A$ for
$A\ne\varnothing$.  We display these metrics as undefined when the relevant
set is empty rather than adopt vacuous-event values.  A closure's
full-intersection time is the first crossing of the full-intersection process; it need not
itself reject an elementary hypothesis.  Write $T$ for the design's maximum number of outer rounds,
the largest cohort cap: $T=31$ for the NCI design, $13$ for the planned five-basket design and $20$
for the realized one.  It is a round count, distinct from the per-cohort caps, which differ from one
another in the realized design, and distinct from the outcome totals used by
$\bar C_{\rm active}$ below.  The displayed capped mean crossing round charges a noncrossing path
$T+1$.  For the five-basket tables,
$\bar C_{\rm active}$ is the mean total number of outcomes observed by the time of the first
rejection of a false null, charging a path with no such rejection the full
deterministic design total (65 planned and 63 realized).  The same metric is
used for the deterministic-accrual NCI configurations, with a nonrejecting path charged
the full 310 outcomes.  The active set is known only for simulation evaluation; PEM closure itself does not use it.
Thus $\bar C_{\rm active}$ is a capped operating characteristic for first
correct active-cohort rejection, not a claim of realized historical patient
savings.  Table~\ref{tab:basketsavings} reports it for every deterministic-accrual
configuration beside the prespecified interim-look Bonferroni rule.

Every FWER, disjunctive-power, and conjunctive-power indicator is one Bernoulli
observation per independent simulation path.  Where reported in the
machine-readable aggregates, their exact intervals and Monte Carlo standard
errors therefore use the path as the sampling unit.
When a marginal-power estimate pools multiple active-cohort decisions within a
path, those decisions are dependent because of closure.  The frozen
accumulators do not retain the replicate-level active-cohort fractions needed
for a clustered interval, so the aggregates omit pairwise-binomial intervals
in those rows; a single-active-cohort row remains one Bernoulli observation per
path and retains its valid path-level interval.
The deterministic-accrual time histograms identify the distribution of the capped outcome count
whose mean is $\bar C_{\rm active}$, so the mean's Monte Carlo standard error is also reported.
Likewise, common random numbers make the method comparisons paired, but the
frozen aggregates do not retain paired-difference variances; differences below
are descriptive, not significance tests.  The largest
possible Bernoulli Monte Carlo standard error is $0.00112$ with 200,000 paths
and $0.00224$ with 50,000 paths.

The exact boundary calibration and schedule checks in Appendix~\ref{app:bstar} cover every true-null
subset for full-accrual NCI and both five-basket designs.  They also validate the particular size-three true-null profile
of the NCI thinned-accrual sensitivity.  That sensitivity reports operating characteristics for its
specified configuration; the retained calibration is not certified over every possible true-null
subset under that thinning schedule.  Unverified schedule extensions require separate calibration
or the Ville threshold.  Table~\ref{tab:basketguarantees} states the monitoring and multiplicity
guarantees under these profile restrictions.

\begin{table}[!tbp]
\centering
\footnotesize
\setlength{\tabcolsep}{3pt}
\caption{The per-configuration cost comparison behind the headline $31\%$.  $\bar C_{\rm geo}$ and
$\bar C_{\rm int}$ are the mean total outcomes observed up to the first rejection of a false null,
for geometric PEM closure and for the prespecified interim-look Bonferroni rule, a path with no such
rejection being charged the full deterministic design total ($310$ for NCI, $65$ planned, $63$
realized).  ``Red.'' is $100(1-\bar C_{\rm geo}/\bar C_{\rm int})$.  The $41$ rows are every
deterministic-accrual configuration in which the cost is identifiable for both rules; the
unequal-accrual NCI configuration is excluded because its outcome total is random.  NCI labels encode the active count and response probability: for example, \texttt{l10\_p0p08}
means ten active cohorts at $p=0.08$, with any remaining cohorts null at $0.05$.
The \texttt{vep} and \texttt{ver} prefixes denote planned and realized five-basket designs,
with scenarios defined in Table~\ref{tab:basketvescenarios}; Table~\ref{tab:basketnciextra}
defines the sparse heterogeneous row. The median
reduction is $30.8\%$, the quartiles $19.7\%$ and $41.2\%$, and the range $2.5\%$ to $65.4\%$.
Common random numbers make the comparison paired, but the frozen aggregates do not retain
paired-difference variances, so these are descriptive point estimates, not significance tests.}
\label{tab:basketsavings}
\begin{tabular}{@{}lrrr@{\qquad}lrrr@{}}
\toprule
Configuration & $\bar C_{\rm geo}$ & $\bar C_{\rm int}$ & Red.\ (\%)
& Configuration & $\bar C_{\rm geo}$ & $\bar C_{\rm int}$ & Red.\ (\%)\\
\midrule
heterogeneous\_sparse & 86.3 & 135.0 & 36.1 & l8\_p0p25 & 59.3 & 114.6 & 48.3\\
l10\_p0p08 & 268.0 & 295.5 & 9.3 & l9\_p0p25 & 55.3 & 111.0 & 50.2\\
l10\_p0p1 & 223.3 & 277.9 & 19.7 & vep\_s1 & 48.4 & 54.5 & 11.2\\
l10\_p0p15 & 114.2 & 202.4 & 43.6 & vep\_s2 & 37.3 & 47.6 & 21.8\\
l10\_p0p25 & 52.0 & 108.4 & 52.0 & vep\_s3 & 29.5 & 42.7 & 30.9\\
l10\_p0p4 & 34.6 & 100.0 & 65.4 & vep\_s4 & 24.4 & 39.2 & 37.8\\
l1\_p0p15 & 276.8 & 294.9 & 6.2 & vep\_s5 & 20.6 & 36.5 & 43.5\\
l1\_p0p25 & 195.3 & 241.6 & 19.2 & vep\_s6 & 56.5 & 60.3 & 6.3\\
l1\_p0p4 & 104.1 & 150.6 & 30.8 & vep\_s7 & 42.0 & 52.9 & 20.5\\
l2\_p0p25 & 140.1 & 196.9 & 28.9 & vep\_s8 & 36.8 & 48.8 & 24.6\\
l3\_p0p08 & 298.2 & 305.7 & 2.5 & vep\_s9 & 29.0 & 43.6 & 33.6\\
l3\_p0p1 & 283.6 & 299.7 & 5.4 & ver\_s1 & 45.7 & 52.9 & 13.6\\
l3\_p0p15 & 222.2 & 268.3 & 17.2 & ver\_s2 & 36.7 & 47.7 & 22.9\\
l3\_p0p25 & 110.3 & 168.2 & 34.4 & ver\_s3 & 30.6 & 43.4 & 29.5\\
l3\_p0p4 & 57.2 & 105.7 & 45.9 & ver\_s4 & 25.5 & 39.8 & 35.8\\
l4\_p0p25 & 92.2 & 148.7 & 38.0 & ver\_s5 & 22.1 & 37.6 & 41.2\\
l5\_p0p15 & 180.1 & 245.7 & 26.7 & ver\_s6 & 54.1 & 58.7 & 7.8\\
l5\_p0p25 & 79.6 & 135.0 & 41.0 & ver\_s7 & 42.7 & 53.4 & 20.0\\
l5\_p0p4 & 44.7 & 100.8 & 55.7 & ver\_s8 & 36.9 & 48.7 & 24.1\\
l6\_p0p25 & 71.0 & 125.8 & 43.6 & ver\_s9 & 29.7 & 44.0 & 32.5\\
l7\_p0p25 & 64.4 & 119.2 & 46.0 &  & & & \\
\bottomrule
\end{tabular}
\end{table}

\begin{table}[!tbp]
\centering
\footnotesize
\caption{Guarantee taxonomy for the basket experiments.  Reduced-boundary guarantees apply through
the design horizon $T$ under the verified null-profile class described above.  Procedures on
different rows answer different monitoring and multiplicity questions; their raw power is therefore
not an interchangeable ranking.}
\label{tab:basketguarantees}
\begin{tabular}{@{}L{2.8cm}L{2.1cm}L{5.0cm}L{3.0cm}@{}}
\toprule
Method family & Information times & Multiplicity guarantee & Role\\
\midrule
Geometric and level/prior PEM closures, product closure
& Every outer round through $T$ & Strong FWER under independent streams and the verified null-profile class
& Full-support primary/sensitivities; product is the dense boundary corner\\
Arithmetic e-closure
& Every outer round through $T$ & Strong FWER under the verified independent null profiles.
The Ville version at $1/\alpha$ needs only full-filtration marginal validity, without cross-stream independence
& Dependence-robust merge at $1/\alpha$; calibrated comparator as displayed\\
$e$-Bonferroni
& Every outer round & Continuous strong FWER at the fixed threshold $K/\alpha$; no cross-stream
independence needed, and no calibration is applied to it
& Marginal anytime baseline\\
Exact Bonferroni at prespecified interim analyses
& Three/four listed analysis times & Strong FWER only at those times
& Group-sequential comparator\\
Exact Holm/Bonferroni
& Terminal & Fixed-horizon strong FWER
& Terminal comparator\\
Maulik--Zhou $L(1,1),L(1,2)$
& Terminal & Published thresholds target weak FWER under the global null
& 2026 decision-theoretic comparator\\
NCI five-of-31
& Terminal & No cross-cohort FWER guarantee
& Protocol context only\\
Matched-level/configuration oracle
& Every round & Not implementable; knows active count/subset
& Labeled reference only\\
\bottomrule
\end{tabular}
\end{table}

All raw shards were labeled confirmatory, had atomic hash-bound completion
sentinels, and came from
\auditdetail{clean Git commit
\hashid{fb53ed9c4177} (full identity in the aggregate manifests)}
{a clean frozen software revision}.  Every identity in this subsection refers to the current
calibrated run, the one behind every number reported here; the earlier flat-threshold run, cited
below only where the Ville comparison is being documented, carries different identities.
The NCI run comprises $25\times40$ shards of 5,000 paths, and each five-basket
run comprises $10\times20$ shards of 2,500 paths: 1,400 logical shards and six
million paths in total, with no missing or duplicate scenario--shard keys.
Python 3.11.9, NumPy 2.4.6, and SciPy 1.17.1 were common across shards.
The NCI, planned, and realized configuration SHA-256 prefixes are, respectively,
\hashid{4aa51f6efbde}, \hashid{9db23ee6067c}, and
\hashid{ce8e39e3781e}; the aggregate JSON files record the full digests.
Local root audits and hardened aggregation then revalidated every
sentinel, seed partition, configuration binding, accumulator conservation
identity, and environment identity.

The three reviewed aggregate JSON files have SHA-256 prefixes
\hashid{1e70b3\allowbreak{}8a6bd3}, \hashid{3e01de\allowbreak{}4308c2}, and
\hashid{a466bf\allowbreak{}526de2}.  Separate checksum manifests kept with the aggregates and
with the figures record the full aggregate and figure digests, respectively; the figure-PDF
metadata also records its generating inputs.
Because reporting was hardened after the frozen simulations, each aggregate
separately records the exact post-run validator and reporter hashes
with prefixes \hashid{14874911c017} for the aggregation module and \hashid{ffbbd6176841} for the
core simulation module; the shard Git identity remains the clean generating commit above.

One frozen NCI metadata string writes the level-uniform weight as
\texttt{1/[m choose(m,l)]}, using juxtaposition inside the brackets for
multiplication.  The executed code uses the equivalent explicit form
$w_l^{(m)}=1/\{m\binom{m}{l}\}$ throughout, as verified directly from the weight-computing
routine and its start-at-one tests.  The frozen configuration
and its hash are preserved; the aggregate records this nonfunctional notation
clarification explicitly.

\subsection{NCI-MATCH-style operating characteristics}

The homogeneous primary grid confirms the intended compromise: geometric PEM closure improves the capped
full-intersection round estimate over arithmetic in every nonnull row but the first, where
at a single active cohort the two are level to within $0.012$ rounds, and avoids the product rule's
sparse failure, while its exact-closure FWER estimate remains below the nominal level.
Table~\ref{tab:basketnciprimary} reports the complete homogeneous $p=0.25$ primary grid.
Table~\ref{tab:basketncisensitivity} gives the frozen prior and comparator sensitivities, while
Table~\ref{tab:basketnciextra} completes the nonprimary scenario grid.
Figure~\ref{fig:basketnci} displays the primary-family crossing and conjunctive-power patterns.

\begin{table}[!tbp]
\centering
\footnotesize
\setlength{\tabcolsep}{3pt}
\caption{Complete primary $p=0.25$ NCI grid, 200,000 paths per row.
Geometric PEM closure is primary.  FWER is undefined at $l=10$ because
there is no true null; power is undefined at $l=0$.  The capped crossing round is
the geometric full-intersection restricted mean with noncrossers coded 32.
Marginal-power entries are point estimates without pseudo-independent cohort
intervals.}
\label{tab:basketnciprimary}
\begin{tabular}{@{}rrrrrrr@{}}
\toprule
$l$ & Geo. FWER & Marginal power & Disjunctive & Conjunctive &
Capped crossing round & Planned-analysis power\\
\midrule
0  & $.04125$  & --        & --        & --        & $31.047$  & --\\
1  & $.04033$  & $.75270$  & $.75270$  & $.75270$  & $18.837$  & $.70505$\\
2  & $.04105$  & $.75767$  & $.94038$  & $.57496$  & $12.586$  & $.70489$\\
3  & $.04093$  & $.76663$  & $.98644$  & $.45226$  & $9.361$   & $.70487$\\
4  & $.04088$  & $.77527$  & $.99725$  & $.36425$  & $7.507$   & $.70533$\\
5  & $.04042$  & $.78979$  & $.99947$  & $.31416$  & $6.309$   & $.70593$\\
6  & $.03981$  & $.80618$  & $.99996$  & $.28468$  & $5.490$   & $.70575$\\
7  & $.03947$  & $.82156$  & $.99999$  & $.26390$  & $4.884$   & $.70519$\\
8  & $.03786$  & $.84853$  & $1.00000$ & $.30092$  & $4.425$   & $.70549$\\
9  & $.03526$  & $.86791$  & $1.00000$ & $.30910$  & $4.054$   & $.70508$\\
10 & --        & $.89537$  & $1.00000$ & $.38826$  & $3.756$   & $.70500$\\
\bottomrule
\end{tabular}
\end{table}

\begin{table}[!tbp]
\centering
\footnotesize
\setlength{\tabcolsep}{1.8pt}
\caption{Prespecified configuration-prior and comparator sensitivity in
selected rows of the primary NCI grid.  For closure methods, capped crossing
round is the full-intersection statistic with noncrossers coded 32.
Here $\dagger$ marks the first elementary $K/\alpha$
crossing, not a full-intersection statistic.  The terminal methods have no
continuously monitorable statistic.  The unadjusted five-of-31 rule's high power
comes with global-null FWER $0.16555$, not strong family-wise control.}
\label{tab:basketncisensitivity}
\begin{tabular}{@{}lrrrr rrr@{}}
\toprule
& \multicolumn{4}{c}{Marginal power} & \multicolumn{3}{c}{Capped crossing round}\\
\cmidrule(lr){2-5}\cmidrule(lr){6-8}
Method & $l=1$ & $l=3$ & $l=5$ & $l=10$ & $l=3$ & $l=5$ & $l=10$\\
\midrule
Geometric PEM (primary) & $.7527$  & $.7666$  & $.7898$  & $.8954$  & $9.361$   & $6.309$   & $3.756$\\
Level-uniform PEM       & $.6015$  & $.6439$  & $.7021$  & $.8954$  & $8.638$   & $4.545$   & $1.940$\\
Prior $q=.10$           & $.7436$  & $.7616$  & $.7907$  & $.8925$  & $9.055$   & $5.982$   & $3.447$\\
Prior $q=7/27$          & $.6554$  & $.7215$  & $.7571$  & $.8956$  & $8.311$   & $4.929$   & $2.456$\\
Prior $q=.50$           & $.4526$  & $.5570$  & $.6653$  & $.8958$  & $8.568$   & $4.381$   & $1.920$\\
Arithmetic e-closure    & $.7698$  & $.7849$  & $.8061$  & $.8874$  & $10.509$  & $7.711$   & $5.260$\\
Product closure         & $.0021$  & $.0114$  & $.0553$  & $.8829$  & $18.340$  & $6.594$   & $1.904$\\
$e$-Bonferroni          & $.7221$  & $.7227$  & $.7233$  & $.7224$  & $13.749^\dagger$ & $10.942^\dagger$ & $8.330^\dagger$\\
Bonferroni (3 interim analyses)   & $.7051$  & $.7049$  & $.7059$  & $.7050$  & --        & --        & --\\
Fixed exact Holm        & $.8237$  & $.8232$  & $.8241$  & $.8712$  & --        & --        & --\\
Unadjusted five-of-31   & $.9177$  & $.9165$  & $.9176$  & $.9172$  & --        & --        & --\\
\bottomrule
\end{tabular}
\end{table}

\begin{table}[!tbp]
\centering
\footnotesize
\setlength{\tabcolsep}{2pt}
\caption{All nonprimary NCI scenarios for the primary geometric PEM closure.
Unlisted coordinates are null at $0.05$.  The sparse heterogeneous row has
three active rates $(0.15,0.25,0.40)$.  The dense row has seven rates
$(0.15,0.20,0.25,0.35,0.40,0.25,0.20)$ and prespecified independent accrual
probabilities $(1,.95,.90,.85,.80,.90,.85)$ for them; the last three null
cohorts have accrual probabilities $(1,.95,.90)$.  The crossing metric counts
outer rounds and therefore remains defined under thinning.}
\label{tab:basketnciextra}
\begin{tabular}{@{}llrrrrr@{}}
\toprule
Family & Configuration & Geo. FWER & Marginal power & Disjunctive & Conjunctive & Capped crossing\\
\midrule
Weak $.15$      & $l=1$                  & $.03810$  & $.25266$  & $.25266$  & $.25266$  & $27.320$\\
                & $l=3$                  & $.03569$  & $.26284$  & $.59474$  & $.01953$  & $20.524$\\
                & $l=5$                  & $.03094$  & $.27936$  & $.79387$  & $.00245$  & $15.408$\\
                & $l=10$                 & --        & $.35362$  & $.97419$  & $.00175$  & $8.584$\\
\addlinespace
Strong $.40$    & $l=1$                  & $.04089$  & $.99102$  & $.99102$  & $.99102$  & $9.990$\\
                & $l=3$                  & $.04222$  & $.99203$  & $1.00000$ & $.97626$  & $4.829$\\
                & $l=5$                  & $.04372$  & $.99337$  & $1.00000$ & $.96727$  & $3.498$\\
                & $l=10$                 & --        & $.99847$  & $1.00000$ & $.98487$  & $2.336$\\
\addlinespace
Near-null $.08$ & $l=3$                  & $.02966$  & $.02798$  & $.08153$  & $.00004$  & $29.714$\\
                & $l=10$                 & --        & $.03216$  & $.27243$  & $.00000$  & $25.874$\\
Near-null $.10$ & $l=3$                  & $.03117$  & $.06719$  & $.18679$  & $.00041$  & $27.831$\\
                & $l=10$                 & --        & $.08110$  & $.54913$  & $.00001$  & $19.835$\\
\addlinespace
Heterogeneous   & sparse $(.15,.25,.40)$ & $.03911$  & $.67594$  & $.99860$  & $.20700$  & $7.581$\\
Heterogeneous   & dense, thinned accrual & $.03450$  & $.70911$  & $.99999$  & $.07235$  & $5.148$\\
\bottomrule
\end{tabular}
\end{table}

\begin{figure}[!tbp]
\centering
\includegraphics[width=\linewidth]{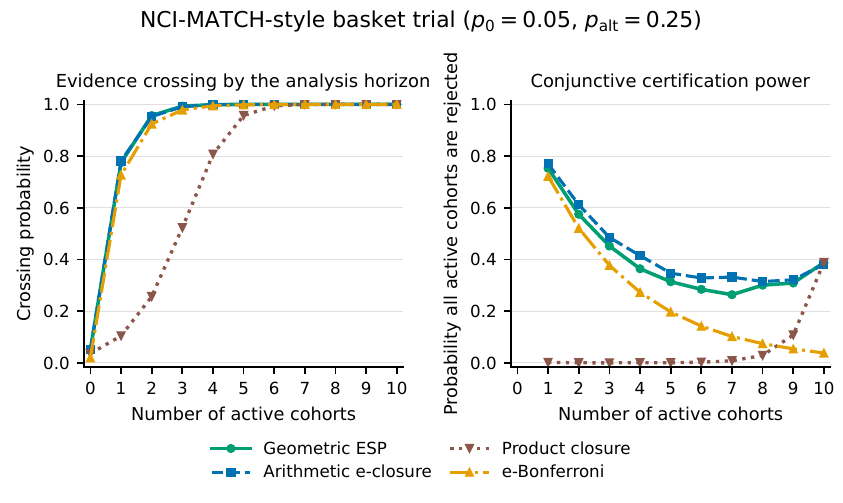}
\caption{Geometric PEM closure interpolates across active counts in the homogeneous
$p=0.25$ NCI family while retaining low global-null crossing probability. For the three
closure methods, the left panel gives the full-intersection crossing probability
by round 31; for $e$-Bonferroni it is the first elementary $K/\alpha$ crossing.
The right panel is conjunctive elementary power and is undefined at the global
null.  Each point uses 200,000 independent paths.  Marginal-power intervals
are not involved in this figure.}
\label{fig:basketnci}
\end{figure}

\subsection{Five-basket planned and realized designs}

The planned- and realized-size grids retain the same robustness pattern across their common
configurations: geometric PEM closure avoids the sparse product failure while retaining higher-order
pooling on dense paths.
Table~\ref{tab:basketvescenarios} defines the shared S0--S9 configurations.
Tables~\ref{tab:basketveplanned} and~\ref{tab:basketverealized} report the planned- and realized-size
results.

\begin{table}[!tbp]
\centering
\footnotesize
\setlength{\tabcolsep}{3pt}
\caption{Response-probability scenarios shared by both five-basket designs.
Cohort caps are $(13,13,13,13,13)$ in the planned design and
$(20,10,8,18,7)$ in the realized design.
Because S1/S6, S3/S7/S8, and S4/S9 are distinct configurations despite
having equal active counts, the figures use scenario labels.}
\label{tab:basketvescenarios}
\begin{tabular}{@{}cll@{}}
\toprule
Scenario & Response probabilities $(p_1,\ldots,p_5)$ & Active cohorts\\
\midrule
S0 & $(.15,.15,.15,.15,.15)$ & none\\
S1 & $(.45,.15,.15,.15,.15)$ & 1\\
S2 & $(.45,.45,.15,.15,.15)$ & 1--2\\
S3 & $(.45,.45,.45,.15,.15)$ & 1--3\\
S4 & $(.45,.45,.45,.45,.15)$ & 1--4\\
S5 & $(.45,.45,.45,.45,.45)$ & 1--5\\
S6 & $(.35,.15,.15,.15,.15)$ & 1\\
S7 & $(.35,.35,.35,.15,.15)$ & 1--3\\
S8 & $(.45,.35,.35,.15,.15)$ & 1--3\\
S9 & $(.45,.45,.35,.35,.15)$ & 1--4\\
\bottomrule
\end{tabular}
\end{table}

\begin{table}[!tbp]
\centering
\footnotesize
\setlength{\tabcolsep}{3.2pt}
\caption{Planned-size five-basket results, 50,000 paths per scenario.
Max partial FWER ranges over S1--S4 and S6--S9; S5 has no true null.
The Maulik--Zhou global-null values reproduce their intended weak calibration,
while their partial-null maxima illustrate that these procedures do not provide strong FWER
control.  $\bar C_{\rm active}$ charges a path with no correct active-cohort
rejection the full 65 outcomes; its Monte Carlo standard error is at most 0.091 outcomes
in the displayed cells.  Terminal methods necessarily equal 65.}
\label{tab:basketveplanned}
\begin{tabular}{@{}lrrrrrr@{}}
\toprule
& Global-null & Max partial & \multicolumn{2}{c}{S3} & \multicolumn{2}{c}{S5}\\
\cmidrule(lr){4-5}\cmidrule(lr){6-7}
Method & FWER & FWER & Marg.\ power & $\bar C_{\rm active}$ & Marg.\ power & $\bar C_{\rm active}$\\
\midrule
Geometric PEM (primary) & $.08860$  & $.08840$  & $.66983$  & $29.523$  & $.79465$  & $20.635$\\
Level-uniform PEM       & $.04158$  & $.07244$  & $.58502$  & $33.147$  & $.79336$  & $21.129$\\
Arithmetic e-closure    & $.09964$  & $.09448$  & $.67890$  & $29.976$  & $.77264$  & $22.474$\\
Product closure         & $.01224$  & $.07118$  & $.29584$  & $46.583$  & $.77706$  & $22.970$\\
$e$-Bonferroni          & $.03630$  & $.02842$  & $.49469$  & $40.636$  & $.49681$  & $33.835$\\
Bonferroni (3 interim analyses)   & $.03624$  & $.02820$  & $.45697$  & $42.745$  & $.45844$  & $36.516$\\
Fixed exact Holm        & $.03774$  & $.03024$  & $.57355$  & $65.000$  & $.66068$  & $65.000$\\
Maulik--Zhou $L(1,1)$   & $.09864$  & $.11410$  & $.77070$  & $65.000$  & $.86336$  & $65.000$\\
Maulik--Zhou $L(1,2)$   & $.09874$  & $.13828$  & $.78673$  & $65.000$  & $.90214$  & $65.000$\\
\bottomrule
\end{tabular}
\end{table}

\begin{table}[!tbp]
\centering
\footnotesize
\setlength{\tabcolsep}{3.2pt}
\caption{Realized-size five-basket results, 50,000 paths per scenario, with
the same definitions as Table~\ref{tab:basketveplanned}.  Unequal caps make
power depend on which cohorts are active, so rows S1--S9 cannot be reduced to
active-count curves.  $\bar C_{\rm active}$ charges a path with no correct
active-cohort rejection the full 63 outcomes; its Monte Carlo standard error is at most
0.085 outcomes in the displayed cells.}
\label{tab:basketverealized}
\begin{tabular}{@{}lrrrrrr@{}}
\toprule
& Global-null & Max partial & \multicolumn{2}{c}{S3} & \multicolumn{2}{c}{S5}\\
\cmidrule(lr){4-5}\cmidrule(lr){6-7}
Method & FWER & FWER & Marg.\ power & $\bar C_{\rm active}$ & Marg.\ power & $\bar C_{\rm active}$\\
\midrule
Geometric PEM (primary) & $.06422$  & $.06440$  & $.60081$  & $30.621$  & $.69896$  & $22.116$\\
Level-uniform PEM       & $.03142$  & $.05900$  & $.53077$  & $34.715$  & $.70748$  & $22.495$\\
Arithmetic e-closure    & $.06964$  & $.06584$  & $.59723$  & $31.474$  & $.68024$  & $24.103$\\
Product closure         & $.01112$  & $.05032$  & $.31283$  & $45.389$  & $.69232$  & $23.522$\\
$e$-Bonferroni          & $.03018$  & $.02178$  & $.44454$  & $39.375$  & $.43747$  & $33.386$\\
Bonferroni (4 interim analyses)    & $.02300$  & $.01796$  & $.41088$  & $43.440$  & $.41334$  & $37.639$\\
Fixed exact Holm        & $.04186$  & $.04954$  & $.60566$  & $63.000$  & $.66237$  & $63.000$\\
Maulik--Zhou $L(1,1)$   & $.09938$  & $.11934$  & $.72742$  & $63.000$  & $.81085$  & $63.000$\\
Maulik--Zhou $L(1,2)$   & $.09966$  & $.14224$  & $.74668$  & $63.000$  & $.83980$  & $63.000$\\
\bottomrule
\end{tabular}
\end{table}

Taken together, the grids show why no single empirical ranking is appropriate.
Arithmetic and $e$-Bonferroni are strong sparse comparators, product is fast in
the all-active corner but can be ineffective under sparse heterogeneity, and
the weak-FWER terminal rules often have higher raw power under their different
guarantee.  The positive evidence for the prespecified geometric PEM closure is
therefore its full-support compromise: it retains FWER control through the design horizon
under the verified independent null profiles, has a higher marginal-power point
estimate than the planned interim analyses in every one of the 42 nonnull configurations, and avoids the product
corner's sparse failure without being selected after the scenario was observed.

Calibration is what carries the marginal-power comparison from $40/42$ to $42/42$.  Under the flat threshold it stood at $40/42$, the two exceptions being the
one-active-cohort NCI configurations at response probabilities $0.25$ and $0.40$, where the geometric
estimates trailed the planned-analysis ones by $0.0018$ and $0.0008$.  Calibration reverses both, and the
reversal is entirely on the geometric side: the planned-analysis rule takes no boundary, so its column is
identical before and after, while the geometric estimates rise from $0.703255$ and $0.987570$ to
$0.752700$ and $0.991015$.

At the elementary-decision level, geometric PEM closure has a higher marginal active-cohort power point estimate than product closure in
all 42 nonnull configurations and a lower capped first-correct-rejection outcome-count point
estimate in all 41 configurations for which that cost is identifiable.  In the saturated NCI
$p=0.40$, $l=10$ configuration, the two marginal-power point estimates differ by only $0.0000060$
($0.9984730$ versus $0.9984670$).  The frozen aggregates do not retain the paired-difference
variance needed to assess this small difference; that row contributes to the descriptive count of
$42$ but does not establish superiority on its own.  Product nevertheless has the lower capped full-intersection crossing round in 17 of the
42 configurations, illustrating that full-intersection speed and exact elementary-closure performance need
not have the same ordering. All of these counts are descriptive point-estimate summaries under the
uncertainty convention above. Figures~\ref{fig:basketveplanned} and~\ref{fig:basketverealized}
display planned- and realized-size power/FWER panels for all S0--S9 scenarios.

\begin{figure}[!tbp]
\centering
\includegraphics[width=\linewidth]{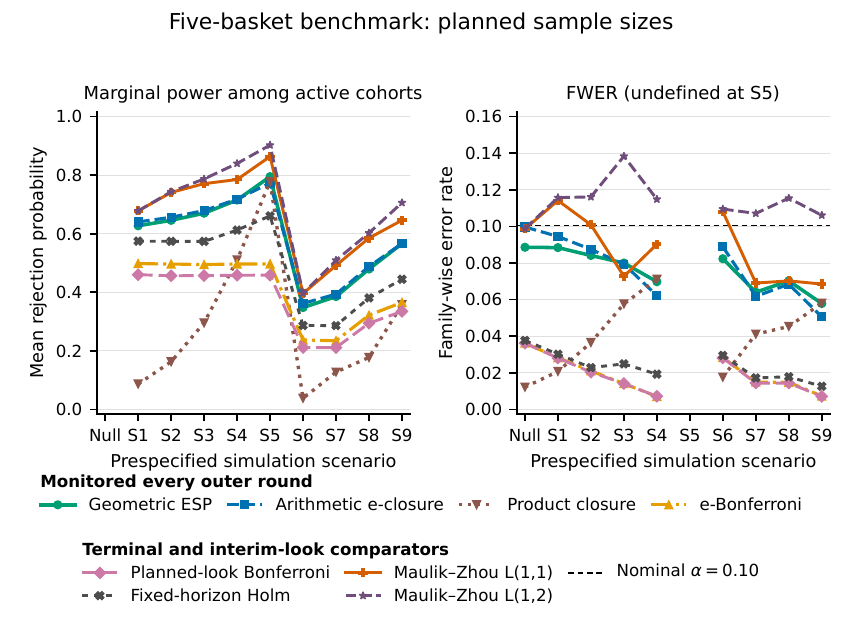}
\caption{Across the planned-size S0--S9 scenarios, geometric PEM closure avoids the product rule's sparse
power loss while its empirical FWER remains below the nominal level.  Product power is lowest
at the sparse scenarios S1 and S6 and recovers at the dense S5.
Marginal power is undefined at S0 and FWER at S5.  The dashed
horizontal line is nominal $\alpha=0.10$.  Maulik--Zhou rules are terminal and
weak-FWER calibrated; the other displayed guarantees are classified in
Table~\ref{tab:basketguarantees}.  Each point uses 50,000 paths.  Marginal
power is shown without a pseudo-independent cohort interval.}
\label{fig:basketveplanned}
\end{figure}

\begin{figure}[!tbp]
\centering
\includegraphics[width=\linewidth]{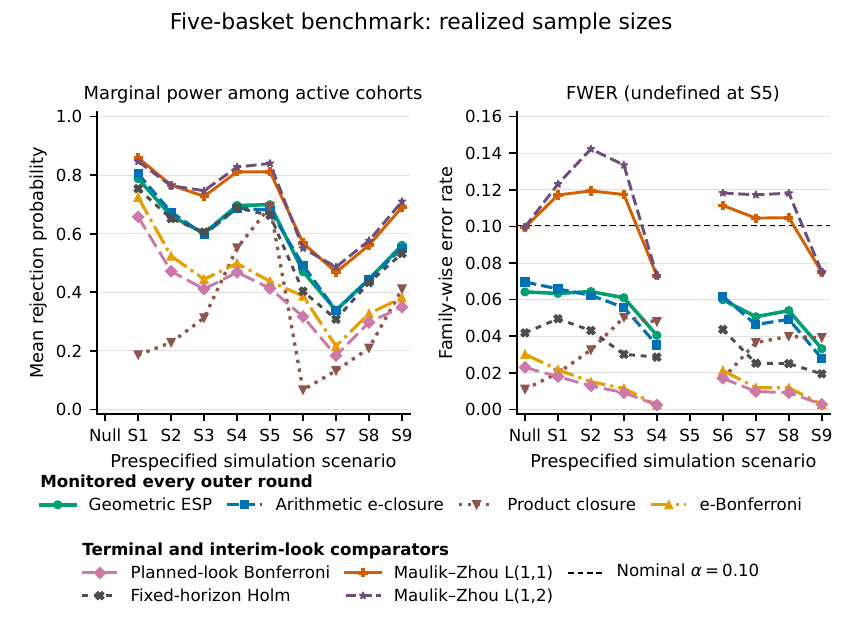}
\caption{The realized-size S0--S9 scenarios retain the same robustness pattern: geometric PEM closure
avoids the product rule's sparse power loss while its empirical FWER remains below nominal
$\alpha=0.10$.  Product power is lowest at the sparse scenarios and recovers at the dense S5.
Marginal power is undefined at S0 and FWER at S5;
the dashed horizontal line is the
nominal level. The categorical axis preserves differences among scenarios with the same active-cohort
count. Maulik--Zhou rules are terminal and weak-FWER calibrated; the other guarantees are classified
in Table~\ref{tab:basketguarantees}. Each point uses 50,000 paths, and marginal power is displayed
without a pseudo-independent cohort interval.}
\label{fig:basketverealized}
\end{figure}

\smallskip\noindent(This appendix is cited in \S\mainnum{sec:apps} of the main article.)
\clearpage

\section{Criteo uplift audit and endpoint results}\label{app:criteoresults}

\subsection{Frozen design and estimand}

Treated and control users are independently ordered within each stratum and paired without
replacement.  For paired binary outcomes $Z_{k,t}=Y^T_{k,t}-Y^C_{k,t}$, with $Y^T_{k,t}$ and
$Y^C_{k,t}$ the outcomes of the treated and control members of the $t$-th pair in stratum $k$, the
prospective model of i.i.d.\ arm samples and the null $\E Z_{k,t}\le0$ make the process with factor
$R_{k,t}=1+\eta Z_{k,t}$, $0<\eta<1$, a marginal supermartingale; the same conclusion holds under
the conditional null $\E[Z_{k,t}\mid\F_{t-1}]\le0$.  Conditional on the frozen design and arm
counts, disjoint strata and independent pairs justify the ESP products.  The ten visit hypotheses
are primary; conversion is a separate secondary family, so no joint 20-hypothesis FWER claim is
made.  The outcome-blind pseudorandom order is reproducible; the prospective interpretation assumes
i.i.d.\ arm samples conditional on the frozen design and arm counts.

The authoritative run is
\auditdetail{Phoenix job 11844386}{a frozen confirmatory cluster run} from
\auditdetail{clean Git commit
\hashid{059c69a2d6e6} (full identity in the artifact manifest)}
{a clean frozen software revision}.
Its twelve-file scientific artifact manifest has SHA-256
prefix \hashid{a32ecb072020}; the retained manifest records the full digest.
After transfer, the committed finalizer rehashed that inventory and semantically recomputed both
endpoint analyses from the paired-score records, including all marginal paths and all 1,023 exact
intersection decisions.  The \auditdetail{repository}{reproducibility archive} retains the compact
reviewed summaries and manifests in a dedicated Criteo audit directory;
the two roughly 100-MB paired-score files
remain outside version control.

No confirmatory outcome enters ranker fitting or cutpoint selection.  The corrected archive has SHA-256 prefix \hashid{2716e1bf0fd1}
(the full digest is retained in the data checksum file)
and contains $13{,}979{,}592$ rows.  Both endpoint configurations use the frozen seed
\hashid{0xcbf85be3\allowbreak{}9e9014b2}, sampled from operating-system entropy before a complete real-data
output was inspected; its decimal representation is recorded in the machine-readable
configurations.  Each endpoint has its own checked-in configuration file, which is the
authoritative machine-readable protocol.  The deterministic split produced
$2{,}795{,}900$ training, $1{,}398{,}060$ calibration, and $9{,}785{,}632$ confirmatory rows.  For
each endpoint, the transformed-outcome ridge ranker uses a ridge constant of 10 and a fixed design
propensity of $0.85$.  It is fitted on a deterministic $500{,}000$-row reservoir from the training
split, and the nine cutpoints defining ten strata are frozen from an analogous $250{,}000$-row
calibration reservoir.  Both reservoirs select the smallest fixed SplitMix64 priority keys based only on the seed, source-row index, and data role.  The observed treatment fraction in the training reservoir is $0.850502$, within the prespecified $0.01$ tolerance.  Neither reservoir is the entire corresponding split.

\subsection{Audit and endpoint results}

The largest observed contrasts are concentrated in the highest-score strata. These hold
$8.942\%$ and $9.233\%$ of the visit and conversion pairs but $74.005\%$ and $78.316\%$ of their net
paired-score totals; the overall conversion difference is $0.09709$ percentage points, and
geometric's last conversion rejection is within $0.7\%$ of the fastest exact closure.

Each endpoint yielded $1{,}467{,}725$ outcome-blind matched pairs; the larger treatment arm left
$6{,}850{,}182$ confirmatory users unmatched.  Endpoint-specific rankers mean that the same numeric
score-stratum label need not contain the same users for visit and conversion.  Independent
recomputation from the paired records reproduced the score algebra, marginal summaries, all four
global paths and all $1{,}023$ intersection decisions for every endpoint, mode and merger exactly.

Table~\ref{tab:criteostrata} reports all endpoint-specific stratum effects and primary geometric
rejection times.

\begin{table}[!tbp]
\centering
\small
\setlength{\tabcolsep}{4pt}
\caption{Audited endpoint-specific paired contrasts in the confirmatory Criteo holdout.  The quantity
$n_k$ is the number of matched pairs and $100\bar Z_k$ their treated-minus-control difference in
percentage points. The time $\tau^{\rm cl}_{k,\rm geo}$ is the geometric elementary rejection round.
Stratum 9 is the highest within-endpoint learned-score stratum.  Rejection times use exact
geometric PEM closure under the i.i.d.\ paired-score interpretation at $\alpha=0.05$ and $\eta=0.25$; a dash
denotes no rejection by the endpoint-specific terminal horizon.  The final row is descriptive,
not another elementary hypothesis.}
\label{tab:criteostrata}
\begin{tabular}{@{}lrrr@{\qquad}rrr@{}}
\toprule
& \multicolumn{3}{c}{Visit (primary)}
& \multicolumn{3}{c}{Conversion (secondary)}\\
\cmidrule(lr){2-4}\cmidrule(lr){5-7}
Score stratum & $n_k$ & $100\bar Z_k$ & $\tau^{\rm cl}_{k,\rm geo}$
              & $n_k$ & $100\bar Z_k$ & $\tau^{\rm cl}_{k,\rm geo}$\\
\midrule
0 & $149{,}350$ & $+0.10177$ & -- & $149{,}969$ & $+0.00533$ & --\\
1 & $148{,}824$ & $+0.01680$ & -- & $150{,}120$ & $+0.01599$ & --\\
2 & $150{,}589$ & $+0.02324$ & -- & $148{,}594$ & $+0.00942$ & --\\
3 & $149{,}810$ & $+0.03738$ & -- & $149{,}898$ & $+0.00133$ & --\\
4 & $147{,}293$ & $+0.06789$ & -- & $147{,}380$ & $+0.00475$ & --\\
5 & $146{,}972$ & $+0.13336$ & -- & $147{,}575$ & $+0.02101$ & $129{,}487$\\
6 & $149{,}421$ & $+0.20278$ & -- & $148{,}020$ & $+0.01486$ & --\\
7 & $148{,}536$ & $+0.42414$ & -- & $148{,}843$ & $+0.02956$ & $125{,}842$\\
8 & $145{,}683$ & $+0.75232$ & -- & $141{,}810$ & $+0.11071$ & $53{,}917$\\
9 & $131{,}247$ & $+5.62451$ & $3{,}441$ & $135{,}516$ & $+0.82352$ & $6{,}774$\\
\midrule
All matched pairs & $1{,}467{,}725$ & $+0.67962$ & --
                  & $1{,}467{,}725$ & $+0.09709$ & --\\
\bottomrule
\end{tabular}
\end{table}

The visit aggregate contains 66,023 treated and 56,048 control events (4.49832\% and 3.81870\%);
conversion contains 4,263 and 2,838 (0.29045\% and 0.19336\%).  These are matched-sample summaries
of the released, nonuniformly subsampled benchmark.  Their causal or original-advertiser-population
interpretation requires the stated randomization, i.i.d.\ sampling, and transport assumptions.

Table~\ref{tab:criteomethods} separates full-intersection crossings from exact elementary rejections
for every merger and the marginal anytime baseline.

\begin{table}[!tbp]
\centering
\footnotesize
\setlength{\tabcolsep}{3pt}
\caption{Canonical Criteo results under the i.i.d.\ paired-score interpretation.  The four mergers use
the ten-stratum global intersection and literal exact closure.  The $\dagger$ entries instead give
the first $e$-Bonferroni marginal crossing, not an ESP full-intersection process; the rejected count and last time
in those rows likewise refer to marginal $K/\alpha$ rejections rather than closed-testing decisions.
One round processes one matched pair from every nonexhausted stratum; a dash denotes noncrossing on
the complete replay.}
\label{tab:criteomethods}
\begin{tabular}{@{}llrrr@{}}
\toprule
Endpoint & Procedure & Global/first crossing & Rejected & Last rejection\\
\midrule
Visit (primary) & Geometric PEM & $3{,}436$ & $1/10$ & $3{,}441$\\
                & Level-uniform PEM & $745$ & $1/10$ & $10{,}990$\\
                & Arithmetic mean & $3{,}441$ & $1/10$ & $3{,}441$\\
                & Product & -- & $0/10$ & --\\
                & $e$-Bonferroni & $3{,}441^\dagger$ & $1/10$ & $3{,}441$\\
\addlinespace
Conversion (secondary) & Geometric PEM & $6{,}754$ & $4/10$ & $129{,}487$\\
                & Level-uniform PEM & $1{,}925$ & $4/10$ & $128{,}619$\\
                & Arithmetic mean & $6{,}776$ & $4/10$ & $145{,}065$\\
                & Product & $1{,}849$ & $4/10$ & $128{,}619$\\
                & $e$-Bonferroni & $6{,}892^\dagger$ & $2/10$ & $62{,}984$\\
\bottomrule
\end{tabular}
\end{table}

At the terminal horizon, the separately labeled one-sided paired-sign tests with Holm
correction~\citep{holm1979}
reject 7/10 visit strata and 4/10 conversion strata under their i.i.d.\ paired-sign sampling model.
This fixed-horizon comparator is not anytime valid, would not test the empirical-total null under a
finite-population interpretation, and is not used for the primary claims.  That it rejects seven
visit strata, while the anytime procedures reject one, is a further reason not to read the replay as
identifying a single active stratum.

\paragraph{Empirical product growth.} The product's behavior can be diagnosed directly
from its fixed betting fraction $\eta=0.25$. If $\widehat p_k^+$ and $\widehat p_k^-$ denote the
observed fractions of positive and negative discordant pairs, its average log increment in
stratum $k$ is
\[
\widehat p_k^+\log(1+\eta)+\widehat p_k^-\log(1-\eta).
\]
It is positive only when $\widehat p_k^+/\widehat p_k^-$ exceeds
$|\log(1-\eta)|/\log(1+\eta)\approx1.2892$. Every stratum has more treated than control wins,
yet eight fall below this log-growth threshold. The ten empirical average log increments sum to
$-0.004700$, diagnosing the product's poor growth on the observed data. These empirical
frequencies are not the unknown population probabilities, and the statistic is a betting process,
so the sparse-configuration likelihood-ratio theorem does not explain this replay directly.

\paragraph{The observed singleton gap.} The highest-score visit stratum's own marginal process
crosses $1/\alpha=20$ at round $2{,}943$.  Geometric closure, arithmetic closure and $e$-Bonferroni
all reject the stratum at round $3{,}441$, a gap of $498$ rounds from this marginal benchmark.
Each closure must wait until all $512$ intersections containing the stratum have been marked;
$e$-Bonferroni instead waits for the marginal threshold $K/\alpha=200$.  Their equality is an
observed property of this replay, not a consequence of \mainref{Lemma}{lem:ceiling}.  That lemma is
a distributional envelope under the likelihood-ratio product model; it does not make this
particular betting process's observed crossing a pathwise lower bound for arbitrary procedures.

Figure~\ref{fig:certification} collects the decision-level results from the two exact-closure
audits.  It deliberately separates full-intersection crossings from elementary rejections: an early global
crossing does not by itself support any individual claim.

\begin{figure}[!tbp]
\centering
\includegraphics[width=\linewidth]{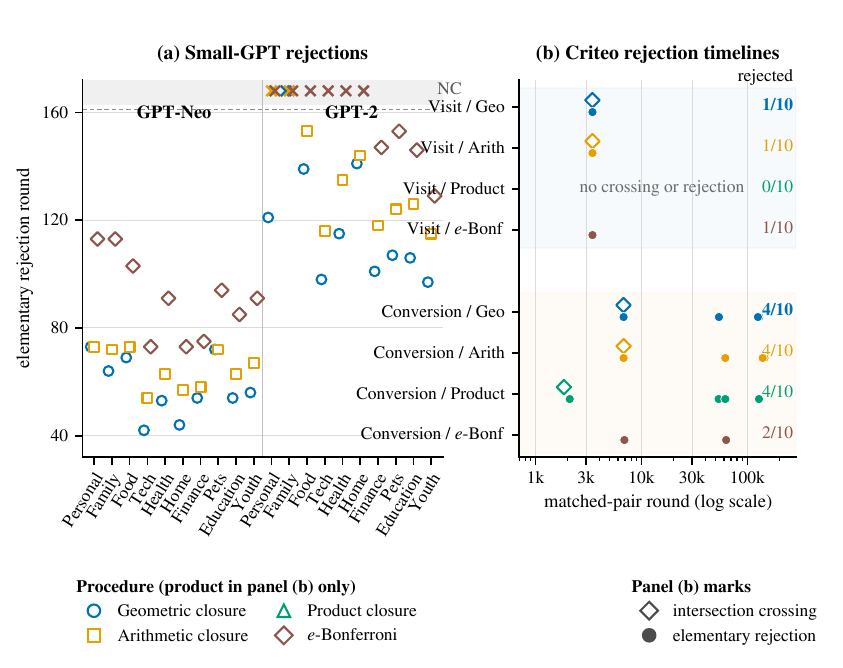}
\caption{Exact elementary rejections in the small-GPT and Criteo audits.  Panel (a) gives every
stratum-level rejection round for geometric PEM closure, arithmetic-mean closure, and $e$-Bonferroni;
crosses in the shaded strip denote nonrejection by the common round-161 horizon.  Geometric is no
later in all 20 cells, strictly earlier than arithmetic in 17 of them, and strictly earlier than
$e$-Bonferroni in 19.  Panel (b) reports the Criteo matched-pair timelines on a log scale.  Open
diamonds are full-intersection crossings and filled circles are elementary rejections; right-margin
labels give the rejected count out of ten, and geometric's last conversion rejection comes $10.7\%$
earlier in outer rounds than arithmetic's.  Conversion is a separate secondary family.}
\label{fig:certification}
\end{figure}

\smallskip\noindent(This appendix is cited in \S\mainnum{sec:apps} of the main article.)

\ifdefined\AVBLIND
\else
\FloatBarrier
\section*{Administrative statements}

\paragraph{Competing interests.} The authors declare no competing interests.

\paragraph{Use of generative AI and AI-assisted technologies.} We used generative AI tools
(ChatGPT, Google Gemini, and Claude) for language editing and code formatting support only.  All
data, results and mathematical derivations are the authors' own work.

\paragraph{Data and code availability.} The benchmark data, advertising data, and model checkpoints
used in the applications are available from the public sources cited in the article. Model and
benchmark revisions and applicable licenses, together with cryptographic identities for frozen
assets, are recorded in this supplement. Analysis code, frozen configurations, and compact audit
records are maintained in a version-controlled development repository. A snapshot archive will
accompany submission for review, and the archival release location will be recorded here once it
is chosen. A versioned reproducibility
snapshot will be deposited in a DOI-bearing repository no later than acceptance, and the article
record will be updated with that DOI\@. Large upstream datasets and model weights are not
redistributed.
\fi
\fi 

\ifavbibdone\else
\bibliographystyle{imsart-nameyear}
\IfFileExists{references.bib}{\bibliography{references}}{\bibliography{manuscript/references}}
\fi

\end{document}